\newtheorem{theorem}{Theorem}
\newtheorem{claim}[theorem]{Claim}
\newtheorem{corollary}[theorem]{Corollary}
\newtheorem{lemma}[theorem]{Lemma}
\newtheorem{proposition}[theorem]{Proposition}
\newtheorem{fact}[theorem]{Fact}
\newcommand{\Sk}{{\mathcal{S}}}
\newcommand{\clique}[1]{C_{#1}}
\crefname{observation}{Observation}{Observations}
\crefname{algocfline}{Line}{Lines}
\newcommand{\lo}{$\ell_0$ Best-Fit Ultrametrics}
\newcommand{\set}[1]{\{#1\}}
\newcommand{\calC}{\mathcal{C}}
\titlespacing*{\subparagraph}{\parindent}{*0.5}{*0.5}
\date{}
\newtheorem{definition}{Definition}[section]
\title{\Large Fitting Tree Metrics and Ultrametrics in Data Streams}
\author{
Amir Carmel\thanks{Pennsylvania State University, United States. Part of this work was done while the author was affiliated at Weizmann Institute of Science, Israel. \texttt{amir6423@gmail.com}}
\qquad
Debarati Das\thanks{Pennsylvania State University, United States. \texttt{debaratix710@gmail.com}} 
\qquad
Evangelos Kipouridis \thanks{Max Planck Institute for Informatics, Saarland Informatics Campus, Saarbr{\"u}cken, Germany. \texttt{kipouridis@mpi-inf.mpg.de}}
\qquad
Evangelos Pipis\thanks{National Technical University of Athens, Greece, and Max Planck Institute for Informatics, Saarland Informatics Campus, Saarbr{\"u}cken, Germany. \texttt{evpipis@gmail.com}}
}
\begin{document}

\maketitle

\renewcommand{\thefootnote}{}
\footnotetext{Funding: This work is supported by NSF grant 2337832.}
\renewcommand{\thefootnote}{\arabic{footnote}}

\vspace{-1cm}

\begin{abstract}

    Fitting distances to tree metrics and ultrametrics are two widely used methods in hierarchical clustering, primarily explored within the context of numerical taxonomy. Formally, given a positive distance function $ D: \binom{V}{2} \rightarrow \mathbb{R}_{>0} $, the goal is to find a tree (or an ultrametric) $ T $ including all elements of set $ V $, such that the difference between the distances among vertices in $ T $ and those specified by $ D $ is minimized. Numerical taxonomy was first introduced by Sneath and Sokal [Nature 1962], and since then it has been studied extensively in both biology and computer science.

    In this paper, we initiate the study of ultrametric and tree metric fitting problems in the semi-streaming model, where the distances between pairs of elements from $ V $ (with $|V|=n$), defined by the function $ D $, can arrive in an arbitrary order. We study these problems under various distance norms; namely the $\ell_0$ objective, which aims to minimize the number of modified entries in $ D $ to fit a tree-metric or an ultrametric; the $\ell_1$ objective, which seeks to minimize the total sum of distance errors across all pairs of points in $ V $; and the $\ell_\infty$ objective, which focuses on minimizing the maximum error incurred by any entries in $ D $.

\begin{itemize}
    \item Our first result addresses the $\ell_0$ objective. We provide a single-pass polynomial-time $\tilde{O}(n)$-space $O(1)$ approximation algorithm for ultrametrics and prove that no single-pass exact algorithm exists, even with exponential time.
    
    \item Next, we show that the algorithm for $\ell_0$ implies an $O(\Delta/\delta)$ approximation for the $\ell_1$ objective, where $\Delta$ is the maximum, and $\delta$ is the minimum absolute difference between distances in the input. This bound matches the best-known approximation for the RAM model using a combinatorial algorithm when $\Delta/\delta = O(n)$.
    
    \item For the $\ell_\infty$ objective, we provide a complete characterization of the ultrametric fitting problem. First, we present a single-pass polynomial-time $\tilde{O}(n)$-space 2-approximation algorithm and show that no better than 2-approximation is possible, even with exponential time. Furthermore, we show that with an additional pass, it is possible to achieve a polynomial-time exact algorithm for ultrametrics.
    
    \item Finally, we extend all these results to tree metrics by using only one additional pass through the stream and without asymptotically increasing the approximation factor.

\end{itemize}

\end{abstract}

\newpage
\pagenumbering{arabic}
\newcommand{\abs}[1]{\left|#1\right|}
\newcommand{\agreements}[1]{A(#1)}
\newcommand{\opttree}{\mathcal{T}_{ALG}}
\newcommand{\close}{\texttt{close}}

\section{Introduction} 
Hierarchical clustering is a method of cluster analysis that builds a hierarchy of clusters by starting with each data point as its own cluster and successively merging the two closest clusters until all points are merged into a single cluster or a stopping criterion is met. This method involves creating a \emph{dendrogram}, a tree-like diagram, that records the sequence of merges or splits, allowing for easy visualization and interpretation of the hierarchical structure. Hierarchical clustering uses various distance metrics (e.g., Euclidean, Manhattan, cosine) and linkage criteria (e.g., single, complete, average, Ward’s method), providing flexibility to tailor the analysis to specific data characteristics and clustering goals. It is versatile across different types of data, including numerical, categorical, and binary data, and has become the preferred method for analyzing gene expression data~\cite{D05} and constructing phylogenetic trees~\cite{charikar, R1}. Consequently, hierarchical clustering plays a significant role in both theory and practice across various domains, such as image processing to group similar regions within images~\cite{1395986}, social network analysis to identify communities within a network~\cite{BREIGER1975328}, and business and marketing to segment customers based on behavior, preferences, or purchasing patterns~\cite{KUMAR2020126}.

Tree metrics and ultrametrics are fundamental measures used in hierarchical clustering, where the distance between any two points is defined by the cost of the unique path connecting them in a tree-like structure. Formally, given a distance function \( D: \binom{V}{2} \rightarrow \mathbb{R}_{>0} \), the goal is to find a tree \( T \) with positive edge weights, encompassing all elements of set \( V \) as vertices. This tree \( T \) should best match the distances specified by \( D \). In the case of ultrametrics, the tree must be rooted, and all elements of $V$ must appear as leaf nodes at the same depth.

The task of fitting distances with tree metrics and ultrametrics, often referred to as the numerical taxonomy problem, has been a subject of interest since the 1960s~\cite{4-cavalli67, sneath1962numerical, 7-Sneath-Numerical-1963}. One of the pioneering works in this area was presented by Cavalli-Sforza and Edwards in 1967~\cite{4-cavalli67}. Different formulations of the optimal fit for a given distance function \( D \) lead to various objectives, such as minimizing the number of disagreements (using the \( \ell_0 \) norm of the error vector), minimizing the sum of differences (using the \( \ell_1 \) norm), and minimizing the maximum error (using the \( \ell_\infty \) norm).

Deploying hierarchical clustering (HC) algorithms faces significant challenges due to scalability issues, particularly with the rise of data-intensive applications and evolving datasets. As data volumes continue to grow, there is an urgent need for efficient algorithms tailored for large-scale models such as streaming, local algorithms, MPC, and dynamic models, given the large input sizes relative to available resources. In this work we study hierarchical clustering, focusing on tree metrics and ultrametrics in the semi-streaming model.
The model supports incremental updates, keeping the information about the clusters current without the need to reprocess the entire dataset.
This adaptability makes hierarchical clustering highly valuable for applications such as network monitoring and social media analysis, where real-time insights are essential~\cite{rodrigues2008hierarchical, lee2014incremental,luhr2009incremental}.

A recent result~\cite{AssadiCLMW22} studied hierarchical clustering (HC) in the graph streaming model, providing a polynomial-time, single-pass \(\tilde{O}(n)\) space algorithm that achieves an \(O(\sqrt{\log n})\) approximation for HC. When space is more limited, specifically \(n^{1-o(1)}\), the authors show that no algorithm can estimate the value of the optimal hierarchical tree within an \(o(\log n \log \log n)\) factor, even with \(poly\log n\) passes over the input and exponential time. 

A special case of the ultrametric fitting problem is where the tree depth is two, known as the \emph{Correlation Clustering problem}. 
In this problem given a complete graph $G=(V, E)$ with edges labeled either similar (0) or dissimilar (1), the objective is to partition the vertices $V$ into clusters to minimize the disagreements. 
After a decade of extensive research on correlation clustering in the semi-streaming setting~\cite{ChierichettiDK14, ahn2021correlation, cohen2021correlation, Assadi022, BehnezhadCMT22, Cohen-AddadLMP22, BehnezhadCMT23, Cohen-AddadLPTY24},
a recent breakthrough in~\cite{Cohen-AddadLPTY24} introduces
a single-pass algorithm that achieves a $ 1.847$ approximation using \(\tilde{O}(n)\) space.
This directly improves upon two independent works~\cite{CambusKLPU24, chakrabarty2023single}, both presenting single-pass algorithms achieving a $(3+\varepsilon)$-approximation using \(O(n/\varepsilon)\) space.
 
However, our understanding of streaming algorithms for larger depths, particularly within the context of ultrametrics and tree metrics, is very limited.
The challenge arises from the fact that, unlike correlation clustering, which deals with only two distinct input distances, this problem may involve up to \(n^2\) different distances, especially in a highly noisy input.
Although the optimal output tree can be defined using at most $n$ of these $n^2$ distances, identifying these $n$ distances is non-trivial. As a result, in the worst case, it may be necessary to store all observed input distances, which would require quadratic space if done naively.
Additionally, the hierarchical nature of clusters at various tree depths introduces inherent dependencies among clusters at different levels. This complexity makes it highly challenging to adapt the ideas used in streaming algorithms for correlation clustering to ultrametrics and tree metrics.
In this paper, we offer the first theoretical guarantees by providing several algorithmic results for fitting distances using ultrametrics and tree metrics in the semi-streaming setting under various distance norms.

\subsection{Other Related Works}\label{section:otherrealted}

\paragraph{Ultrametrics and Tree metrics.} The numerical taxonomy problem, which involves fitting distances with tree metrics and ultrametrics, was first introduced by Cavalli-Sforza and Edwards in 1967~\cite{4-cavalli67}. Day demonstrated that this problem is NP-hard for both $\ell_1$ and $\ell_2$ norms in the context of tree metrics and ultrametrics~\cite{day}. Moreover, these problems are APX-hard~\cite{apxCorrClust}, as inferred from the APX-hardness of Correlation Clustering, which rules out the possibility of a polynomial-time approximation scheme. On the algorithmic side, Harp, Kannan, and McGregor~\cite{mcgregor} developed an $O(\min\{n, k \log{n}\}^{1/p})$ approximation for the $\ell_p$ objective in the ultrametric fitting problem, where $k$ is the number of distinct distances in the input. Ailon and Charikar~\cite{charikar} improved this to an $O(((\log{n})(\log\log{n}))^{1/p})$ approximation, which they extended to the tree metric using a reduction from Agarwala~\cite{agarwala}.
A recent breakthrough in~\cite{debarati} presented the first constant-factor approximation for the $\ell_1$ objective for both ultrametric and tree metric. 

The $\ell_0$ objective was first investigated in~\cite{cohen2022fitting}, which developed a novel constant-factor approximation algorithm.
Charikar and Gao~\cite{CharikarG24} improved the approximation guarantee to 5. For the weighted ultrametric violation distance, where the weights satisfy the triangle inequality, they provided an $O(\min\{L, \log n\})$ approximation, with $L$ being the number of distinct values in the input. Kipouridis~\cite{kipouridis2023fitting} further extended these results to tree metrics.

Research into the $\ell_\infty$ numerical taxonomy began in the early 1990s.
It was discovered by several authors that the $\ell_\infty$ case of the ultrametric fitting problem is solvable in polynomial time (in fact linear time in the input) and it is the only case with that property, whereas the problem of $\ell_\infty$ tree fitting is APX-hard~\cite{kvrivanek1988complexity, chepoi2000approximation, farach1993robust, agarwala, wareham1992computational}. Since then, the 
$\ell_\infty$ Best-Fit Ultrametrics/Tree-Metrics problems were extensively studied from both mathematical and computational perspectives~\cite{chepoi2000approximation, bernstein2017infinity, bernstein2020infinity, ardila2005subdominant, dress2005delta, ma1999fitting,ca20,CAVL21}.

\paragraph{Correlation Clustering.} The classic correlation clustering problem, introduced by Bansal, Blum, and Chawla~\cite{DBLP:conf/focs/BansalBC02}, can be visualized as a special case of ultrametrics where the tree's depth is bounded by two. Correlation clustering serves as a fundamental building block for constructing ultrametrics and tree metrics. Despite being APX-hard~\cite{apxCorrClust}, extensive research~\cite{DBLP:conf/focs/BansalBC02, apxCorrClust, ChawlaMSY15, Cohen-AddadLN22,Cohen-AddadL0N23} has aimed at finding efficient approximation algorithms, with the latest being a 1.437-approximation~\cite{DBLP:conf/stoc/CaoCL0NV24}. Correlation clustering also boasts a rich body of literature and has been extensively studied across various models designed for large datasets, including streaming~\cite{ahn2021correlation, Assadi022, BehnezhadCMT22,Cohen-AddadLN22}, MPC~\cite{cohen2021correlation}, MapReduce~\cite{ChierichettiDK14}, and dynamic models~\cite{BehnezhadDHSS19,abs-2404-06797,abs-2402-15668}.

\paragraph{Metric Violation Distance.} Another counterpart of the ultrametric violation distance problem is the metric violation distance problem, which requires embedding an arbitrary distance matrix into a metric space while minimizing the $\ell_0$ objective. While only a hardness of approximation of 2 is known, ~\cite{GilbertJ17, FanRB18,GilbertGRRW20} provided algorithms with an approximation ratio of $O(OPT^{1/3})$.
An exponential improvement in the approximation guarantee to \(O(\log n)\) was achieved in~\cite{cohen2022fitting}. The maximization version of this problem is also well-motivated by its application in designing metric filtration schemes for analyzing chromosome structures, as studied in~\cite{chromosomeCorrClust}.

\subsection{Our Contributions}

In this work, we examine the problem of fitting tree metrics and ultrametrics in the semi-streaming model, focusing on the \(\ell_0\) and \(\ell_\infty\) objectives. 
Note that storing the tree alone requires \(\Omega(n)\) word space. Since we are working within the semi-streaming model, where \(\tilde{O}(n)\) space is permitted, this is a natural consideration.
Our results apply to the most general semi-streaming settings where the entries of the input distance matrix, of size \(n^2\), arrive one by one in some arbitrary order, possibly adversarially.  Notably, all our algorithms require either one or two passes over the data while achieving constant factor approximations in polynomial time. Before discussing the key contributions of this work, we provide a formal definition of the problem.

\noindent{\bf Problem:} $\ell_p$ Best-Fit Ultrametrics/Tree-Metrics \smallskip\\
\noindent{\bf Input:} A set $V$ 
and a distance matrix \( D: \binom{V}{2} \rightarrow \mathbb{R}_{>0} \).
\smallskip\\
\noindent{\bf Desired Output:} An ultrametric (resp. tree metric) $T$ that spans $V$ and 
fits $D$ in the sense of minimizing:
\begin{equation*}
\|T-D\|_p=\sqrt[p]{\sum_{uv\in \binom{V}{2}} |T(uv)-D(uv)|^p}
\end{equation*}

\vspace{-2mm}
For \(p=0\), the aim is to minimize the total number of errors. In other words, each pair comes with a request regarding their distance in the output tree, and our goal is to construct a tree that satisfies as many of these requests as possible, minimizing the total number of pairs whose distances are altered. This fundamental problem for ultrametrics, also known as the \emph{Ultrametric violation distance} problem, was first investigated in~\cite{cohen2022fitting}, in which a novel constant-factor approximation algorithm in the RAM model was developed.
Charikar and Gao~\cite{CharikarG24} further improved the approximation guarantee to 5. 

We present for this problem a single-pass algorithm, in the semi-streaming setting, that provides a constant approximation and succeeds with high probability.
We remark that straightforwardly adapting this algorithm in the RAM model yields a near-linear time algorithm ($\widetilde{O}(n^2)$, while the input size is $\Theta(n^2)$), improving over the best known $\Omega(n^4)$ time from \cite{cohen2022fitting}\footnote{In \cite{cohen2022fitting} the exact running time of the algorithm has not been analyzed, but there exist inputs where it must perform $\Omega(n^2)$ repetitions of a flat-clustering algorithm that takes $\Omega(n^2)$ time per repetition.}.

\begin{restatable}{theorem}{lzerofit}\label{theorem:lzerofit}
There exists a single pass polynomial time semi-streaming algorithm that
w.h.p. $O(1)$-approximates the $\ell_0$ Best-Fit Ultrametrics problem.
\end{restatable}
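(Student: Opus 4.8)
The plan is to take a constant-factor RAM algorithm for $\ell_0$ ultrametric fitting — such as the recursive/LP-based schemes behind \cite{cohen2022fitting,CharikarG24} — and recast it as a \emph{top-down laminar decomposition}: on a current vertex set $S$ we pick a ``pivot'' $p\in S$, partition $S$ into the classes $\{w\in S: D(p,w)=\mu\}$ over the distinct values $\mu$ appearing in the row $D(p,\cdot)|_S$, create an internal node separating each pair of classes at the appropriate value, and recurse inside each class; over a random pivot this has expected cost $O(\mathrm{OPT})$, which I would establish by the standard charging of ultrametric-violating triples, in the style of pivot-based correlation clustering. Two preliminary reductions come for free. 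If $\mathrm{OPT}\ge\varepsilon\binom n2$ then any ultrametric is an $\varepsilon^{-1}$-approximation, so we may assume $\mathrm{OPT}$ is small and hence $D$ is ``structurally rigid'' (close to a genuine ultrametric). And the distinct distance values can be \emph{coarsened} so that the algorithm only ever needs to track $\tilde O(n)$ of them, using the fact that a subset $S$ on which the $O(|S|)$ most popular distances cover less than a $(1-\gamma)$-fraction of its pairs already has $\mathrm{OPT}$ restricted to it at least $\gamma\binom{|S|}2$ (an ultrametric on $|S|$ points has fewer than $|S|$ internal nodes and, at each, can satisfy only pairs carrying that node's value).

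To run this in a single pass I would fix, before the pass begins, a uniformly random priority order $\pi$ on $V$ that is shared by \emph{all} levels of the recursion, always use the $\pi$-minimum vertex of $S$ as its pivot (so the output tree is a deterministic function of $\pi$ and $D$), and during the pass maintain per-vertex sketches of total size $\tilde O(n)$: the full rows of the $\tilde O(1)$ highest-priority vertices — enough to perform every split whose pivot is high-priority — together with, for every vertex $v$, a succinct sketch of $D(v,\cdot)$ recording its order statistics at a logarithmic number of rank scales and its nearest neighbours, which is enough to reconstruct the splits near the leaves, where the distances that matter for $v$ are the smallest ones incident to it. After the pass I rebuild the laminar family top-down from the sketches and output it, charging its cost against the RAM scheme's $O(\mathrm{OPT})$ guarantee; all updates and the reconstruction run in polynomial time, and the failure probability is driven to $\mathrm{poly}(1/n)$ by a union bound over the $O(n)$ recursion nodes.

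The main obstacle I expect is reconciling the single pass with the potentially large \emph{depth} of the recursion: the output tree may be forced to be caterpillar-like (so one cannot flatten it to $O(\log n)$ levels, and deep parts need not have large $\mathrm{OPT}$), and the pivots used deep in the recursion are not known until the pass ends, so one cannot simply store their rows; moreover, uniform per-vertex sampling does not see the internal structure of moderate-size subproblems (a subproblem on $s$ vertices is touched by only $\approx s\log n$ sampled distances, of which only $\approx s^2\log n/n$ lie inside it). Resolving this is where the real work lies: I would aim to show that, with the sketch granularities chosen correctly, every recursion node's split is determined either by a high-priority pivot (captured exactly by the stored top rows), or by a subproblem so small that the nearest-neighbour sketches already contain all of its intra-distances, or by a subproblem on which $\mathrm{OPT}$ is already a constant fraction of its pairs (so any completion there is fine); making this trichotomy airtight — in particular pinning down the right sketch resolution and the right coarsening threshold so that no intermediate scale falls through the cracks, and verifying the charging still goes through — is the crux, and is what I expect to occupy the bulk of the proof.
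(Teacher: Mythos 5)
There is a genuine gap, and it sits at the very foundation of your plan. You assume that the random-pivot recursive partition (split $S$ by the pivot's row values, recurse in each class) has expected cost $O(\mathrm{OPT})$ for the $\ell_0$ objective, to be proved ``by the standard charging of ultrametric-violating triples, in the style of pivot-based correlation clustering.'' That charging gives a constant factor only for a single flat level (correlation clustering); when the pivot recursion is run across the many levels of an ultrametric, the per-level errors compound and the known combinatorial analysis yields only $\Omega(\log n)$ for $\ell_0$ (and $O(\Delta/\delta)$-type bounds for $\ell_1$). The existing constant-factor results for $\ell_0$ ultrametric fitting are not pivot-based in your sense: they either round a large LP (Charikar--Gao) or use an Agreement Correlation Clustering division strategy with strong structural guarantees (important groups are preserved, clusters are everywhere dense). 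So the RAM algorithm you propose to simulate is not known to be, and is not believed to be, an $O(1)$-approximation; no amount of streaming machinery on top of it can rescue the approximation factor.

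The streaming half also has an unclosed hole that you yourself flag as ``the crux'': the trichotomy (high-priority pivot captured by stored rows / subproblem small enough for nearest-neighbour sketches / subproblem whose local $\mathrm{OPT}$ is a constant fraction of its pairs) is not established, and there is real reason to doubt it --- a moderate-size recursion node whose pivot has low priority and whose restricted instance is nearly a perfect ultrametric falls into none of the three cases, yet its split must still be recovered, and uniform per-vertex sampling sees only $\approx s^2\log n/n$ of its internal distances. The paper's route avoids both problems: it never uses pivots, instead computing an $S$-Structural Clustering (a generalization of Agreement Correlation Clustering that also accounts for edges leaving $S$) from per-vertex neighborhood sketches built at $O(\log n)$ geometric size scales, together with a degree-based compressed set of $\widetilde O(n)$ weights; and it controls adaptivity by a special peeling step for any cluster containing more than a $0.99$ fraction of $S$ (removing low-degree vertices via degree estimates rather than re-clustering), which bounds the recursion depth per vertex by $O(\log n)$ so that $O(\log n)$ independent sketch copies suffice. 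If you want to salvage your proposal, you would need to either prove the constant-factor claim for multi-level pivoting (contradicting what is currently known) or switch the division strategy to an agreement-based one and then confront the across-level and within-level correlation issues that the paper's sketch-reuse discipline and bounded recursion depth are specifically designed to handle.
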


Following, we show that this result also implies an approximation for the $\ell_1$ objective.

\begin{corollary} \label{cor:l1ApxUltra}
Let $\delta$ (resp. $\Delta$) be the smallest (resp. largest) absolute difference between distinct distances in $D$, for an $\ell_1$ Best-Fit Ultrametrics instance. There exists a single pass polynomial time semi-streaming algorithm that w.h.p $O(\Delta/\delta)$-approximates the $\ell_1$ Best-Fit Ultrametrics problem.
\end{corollary}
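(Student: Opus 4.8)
The plan is to run the single-pass algorithm of \Cref{theorem:lzerofit} (possibly followed by trivial post-processing) and argue that its output, being a good $\ell_0$ fit, is automatically an $O(\Delta/\delta)$-approximate $\ell_1$ fit. Write $\mathrm{OPT}_0$ and $\mathrm{OPT}_1$ for the optimal values of the $\ell_0$ and $\ell_1$ Best-Fit Ultrametrics problems on the same input $D$; since both are minima over the very same set of ultrametrics spanning $V$, they are directly comparable. The argument rests on two elementary inequalities: first, $\mathrm{OPT}_1 \ge \delta\cdot\mathrm{OPT}_0$; second, for the ultrametric $T$ we ultimately output, $\|T-D\|_1 \le \Delta\cdot\|T-D\|_0$. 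Granting these and letting $c=O(1)$ be the approximation constant from \Cref{theorem:lzerofit} (so that $\|T-D\|_0 \le c\cdot\mathrm{OPT}_0$ w.h.p.), we obtain
\[
\|T-D\|_1 \;\le\; \Delta\cdot\|T-D\|_0 \;\le\; c\,\Delta\cdot\mathrm{OPT}_0 \;\le\; \tfrac{c\,\Delta}{\delta}\cdot\mathrm{OPT}_1 \;=\; O(\Delta/\delta)\cdot\mathrm{OPT}_1,
\]
which is the claimed guarantee; the number of passes, the $\tilde O(n)$ space, the polynomial running time, and the high-probability success are all inherited from \Cref{theorem:lzerofit}, since the post-processing only touches the stored $O(n)$-node output tree.

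Establishing the second inequality requires that $T$ not make any single error larger than $\Delta$. This is enforced by a cheap post-processing step: compute $\min D$ and $\max D$ (both available after one pass) and clamp every label of the output ultrametric into the interval $[\min D,\max D]$. Clamping preserves the isotonicity of labels along root-to-leaf paths, hence still yields an ultrametric, and it never increases $\|T-D\|_0$ (a pair matched before clamping has $D(uv)\in[\min D,\max D]$, so its label is unchanged). After clamping, every disagreeing pair contributes at most $\max D-\min D=\Delta$, giving $\|T-D\|_1\le\Delta\cdot\|T-D\|_0$.

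For the first inequality I will invoke the standard normalization that an $\ell_p$-optimal ultrametric may be taken to use only distance values that actually occur in $D$: with the tree topology (i.e.\ the underlying laminar family on $V$) fixed, the cost decomposes as a separable sum over internal nodes, one term per node depending only on its label, each term piecewise linear (for $\ell_1$) or piecewise constant (for $\ell_0$) with breakpoints only among the $D$-values of the leaf pairs separated at that node; minimizing such a separable objective subject to the tree-isotonicity constraints admits an optimum placing each label at one of those breakpoints. This is classical in the numerical taxonomy literature, so I would state it as a one-line lemma (or cite it). It gives an $\ell_1$-optimal ultrametric $T^*$ all of whose distances are values of $D$; then on each of its $\|T^*-D\|_0\ge\mathrm{OPT}_0$ disagreeing pairs both $T^*(uv)$ and $D(uv)$ are \emph{distinct} $D$-values, so $|T^*(uv)-D(uv)|\ge\delta$, and summing yields $\mathrm{OPT}_1=\|T^*-D\|_1\ge\delta\cdot\mathrm{OPT}_0$.

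I do not expect a genuine obstacle: this corollary is a black-box reduction whose entire content is \Cref{theorem:lzerofit} together with the normalization lemma, and the main thing to get right is merely stating that lemma cleanly and noting the $\min/\max$ clamping. The reason the reduction loses a factor $\Delta/\delta$ is exactly that an $\ell_0$-approximation is blind to the magnitudes of the errors it introduces and may concentrate its error budget on the costliest pairs; this is why the bound is meaningful only when $\Delta/\delta$ is not too large and why it matches the best combinatorial RAM-model bound in the regime $\Delta/\delta=O(n)$.
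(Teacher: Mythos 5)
Your proof is correct and takes essentially the same route as the paper's two-line argument: run the algorithm of Theorem~\ref{theorem:lzerofit} and sandwich $\|T-D\|_1 \le \Delta\|T-D\|_0 = O(\Delta\cdot\mathrm{OPT}_0) = O((\Delta/\delta)\cdot\mathrm{OPT}_1)$. The only difference is that you explicitly justify the two per-pair bounds (normalizing the $\ell_1$-optimal ultrametric to use only $D$-values so each disagreement costs at least $\delta$, and clamping the output so each error is at most $\Delta$), details the paper's proof leaves implicit.
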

\begin{proof}
Let $T_0$ (resp. $T_1$) be an ultrametric minimizing $\|T_0-D\|_0$ (resp. $\|T_1-D\|_1$), and $T$ be the output from Theorem~\ref{theorem:lzerofit} (thus $\|T-D\|_0 = O(\|T_0-D\|_0)$.

It holds that $\|T_1-D\|_1 \ge \|T_1-D\|_0 \delta$, as in the $\ell_1$ objective we pay at least $\delta$ for each pair having a disagreement.
Similarly $\|T-D\|_1 \le \|T-D\|_0 \Delta = O(\|T_0-D\|_0 \Delta) = O(\|T_1-D\|_0 \Delta)$, by definition of $T_0$.
\end{proof}

It is interesting to note that for the $\ell_1$ objective, most recent approximation algorithms in the offline setting are not combinatorial, making it a significant challenge to adapt them to the semi-streaming model. 
The best known combinatorial approximation for \(\ell_1\) Best-Fit Ultrametrics and Tree-Metrics is \(O(\Delta/\delta)\), when \(\Delta/\delta = O(n)\) \cite{charikar, mcgregor}, achieved through the so-called pivoting algorithm
\footnote{The authors of \cite{charikar} claim the approximation is proportional to the number of distinct distances. That is because of a simplification they make in the paper, that the distances are all consecutive positive integers starting from $1$. For an example showing the $O(\Delta/\delta)$ analysis is tight, take $V=\set{u_1,u_2,u_3}$ and $D(uv)>\Delta, D(uw) = D(uv)-\delta, D(vw)=D(uv)-\Delta$. With probability $1/3$ we pick $u$ as the pivot, and pay $\Delta$, while the optimum solution only pays $\delta$.}.
Unfortunately, this algorithm is very challenging to adapt to a single-pass semi-streaming setting as it generalizes the PIVOT-based algorithm of Correlation Clustering, which, despite extensive research, has not been adapted to semi-streaming settings with only a single pass without significant modifications~\cite{BehnezhadCMT22, BehnezhadCMT23, CambusKLPU24, chakrabarty2023single}. 
Surprisingly, our \(O(\Delta/\delta)\) approximation for the \(\ell_1\) objective is derived directly from the algorithm for the \(\ell_0\) objective, eliminating the need to explicitly use this pivoting approach.

Further, we contrast Theorem~\ref{theorem:lzerofit} by ruling out the possibility of a single-round exact algorithm, even with sub-quadratic space and exponential time. For this, we provide a new lower bound result for the correlation clustering problem, showing that any single-pass streaming algorithm with sub-quadratic space cannot output the optimal clustering nor can maintain its cost.

\begin{restatable}{theorem}{lbzeroclusteringorscore}
    Any randomized single pass streaming algorithm that with probability greater than $\frac{2}{3}$ either solves the correlation clustering problem or maintains the cost of an optimal correlation clustering solution  requires $\Omega(n^2)$ bits.
\end{restatable}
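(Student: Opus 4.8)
The plan is to reduce from the one‑way communication complexity of $\mathrm{INDEX}$. Recall that in $\mathrm{INDEX}_N$ Alice holds $x\in\{0,1\}^N$, Bob holds an index $j\in[N]$, Alice sends Bob a single message, and Bob must output $x_j$; any randomized protocol succeeding with probability $>\tfrac23$ needs $\Omega(N)$ bits. I will take $N=\Theta(n^2)$ and build a correlation clustering instance on $n$ vertices whose first portion is determined by $x$ (streamed by Alice) and whose second portion is a gadget determined by $j$ (streamed by Bob), in such a way that the optimal cost — and also any optimal clustering — reveals $x_j$. A single‑pass streaming algorithm using $S$ bits then gives an $\mathrm{INDEX}$ protocol with $S+O(\log n)$ bits (Alice sends the memory state together with the integer $|x|_1$), so $S=\Omega(n^2)$, which proves the theorem for both the cost‑maintaining and the solution‑outputting variants simultaneously.

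Concretely, I would split $V=U\cup W\cup Z$ into three (roughly equal) parts with $|U|=|W|=m\approx n/3$ and $|Z|=m+1$, and let $x$ index a fixed constant fraction — say one third — of the pairs in $U\times W$; call these the \emph{encoding pairs}. Alice streams: every pair inside $U$ and every pair inside $W$ as ``$+$'' (so $U$ and $W$ are cliques); each encoding pair as ``$+$'' iff the corresponding bit of $x$ is $1$; and every non‑encoding $U\times W$ pair as ``$-$''. Writing Bob's index as an encoding pair $(u_p,w_q)$, Bob then streams the gadget: $Z$ as a clique, all pairs between $Z$ and $u_p$ as ``$+$'', all but one pair between $Z$ and $w_q$ as ``$+$'', and every remaining pair touching $Z$ as ``$-$''. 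Using only a constant fraction of $U\times W$ for encoding keeps $|x|_1$ small enough that merging $U$ and $W$ into one cluster is never profitable (the forced ``$-$'' pairs inside such a cluster dominate any savings), and the slight asymmetry of $Z$'s attachment to $u_p$ versus $w_q$ is there to break a tie, as the computation below shows.

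The key lemma I would establish is that
\[
\mathrm{OPT}\;=\;|x|_1+\tfrac{2n}{3}-1-x_{pq},
\]
and that the optimal clustering is unique and places $u_p$ and $w_q$ in one cluster if and only if $x_{pq}=1$. Granting this, Bob recovers $x_{pq}$ either as $|x|_1+\tfrac{2n}{3}-1-\mathrm{OPT}$ (cost‑maintaining case, using $|x|_1$ sent by Alice) or by checking whether $u_p$ and $w_q$ are co‑clustered in the returned solution (solution‑outputting case); either way the protocol solves $\mathrm{INDEX}_N$ with probability $>\tfrac23$, and $N=\Theta(n^2)$ finishes the argument. To prove the lemma I would use the standard ``clean‑up'' steps for correlation clustering lower bounds: (i) since $Z$ is a clique of size $\Theta(n)$, any clustering splitting $Z$ pays $\Omega(n)$ extra, so in an optimum all of $Z$ lies in one cluster $C_Z$; (ii) any vertex outside $Z\cup\{u_p,w_q\}$ placed in $C_Z$ creates $\Theta(n)$ disagreements with $Z$, so $C_Z\subseteq Z\cup\{u_p,w_q\}$; (iii) similarly $U\setminus C_Z$ and $W\setminus C_Z$ each remain a single (clique) cluster, and merging any appreciable chunk of $W$ into $U$'s cluster is dominated because of the forced ``$-$'' $U\times W$ pairs; (iv) it then only remains to compare the four options $C_Z\in\{Z,\;Z\cup\{u_p\},\;Z\cup\{w_q\},\;Z\cup\{u_p,w_q\}\}$, a short direct count which, thanks to the asymmetric attachment (the marginal cost of adding $w_q$ to $Z\cup\{u_p\}$ works out to $1-2x_{pq}$), makes $C_Z=Z\cup\{u_p,w_q\}$ the strict optimum exactly when $x_{pq}=1$ and $C_Z=Z\cup\{u_p\}$ the strict optimum exactly when $x_{pq}=0$.

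The main obstacle is steps (iii)–(iv): ruling out every ``exotic'' clustering — in particular partial mergers of $U$, $W$ and the gadget, or large clusters built out of encoding‑heavy rows or columns of $x$ — and choosing the encoding fraction and $|Z|$ so that all the resulting inequalities hold for \emph{every} $x$, not just typical ones. Everything else (the $\mathrm{INDEX}$ lower bound, the Alice–Bob simulation of the single‑pass algorithm, and the arithmetic converting $\mathrm{OPT}$ into $x_{pq}$) is routine.
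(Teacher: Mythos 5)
Your overall frame (reduction from one-way $\mathrm{INDEX}$, Alice streams an $x$-dependent part, Bob streams a gadget tied to his index, Alice additionally sends $|x|_1$) is sound, but the proof as proposed has a genuine gap at exactly the point you defer: the key lemma that for \emph{every} $x$ the optimum of the combined instance is the canonical clustering with cost exactly $|x|_1+\tfrac{2n}{3}-1-x_{pq}$. This is not a routine clean-up; in fact, as stated it can be false. You never specify how the encoding pairs are arranged inside $U\times W$, and the lemma's truth depends on that choice: if the encoding pairs form the complete bipartite block $U\times W'$ with $|W'|=m/3$ (which is one third of the cross pairs) and all bits equal $1$, then the clustering $\{U\cup W',\,W\setminus W'\}$ pays only the cut $W$-clique edges, $\tfrac{2m^2}{9}$, which is strictly less than the canonical cost $|x|_1=\tfrac{m^2}{3}$, so $\mathrm{OPT}$ is not given by your formula and Bob's decoding breaks. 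Even for a well-spread encoding set, what you need is that the clique partition $\{U,W\}$ (up to moving the two pivots) is the \emph{exact and essentially unique} optimum of correlation clustering on an adversarially labeled instance, for all $x$ simultaneously and against all ``exotic'' partitions, with a margin that survives the $\pm 1$ tie-breaking in step (iv). That structural statement is the entire difficulty of your route, and it is asserted rather than proved.

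The paper avoids needing any such control over Alice's arbitrary graph, and proves the two capabilities separately. For the clustering-output version, Bob appends two cliques of size $2n$ (dominating the $n$-vertex Alice part), attached to $i$ and $j$, with exactly about half of the clique-to-clique pairs positive, so that the single bit $x_{ij}$ tips whether the two cliques merge in any optimal solution; the structure of Alice's graph elsewhere is irrelevant because every other vertex has only negative edges to the huge cliques. For the cost version, Bob duplicates the algorithm's memory state and runs \emph{two} different completions of the stream (separate cliques attached to $i$ and $j$ versus one clique attached to both); the unknown optimal cost of Alice's part cancels in the difference of the two reported costs, which changes by exactly $1$ iff $x_{ij}=1$. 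If you want to salvage your single-construction approach, you would need to either prove the exact-optimum lemma for a concretely specified encoding arrangement (ruling out all partial mergers, with strict margins), or switch to a two-completion comparison in the spirit of the paper so that the optimum of the $x$-dependent part never has to be computed.
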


We then extend this result to $\ell_p$ Best-Fit Ultrametrics problems for $p \in \{ 0,1 \}$, using the fact that correlation clustering is a special case of these problems (see e.g. \cite{charikar}).

\begin{restatable}{corollary}{lbzeroultrametric}
    For $p \in \{0,1\}$, any randomized single pass streaming algorithm that with probability greater than $\frac{2}{3}$ either solves $\ell_p$ Best-Fit Ultrametrics or just outputs the error of an optimal ultrametric solution requires $\Omega(n^2)$ bits.
\end{restatable}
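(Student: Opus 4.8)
The plan is to reduce the preceding lower bound for correlation clustering to $\ell_p$ Best-Fit Ultrametrics, exploiting that correlation clustering is precisely the restriction of ultrametric fitting to two distinct target distances. Given a correlation clustering instance on a vertex set $V$ whose pairwise labels arrive one at a time, I would produce, on the fly, the $\ell_p$ Best-Fit Ultrametrics stream on the same $V$ in which each similar ($+$) pair is revealed with distance $1$ and each dissimilar ($-$) pair with distance $2$, in the original arrival order. This transformation is entirely local and stateless, so it uses no extra space and keeps the number of passes equal to one. Observe that a $\{1,2\}$-valued ultrametric on $V$ is exactly a partition of $V$ (distance $1$ within parts, $2$ across), and for such an ultrametric the quantity $\|T-D\|_p$ coincides, for both $p=0$ and $p=1$, with the number of disagreements of the corresponding clustering; hence $\mathrm{OPT}_{CC}$ of the original instance is at least $\mathrm{OPT}_{\ell_p}$ of the produced instance.

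The crux is the reverse inequality together with a $D$-oblivious way to turn an optimal ultrametric into an optimal clustering --- ``$D$-oblivious'' because, after the stream, the produced distance matrix is no longer available. For $p=0$, rounding every value of an ultrametric to the nearer of $\{1,2\}$ is a monotone reparametrization, so it is again an ultrametric, and it cannot increase the number of disagreements; thus every ultrametric has a $\{1,2\}$-valued counterpart of no larger $\ell_0$ cost, giving $\mathrm{OPT}_{\ell_0}=\mathrm{OPT}_{CC}$, and the rounded output of an optimal ultrametric is an optimal clustering. For $p=1$, I would first clamp all values into $[1,2]$ (again monotone, hence still an ultrametric, and non-increasing in $\ell_1$ cost since every entry of $D$ lies in $[1,2]$), and then cut the ultrametric at a threshold lying strictly between two consecutive of its (finitely many) values. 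A standard uniform-random-threshold argument shows that, once all values lie in $[1,2]$, the expected number of disagreements of the cut partition equals $\|T-D\|_1$ exactly; since every partition has at least $\mathrm{OPT}_{CC}=\mathrm{OPT}_{\ell_1}$ disagreements, this step function of the threshold must be constant and equal to the optimum on each of its pieces, so any interior threshold yields an optimal clustering.

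With the reduction in hand, suppose toward a contradiction that some single-pass algorithm $A$ uses $o(n^2)$ bits and, with probability greater than $2/3$, either solves $\ell_p$ Best-Fit Ultrametrics or outputs the error of an optimal ultrametric. Running $A$ on the produced stream and post-processing its output as above gives a single-pass $o(n^2)$-bit algorithm that, with probability greater than $2/3$, either solves correlation clustering (post-processing an optimal ultrametric into an optimal clustering) or maintains the cost of an optimal solution (outputting the optimal ultrametric error, which equals $\mathrm{OPT}_{CC}$) --- contradicting the preceding theorem, which gives $\Omega(n^2)$ bits. The step I expect to demand the most care is the $p=1$ matching direction: ruling out that fractional internal heights beat a genuine partition, and arranging the rounding so that it reads only the ultrametric and not $D$, so that the reduction truly costs no extra space. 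The clamp-then-threshold argument is what makes this go through; the $p=0$ case and the bookkeeping of the reduction are routine.
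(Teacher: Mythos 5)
Your proposal is correct and follows essentially the same route as the paper: the paper simply invokes the fact that correlation clustering is exactly $\ell_p$ Best-Fit Ultrametrics on a two-valued input (citing \cite{charikar}) and transfers the $\Omega(n^2)$ bound, which is precisely your $\{1,2\}$-distance reduction. The only difference is that you spell out the folklore equivalence (rounding for $\ell_0$, clamping plus a random-threshold cut for $\ell_1$) that the paper leaves to the citation.
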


Next, we consider the $\ell_\infty$ objective, where the goal is to minimize the maximum error.
In \Cref{section:linf} we provide a complete characterization of $\ell_\infty$ Best-Fit Ultrametrics in the semi-streaming model.
We give a single pass algorithm with $2$-approximation factor to this problem.

\begin{restatable}{theorem}
{inftysinglepass}\label{theorem:theorem_single_pass_md}
There exists a single pass polynomial time semi-streaming algorithm that $2$-approximates the $\ell_\infty$ Best-Fit Ultrametrics problem.
\end{restatable}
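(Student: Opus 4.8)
The plan is to output the \emph{subdominant ultrametric} of $D$, relying on two observations: it is always within a factor $2$ of the optimal $\ell_\infty$ fit, and it is determined entirely by a minimum spanning tree of the complete graph on $V$ with edge weights $D$, which can be maintained in one pass using $\tilde O(n)$ space.

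First, recall the subdominant ultrametric $u^*$ of $D$, defined by $u^*(uv)=\min_P\max_{e\in P}D(e)$, where $P$ ranges over all $u$--$v$ paths in $K_V$ and $\max_{e\in P}D(e)$ is the bottleneck weight of $P$. One checks directly that: (i) $u^*$ satisfies the ultrametric (strong triangle) inequality, since concatenating bottleneck-optimal paths for $uw$ and $wv$ yields a $u$--$v$ path of bottleneck at most $\max\{u^*(uw),u^*(wv)\}$; (ii) $u^*\le D$ pointwise, taking $P$ to be the single edge $uv$; and (iii) $u^*$ is pointwise-maximal among all functions satisfying the strong triangle inequality that are dominated by $D$ --- iterating the strong triangle inequality along any path $P$ gives $W(uv)\le\max_{e\in P}W(e)\le\max_{e\in P}D(e)$, so $W\le u^*$. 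Since all entries of $D$ are positive, $u^*$ has positive entries and is represented by the single-linkage clustering dendrogram on $V$ (equivalently, obtained from Kruskal's algorithm), a valid output of size $O(n)$.

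The approximation guarantee then follows from a short sandwiching argument. Let $U$ be any ultrametric and $c=\|D-U\|_\infty$. The function $U-c$ (subtract $c$ from every pairwise value) still satisfies the strong triangle inequality --- shifting all three values of a triple preserves the fact that the maximum is attained at least twice --- and $U-c\le D$, so by (iii) $U-c\le u^*$, i.e. $U\le u^*+c$. Combined with $U\ge D-c$ this gives $D(uv)-u^*(uv)\le 2c$ for every pair, hence $\|D-u^*\|_\infty=\max_{uv}\bigl(D(uv)-u^*(uv)\bigr)\le 2\,\|D-U\|_\infty$. Applying this with $U$ an optimal ultrametric shows $u^*$ is a $2$-approximation; the bound is already tight on three points, matching the lower bound stated later in the paper.

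It remains to compute $u^*$ in one pass. We maintain a minimum spanning forest of the prefix of the stream seen so far: when edge $uv$ of weight $w$ arrives, add it if $u$ and $v$ lie in different components, and otherwise locate the heaviest edge $e'$ on the current tree path between $u$ and $v$ and replace $e'$ by $uv$ iff $w<w(e')$. This keeps only $O(n)$ edges --- $\tilde O(n)$ space --- runs in polynomial time (a linear scan of the $O(n)$-edge tree path per update, or $O(\log n)$ with link--cut trees), and, since the update is exactly the cycle-property swap, it ends with a minimum spanning tree $T$ of the full graph regardless of the (adversarial) arrival order. Because for every pair $uv$ the value $u^*(uv)$ equals the maximum edge weight on the $T$-path from $u$ to $v$ --- a standard property of minimum spanning trees that is insensitive to ties --- the tree $T$ determines $u^*$ exactly, and we output its single-linkage dendrogram. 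The one genuine conceptual point (the main ``obstacle'') is recognizing that one should \emph{not} attempt to recover $\mathrm{OPT}_\infty$ or the true optimal ultrametric in a single pass: the natural expression for the optimum, $\tfrac12\max_{uv}(D(uv)-u^*(uv))$, depends on all $\binom n2$ gaps and appears to require $\Omega(n^2)$ space, whereas the MST-derived subdominant ultrametric sidesteps this entirely while remaining, by the sandwiching lemma, a factor-$2$ approximation.
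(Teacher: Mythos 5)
Your proposal is correct and follows essentially the same route as the paper: compute a minimum spanning tree in one pass and output the induced bottleneck (subdominant) ultrametric, then prove the factor $2$ by shifting an optimal ultrametric down by its $\ell_\infty$ cost and invoking pointwise maximality of the MST-induced ultrametric among functions dominated by $D$. The paper packages this same argument as a reduction to the $\ell_\infty$ Min-Decrement Ultrametrics problem (whose optimum is exactly your $u^*$), but the shifting/sandwiching step and the streaming MST maintenance are the same as yours.
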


We contrast Theorem~\ref{theorem:theorem_single_pass_md} by showing that this is the best approximation factor achievable using a single pass, even with sub-quadratic space and exponential time.

\begin{restatable}{theorem}{linftylowerbound}
\label{theorem:l_infty_lower_bound}
Any randomized one-pass streaming algorithm for \(\ell_\infty\) Best-Fit Ultrametrics with an approximation factor strictly less than 2 and a success probability greater than \(\frac{2}{3}\) requires \(\Omega(n^2)\) bits of space.
\end{restatable}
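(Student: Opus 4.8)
The plan is to reduce from the one-way communication problem $\textsc{Index}$, where Alice holds $x\in\{0,1\}^N$, Bob holds $k\in[N]$, and Bob must output $x_k$ from a single message; this needs $\Omega(N)$ bits even with public coins and error probability $1/3$. Take $N=\binom{n-1}{2}$ and identify $[N]$ with the unordered pairs of $V'=\{v_1,\dots,v_{n-1}\}$; the instance lives on $V=V'\cup\{c\}$, where $c$ is one auxiliary vertex. Suppose a one-pass streaming algorithm $\mathcal{A}$ uses $s$ bits and, with probability $>2/3$, reports a number $\widehat O$ with $\mathrm{OPT}\le\widehat O\le\alpha\cdot\mathrm{OPT}$ for some constant $\alpha=2-\delta<2$ (WLOG $0<\delta\le1$). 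Alice streams the $\binom{n-1}{2}$ distances among $V'$, setting $D(v_iv_j)=4$ if $x_{\{i,j\}}=1$ and $D(v_iv_j)=2$ otherwise, and then passes the memory of $\mathcal{A}$ (and its random string, treated as public coins) to Bob. Bob, holding the query pair $k=\{i^\ast,j^\ast\}$, appends the remaining $n-1$ distances, $D(c\,v_{i^\ast})=D(c\,v_{j^\ast})=\varepsilon$ and $D(c\,v_\ell)=4$ for every other $\ell$, where $\varepsilon:=\delta$; he then outputs a guess for $x_k$ from $\mathcal{A}$'s reported value. If this is correct with probability $>2/3$ we conclude $s=\Omega(N)=\Omega(n^2)$.

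For the analysis I will use the classical characterization (Farach--Kannan--Warnow) that the optimal $\ell_\infty$ error equals $\tfrac12\|D-\widehat D\|_\infty$, where $\widehat D(uv)=\min_P\max_{e\in P}D(e)$, the minimum being over $u$-$v$ paths $P$, is the subdominant ultrametric; I will include its short proof (for the upper bound, $\widehat D+\tfrac12\|D-\widehat D\|_\infty$ is an ultrametric whose error is $\le\tfrac12\|D-\widehat D\|_\infty$; for the lower bound, given any ultrametric of error $\beta$, apply the ultrametric inequality along a path witnessing $\widehat D(uv)$ for the pair $uv$ maximizing $D(uv)-\widehat D(uv)$ to get $\beta\ge\tfrac12(D(uv)-\widehat D(uv))$). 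In the constructed instance the only edges of weight below $2$ are the two $\varepsilon$-edges at $c$, whose union is exactly the path $v_{i^\ast}$--$c$--$v_{j^\ast}$; hence $\widehat D(v_{i^\ast}v_{j^\ast})=\varepsilon$, whereas for every other pair $uv$ any connecting path is either the direct edge or traverses an edge of weight $\ge2$, so $\widehat D(uv)\ge\min\{D(uv),2\}$. Since all distances are at most $4$, this gives $D(uv)-\widehat D(uv)\le2$ for every pair other than the queried one, while $D(v_{i^\ast}v_{j^\ast})-\widehat D(v_{i^\ast}v_{j^\ast})=D(v_{i^\ast}v_{j^\ast})-\varepsilon$ equals $4-\varepsilon$ if $x_k=1$ and $2-\varepsilon$ if $x_k=0$. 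Therefore $\mathrm{OPT}=2-\varepsilon/2$ when $x_k=1$ and $\mathrm{OPT}\le1$ when $x_k=0$, so $\mathcal{A}$ reports $\widehat O\le\alpha=2-\varepsilon$ in the second case and $\widehat O\ge\mathrm{OPT}=2-\varepsilon/2>2-\varepsilon$ in the first. Bob outputs $x_k=1$ iff $\widehat O>2-\varepsilon$, and he is correct whenever $\mathcal{A}$ succeeds, i.e.\ with probability $>2/3$.

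I expect the crux to lie in the gadget, not the communication bookkeeping: the queried bit---and only it---must be forced to contribute a term to the $\ell_\infty$ error that is a factor $2$ above the ``$\le1$'' background created by all the other pairs. This is exactly what the $\varepsilon$-shortcut through $c$ accomplishes, collapsing $\widehat D$ on the queried pair to essentially $0$ while provably keeping $\widehat D\ge2$ on every other pair so that their contributions stay $\le\tfrac12(4-2)=1$; the value pair $\{2,4\}$ is chosen so that the gap ratio tends to $2$, which is the best one could hope for given the matching $2$-approximation of \Cref{theorem:theorem_single_pass_md}. The remaining details are routine: all $\binom n2$ pairs receive a positive distance (Alice's $\binom{n-1}{2}$ and Bob's $n-1$), the stream order is one admissible adversarial order, and treating $\mathcal{A}$'s randomness as public coins keeps us in the regime where $\textsc{Index}$ still requires $\Omega(N)$ bits.
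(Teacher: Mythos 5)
Your proposal is correct and is essentially the paper's own argument: a one-way \textsc{Index} reduction in which Alice encodes her bits as a two-valued distance matrix and Bob attaches a single auxiliary vertex with two tiny distances to the endpoints of the queried pair, creating a factor-(nearly)-$2$ gap in $\mathrm{OPT}$ between the two cases --- the paper uses values $\{1,2\}$ with attachments $0$ and $1.5$ and certifies the gap directly via the strong triangle inequality, while you use $\{2,4\}$ with attachments $\varepsilon$ and $4$ and certify it via the subdominant-ultrametric characterization (machinery the paper effectively also has through its min-decrement/MST lemmas), which is a cosmetic rather than substantive difference. The one caveat, shared with (and no worse than) the paper's own write-up, is that your extraction step assumes the algorithm reports a cost estimate $\widehat{O}$, whereas the problem's nominal output is an ultrametric whose cost Bob cannot evaluate without Alice's entries; since the paper's proof glosses over exactly the same point, this does not constitute a deviation from its argument.
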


Moreover, we demonstrate that allowing two passes is sufficient for an exact solution.
Therefore, we provide optimal tradeoffs between the number of passes and the approximation factor in all scenarios.

\begin{restatable}{theorem}{twopass}\label{theorem:twopassexact}
There exists a two-pass polynomial time semi-streaming algorithm that
computes an exact solution to the $\ell_\infty$ Best-Fit Ultrametrics problem.
\end{restatable}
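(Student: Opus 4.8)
The plan is to reduce to the classical polynomial-time offline algorithm for $\ell_\infty$ Best-Fit Ultrametrics and to observe that it splits into two naturally streamable stages. Recall the structural fact underlying the offline algorithm: let $U^-_D$ denote the \emph{subdominant ultrametric} of $D$, i.e.\ the pointwise-largest ultrametric with $U^-_D\le D$, whose value is the bottleneck (min--max) path value $U^-_D(uv)=\min_{u=x_0,\dots,x_k=v}\max_i D(x_ix_{i+1})$. Then the optimal error is $\varepsilon^\ast=\tfrac12\max_{uv}\bigl(D(uv)-U^-_D(uv)\bigr)$, and $U^-_D+\varepsilon^\ast$ is an optimal ultrametric. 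I would include a short self-contained proof: for the lower bound, any ultrametric $U$ satisfies $U(uv)\le\max$ of $U$ along a bottleneck $u$--$v$ path, hence $U(uv)\le U^-_D(uv)+\|U-D\|_\infty$, while also $U(uv)\ge D(uv)-\|U-D\|_\infty$, giving $\|U-D\|_\infty\ge\varepsilon^\ast$; for the upper bound, adding a constant preserves the ultrametric inequality, so $U^-_D+\varepsilon^\ast$ is an ultrametric, and it lies in the box $[D-\varepsilon^\ast,D+\varepsilon^\ast]$ since $U^-_D\le D$ and $U^-_D\ge D-2\varepsilon^\ast$.

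The key observation for streaming is that $U^-_D$ is exactly the single-linkage dendrogram of $D$: if $T_D$ is a minimum spanning tree of the complete graph on $V$ with edge weights $D$, then $U^-_D(uv)=\max_{e\in\mathrm{path}_{T_D}(u,v)}D(e)$, and consequently $\varepsilon^\ast=\tfrac12\max_{uv}\bigl(D(uv)-\max_{e\in\mathrm{path}_{T_D}(u,v)}D(e)\bigr)$. \textbf{Pass 1} runs the standard semi-streaming minimum-spanning-tree routine: maintain a spanning forest, and whenever an incoming pair closes a cycle, discard the heaviest edge of that cycle; this uses $\tilde O(n)$ space and polynomial time and leaves us with $T_D$ and its $n-1$ edge weights. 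Since $T_D$ fits in memory, between the passes we preprocess it for path-maximum queries (e.g.\ a heavy-path / Cartesian-tree structure, or naively $O(n)$ per query), all done locally.

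\textbf{Pass 2} recomputes $\varepsilon^\ast$: for each arriving pair $uv$ with value $D(uv)$, evaluate $D(uv)-\max_{e\in\mathrm{path}_{T_D}(u,v)}D(e)$ and keep a running maximum; at the end $\varepsilon^\ast$ is half of it. The optimal ultrametric $U^-_D+\varepsilon^\ast$ is then fully determined by what we already hold: it is the laminar family / dendrogram induced by $T_D$ (via the Kruskal merge order) with every merge height raised by $\varepsilon^\ast$, which has size $O(n)$ and is emitted directly. Correctness is exactly the structural fact above; the algorithm is deterministic, uses $\tilde O(n)$ space, and runs in polynomial time (each MST update and each path-maximum query is polynomial, with $O(n^2)$ of them). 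A minor point to verify is that $U^-_D+\varepsilon^\ast$ is a legitimate ultrametric with strictly positive values, which is immediate as $U^-_D>0$ and $\varepsilon^\ast\ge 0$ (and $\varepsilon^\ast=0$ only when $D$ is already an ultrametric).

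The main difficulty is in justifying why \emph{two} passes are both necessary and sufficient rather than in any single computation. During Pass 1 the spanning forest keeps changing as lighter edges arrive, so the bottleneck between a fixed pair only decreases over time; hence the contribution $D(uv)-U^-_D(uv)$ of an edge cannot be finalized at the moment it is read (an online estimate would merely lower-bound $\varepsilon^\ast$), and since we store only $T_D$ and throw away the weights of non-tree pairs, we genuinely must re-scan the stream to recover those $D(uv)$ values. Thus the heart of the argument is establishing that $T_D$ (equivalently $U^-_D$) can be computed and stored in $\tilde O(n)$ words in one pass, and that $\varepsilon^\ast$ together with an optimal ultrametric is recoverable from $T_D$ plus one additional scan --- after which the bound on passes, space, and running time follows immediately.
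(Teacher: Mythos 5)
Your proposal is correct and matches the paper's proof in essence: pass 1 builds the MST, whose bottleneck (path-maximum) distances are exactly the optimal min-decrement / subdominant ultrametric (the paper's Lemma~\ref{lemma:optimal_md}), and pass 2 measures the maximum slack $\max_{uv}(D(uv)-U^-_D(uv))$ and shifts the ultrametric up by half of it, with optimality following from the same two-sided bound the paper packages as Lemma~\ref{lemma:md_approximation}. Your direct lower-bound argument via the bottleneck path is just an unrolled version of that lemma, so the two proofs coincide.
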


In \Cref{section:trees} we show that all aforementioned algorithms can be extended to tree metrics.
This is achieved by providing reductions to the corresponding ultrametrics problems, requiring only one additional pass over the stream.
The reductions used for the $\ell_0$ and $\ell_\infty$ objectives differ significantly from each other. 

\begin{restatable}{theorem}{zeroreduction}\label{theorem:best_tree_zero}
There exists a two-pass polynomial time semi-streaming algorithm that w.h.p $O(1)$-approximates the $\ell_0$ Best-Fit Tree-Metrics problem.
\end{restatable}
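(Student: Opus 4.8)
The plan is to reduce the $\ell_0$ Best-Fit Tree-Metrics problem to the $\ell_0$ Best-Fit Ultrametrics problem of Theorem~\ref{theorem:lzerofit}, at the cost of exactly one extra pass and only a constant-factor loss. The tool is the classical Farris / Gromov-product transform anchored at a pivot vertex (as in the reduction of Agarwala~\cite{agarwala}, but here analyzed for the $\ell_0$ objective). In the first pass we sample a pivot $r\in V$ uniformly at random and store the whole row $\set{D(ru):u\in V\setminus\set r}$ using $O(n)$ words. In the second pass, as each pair $uv$ with value $D(uv)$ arrives, we compute on the fly the transformed value
\[
\hat D(uv)\;=\;M-\tfrac12\bigl(D(ru)+D(rv)-D(uv)\bigr),
\]
where $M$ is a suitably large shift (e.g. the maximum entry of the stored pivot row, which makes $\hat D>0$), and feed $\hat D(uv)$ into the single-pass ultrametric algorithm of Theorem~\ref{theorem:lzerofit} run on ground set $V\setminus\set r$. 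When the stream ends we obtain an ultrametric $U$ on $V\setminus\set r$, and we invert the transform using the stored row: set $T(ru)=D(ru)$ for all $u$ and $T(uv)=D(ru)+D(rv)-2\bigl(M-U(uv)\bigr)$ for $u,v\neq r$, projecting $U$ if necessary onto the set of ultrametrics for which this formula yields a genuine tree metric.

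For correctness, let $T^*$ be an optimal tree metric, $OPT=\|T^*-D\|_0$, and let $U^*$ be the Farris transform of $T^*$ with respect to the pivot $r$. A pair $uv$ with $u,v\neq r$ can satisfy $\hat D(uv)\neq U^*(uv)$ only if at least one of $D(ru)\neq T^*(ru)$, $D(rv)\neq T^*(rv)$, $D(uv)\neq T^*(uv)$ holds; hence, writing $k_r=\abs{\set{u:D(ru)\neq T^*(ru)}}$ for the number of errors of $T^*$ incident to $r$,
\[
\|\hat D-U^*\|_0\;\le\;OPT+O(n\cdot k_r).
\]
Since every error of $T^*$ is incident to two vertices, $\sum_{r\in V}k_r=2\,OPT$, so a uniformly random pivot has $\mathbb{E}[k_r]=2\,OPT/n$, and by Markov's inequality $k_r=O(OPT/n)$ with constant probability. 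On that event the transformed instance has optimum $O(OPT)$, so Theorem~\ref{theorem:lzerofit} returns $U$ with $\|U-\hat D\|_0=O(OPT)$ w.h.p. Finally the inverse transform is error-preserving on pairs avoiding $r$, i.e. $T(uv)\neq D(uv)\iff U(uv)\neq\hat D(uv)$, while every pair incident to $r$ is error-free by construction, giving $\|T-D\|_0=\|U-\hat D\|_0=O(OPT)$.

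To turn the constant success probability of the pivot choice into high probability within the same two passes, we run $\Theta(\log n)$ independent copies in parallel, with independent random pivots $r_1,\dots,r_t$; storing all $t$ rows and running $t$ copies of the ultrametric subroutine costs $\tilde O(n)$ space. By a union bound, w.h.p. at least one copy has a good pivot and therefore produces a tree $T_i$ with $\|T_i-D\|_0=O(OPT)$. To choose among the $t$ candidates without a third pass, we have each copy additionally report a constant-factor estimate of the cost $\|U_i-\hat D_i\|_0$ of its transformed solution — which by error preservation equals $\|T_i-D\|_0$ — and we output the candidate minimizing this estimate; since the good copy's estimate is $O(OPT)$, so is the true cost of the winner, w.h.p.

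The two places that need real care are the following. First, the inverse transform must actually return a valid tree metric: the formula $T(uv)=D(ru)+D(rv)-2(M-U(uv))$ can produce non-metric or even negative values when $U$ violates the Gromov-product bound $M-U(uv)\le\min(D(ru),D(rv))$, so one has to project $U$ onto the admissible set and argue that this projection changes $\|U-\hat D\|_0$ by at most a constant factor — using that the admissible set contains a transform of $T^*$ as a witness. Second, the selection step relies on the ultrametric subroutine being able to emit a good estimate of its own objective value together with the ultrametric (or on a minor variant of it that does); this is consistent with its semi-streaming design but must be spelled out. The remaining ingredients — the $O(n\,k_r)$ error-propagation bound, the averaging over pivots, and the space accounting — are routine.
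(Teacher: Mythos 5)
Your reduction (random pivot, Farris/centroid transform computed on the fly in the second pass, the $OPT + O(n\,k_r)$ error-propagation bound, Markov over a uniformly random pivot, and boosting with $\Theta(\log n)$ independent pivots) is essentially the paper's route, and those parts are sound; the back-conversion of an approximate ultrametric into a genuine tree metric, which you flag as needing a projection argument, is exactly what the cited reduction of Kipouridis (Lemma~\ref{lemma:best_tree_0}) supplies, so that concern is handled by known machinery. The genuine gap is your selection step among the $\Theta(\log n)$ candidate trees. You assume each copy of the ultrametric subroutine can ``report a constant-factor estimate of the cost $\|U_i-\hat D_i\|_0$ of its transformed solution,'' but Theorem~\ref{theorem:lzerofit} provides no such estimate, and it is not a minor variant: after the single pass the algorithm retains only $\widetilde{O}(n)$ bits of sketch information, from which the number of disagreements of its own output against the (discarded) input cannot be estimated to within a constant factor when that number is small --- a constant-factor estimator would in particular have to distinguish cost $0$ from cost $1$, and more generally any cost $o(n)$ is invisible to $\widetilde{O}(n)$-sized samples of an $n^2$-entry matrix (this is also the spirit of the paper's own lower bounds on maintaining the cost). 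Since bad pivots can produce candidates whose cost exceeds $OPT$ by a large factor while all estimates read as ``roughly zero,'' taking the minimum estimate does not yield an $O(1)$-approximation, and patching this with an extra pass over $D$ would give a three-pass algorithm, not two.

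The paper closes exactly this hole with a different idea that needs no access to $D$ at all: after the second pass it compares the candidates \emph{to each other}. It computes the pairwise $\ell_0$ distances $\|T^{a_i}-T^{a_j}\|_0$ among the $O(\log n)$ stored trees, finds the smallest value $\overline{OPT}$ for which the graph on pivots with edges whenever $\|T^{a_i}-T^{a_j}\|_0 \le 24\,\overline{OPT}$ contains a clique on at least half the pivots, and outputs any tree in that clique. With high probability a majority of the candidates are $12$-approximations, so they form such a clique at threshold $24\,OPT$ (hence $\overline{OPT}\le OPT$), and by the triangle inequality any member of the selected clique is within $24\,\overline{OPT}+12\,OPT\le 36\,OPT$ of the optimum. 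Replacing your estimate-and-pick-the-minimum step with this clique-based selection (or supplying a genuinely workable self-cost estimator, which you have not done) is what is needed to complete the argument.
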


Using the same arguments as in \Cref{cor:l1ApxUltra}, we obtain an analogous result for the $\ell_1$ objective.

\begin{corollary} Let $\delta$ (resp. $\Delta$) be the smallest (resp. largest) absolute difference between distinct distances in $D$, for an $\ell_1$ Best-Fit Tree-Metrics instance. There exists a two-pass pass polynomial time semi-streaming algorithm that w.h.p $O(\Delta/\delta)$-approximates $\ell_1$ Best-Fit Tree-Metrics problem.
\end{corollary}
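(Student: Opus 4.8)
The plan is to reuse, essentially verbatim, the reduction from the $\ell_0$ to the $\ell_1$ objective underlying \Cref{cor:l1ApxUltra}, now feeding it the tree-metric algorithm of \Cref{theorem:best_tree_zero} in place of the ultrametric algorithm of \Cref{theorem:lzerofit}. Since \Cref{theorem:best_tree_zero} already runs in two passes, polynomial time and $\tilde O(n)$ space, nothing is lost: the resulting algorithm inherits exactly these bounds, and the only thing to verify is that the norm comparison $\delta\,\|x\|_0 \le \|x\|_1 \le \Delta\,\|x\|_0$ for error vectors $x$ carries over to the tree-metric setting.

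Concretely, first I would run the algorithm of \Cref{theorem:best_tree_zero} and call its output tree metric $T$, so that $\|T-D\|_0 = O(\|T_0-D\|_0)$, where $T_0$ is an $\ell_0$-optimal tree metric; let $T_1$ be an $\ell_1$-optimal tree metric. For the lower bound on $\|T_1-D\|_1$, I would invoke the standard fact (used implicitly in the proof of \Cref{cor:l1ApxUltra}, and transferable to tree metrics through the Agarwala-type reduction of \cite{agarwala} as in \cite{charikar,mcgregor}) that there is an $\ell_1$-optimal tree metric all of whose pairwise distances are values occurring in $D$; hence every pair on which $T_1$ disagrees with $D$ contributes at least $\delta$ to the $\ell_1$ error, giving $\|T_1-D\|_1 \ge \delta\,\|T_1-D\|_0 \ge \delta\,\|T_0-D\|_0$ by optimality of $T_0$ for $\ell_0$. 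For the upper bound, the output $T$ uses only distances in the range $[\min D,\max D]$, so each disagreeing pair contributes at most $\Delta$, giving $\|T-D\|_1 \le \Delta\,\|T-D\|_0 = O(\Delta\,\|T_0-D\|_0)$. Chaining the two inequalities yields $\|T-D\|_1 = O\!\left(\frac{\Delta}{\delta}\right)\|T_1-D\|_1$, which is the claimed approximation; no pass beyond those of \Cref{theorem:best_tree_zero} is used.

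There is no real combinatorial obstacle here; the proof is immediate once the pieces are assembled. The only points needing (brief) justification are the two structural facts borrowed from the ultrametric argument: that an $\ell_1$-optimal tree metric may be taken to use only distances appearing in $D$, so that disagreements cost at least $\delta$, and that the output of the $\ell_0$ tree-metric algorithm likewise assigns only distances lying in $[\min D,\max D]$, so that disagreements cost at most $\Delta$. The first is the one I would be most careful about, since for tree metrics (unlike ultrametrics) it is not entirely obvious and must be traced through the reduction of \cite{agarwala}; the second should follow directly from the construction in \Cref{theorem:best_tree_zero}, exactly as in the ultrametric case.
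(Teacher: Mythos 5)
Your chain of inequalities is the same one the paper intends (it proves this corollary only by saying ``use the same arguments as \Cref{cor:l1ApxUltra}'' with \Cref{theorem:best_tree_zero} in place of \Cref{theorem:lzerofit}), and the upper-bound half does survive the transfer: since the output tree is $T^\apivot=U^\apivot-\centroid$, its error vector against $D$ coincides with the error vector of $U^\apivot$ against $D+\centroid$, and any two entries of $D+\centroid$ differ by at most $3\Delta$, so each disagreement of $T^\apivot$ costs $O(\Delta)$ (your phrasing ``$T$ uses only distances in $[\min D,\max D]$'' is not literally true, but this repairs it). The genuine gap is in the lower-bound half, exactly at the point you flagged. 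The ``standard fact'' you invoke --- that an $\ell_1$-optimal tree metric may be taken to use only distances occurring in $D$, so that every disagreement costs at least $\delta$ --- is false for tree metrics. Take four points on a path with edge weights $3,4,5$, so the induced distances are $3,4,5,7,9,12$, and let $D$ be this matrix with the single entry $12$ replaced by $12+\varepsilon$ for tiny $\varepsilon>0$. Then $\delta=1$, $D$ is not a tree metric, the $\ell_1$-optimum is at most $\varepsilon$ (restore the perturbed entry to $12\notin D$), yet any tree metric all of whose distances lie in the value set of $D$ must disagree with $D$ somewhere by at least $\delta=1$. Consequently $\|T_1-D\|_1\ge \delta\|T_1-D\|_0$ fails, and so does the inequality $\|T_1-D\|_1\ge\delta\|T_0-D\|_0$ that your chain (and the paper's) needs.

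Your proposed repair --- tracing the fact through the Agarwala-type reduction --- does not work either: the reduction replaces $D$ by $D+\centroid$, and the minimum gap between distinct entries of $D+\centroid$ is \emph{not} bounded below by $\delta$; in the example above, with pivot $\apivot=p_1$ the entries of $D+\centroid$ include values at distance exactly $\varepsilon$ from each other, so the ultrametric argument applied to the reduced instance only yields $\|T_1-D\|_1\ge \varepsilon\,\|T_1-D\|_0$, which is useless. So while the folklore height-rounding argument legitimately gives the per-disagreement lower bound of $\delta$ for ultrametrics (which is what \Cref{cor:l1ApxUltra} uses), it does not transfer to tree metrics, and this step needs a genuinely different justification (or different parameters) rather than the one-line analogy; in this respect your write-up is no less complete than the paper's, but the specific justification you sketch would fail.
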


\begin{restatable}{theorem}{inftyreduction}\label{theorem:best_tree_infty}
There exists a two-pass polynomial time semi-streaming algorithm that 6-approximates the $\ell_\infty$ Best-Fit Tree-Metrics problem.
\end{restatable}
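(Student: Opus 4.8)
The plan is to reduce $\ell_\infty$ Best-Fit Tree-Metrics to $\ell_\infty$ Best-Fit Ultrametrics via the classical pivot / Gromov-product reduction of Agarwala et al.~\cite{agarwala}, which costs a multiplicative factor of $3$, and to show that this reduction can be implemented in the semi-streaming model using a single extra pass. Composing it with the single-pass $2$-approximation for ultrametrics from \Cref{theorem:theorem_single_pass_md} then yields a $2\cdot 3 = 6$ approximation using $1+1=2$ passes over the stream and $\tilde O(n)$ space; this is qualitatively best possible since $\ell_\infty$ tree fitting is APX-hard.

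Concretely, I would fix an arbitrary pivot $r\in V$ (e.g.\ the first vertex observed). In the first pass, store the row $\{D(r,v)\}_{v\in V}$ together with $D_{\max}:=\max_{uv}D(uv)$; this needs $O(n)$ words. Fix a large constant $C$ computable from these quantities (say $C=4D_{\max}$), chosen so that all distances below are strictly positive. In the second pass, run the single-pass ultrametric $2$-approximation of \Cref{theorem:theorem_single_pass_md} on the virtual instance over $V'=V\setminus\{r\}$ with distances
\[
 D'(\{u,v\})\;=\;\tfrac{C}{2}\;-\;(u\mid v)_r,\qquad (u\mid v)_r:=\tfrac12\bigl(D(ru)+D(rv)-D(uv)\bigr),\quad u,v\in V',
\]
producing each value $D'(\{u,v\})$ online when the pair $\{u,v\}$ arrives (pairs incident to $r$ are dropped). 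The ultrametric algorithm therefore sees a legitimate stream of $\binom{n-1}{2}$ positive distances and uses only $\tilde O(n)$ space; let $U$ be the ultrametric it returns on $V'$.

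For the guarantee I would argue in both directions, as in~\cite{agarwala}. \emph{(Completeness.)} If $T^\star$ is an optimal tree for $D$ with $\varepsilon^\star=\|T^\star-D\|_\infty$, apply the same transformation to the tree distances $D_{T^\star}$ to obtain $U^\star$ on $V'$; in any tree the Gromov products satisfy $(u\mid v)_r\ge\min\{(u\mid w)_r,(v\mid w)_r\}$, so $U^\star$ is a genuine ultrametric, and since each entry of $U^\star-D'$ is a signed sum of three entries of $T^\star-D$ scaled by $\tfrac12$, we get $\|U^\star-D'\|_\infty\le\tfrac32\varepsilon^\star$, hence $\mathrm{OPT}_{\mathrm{ultra}}(D')\le\tfrac32\,\mathrm{OPT}_{\mathrm{tree}}(D)$. \emph{(Soundness.)} Conversely, I would turn $U$ into a tree $T$ on $V$: build the rooted tree with root $r$ whose branching depth (distance from $r$) for a pair $u,v$ equals $\max\{0,\tfrac{C}{2}-U(uv)\}$ — a consistent branching structure because $U$ obeys the ultrametric inequality — and place each leaf $u$ at depth $D(ru)$, clipping its pendant edge to length $0$ when $D(ru)$ is below its parent's depth. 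For every pair $u,v$ that is not clipped we have $(u\mid v)_r^T=\tfrac{C}{2}-U(uv)$ and $D_T(ru)=D(ru)$, whence a direct computation gives $D_T(uv)=D(uv)+2\bigl(U(uv)-D'(uv)\bigr)$ and thus $\|T-D\|_\infty\le 2\|U-D'\|_\infty$ on these pairs; the clipped pairs are handled by the case analysis of~\cite{agarwala}. Combining,
\[
 \|T-D\|_\infty\;\le\;2\|U-D'\|_\infty\;\le\;2\cdot 2\,\mathrm{OPT}_{\mathrm{ultra}}(D')\;\le\;6\,\mathrm{OPT}_{\mathrm{tree}}(D).
\]

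The main obstacle is the soundness step, namely recovering a tree from $U$: when $D$ is far from a metric the transformed distances $D'$ become large, the naive reconstruction wants negative pendant (or branching) lengths, and one must show — following the casework of~\cite{agarwala} — that the unavoidable clipping never increases the $\ell_\infty$ error beyond what is already present in $U$ relative to $D'$ (up to the stated factor). The remaining ingredients are routine: checking that the pivot $r$ and the constant $C$ are determined by the first pass alone, that the transformed second-pass stream is a valid positive-distance instance on $n-1$ points, and that feeding it to the single-pass ultrametric algorithm keeps the whole procedure within two passes, polynomial time, and $\tilde O(n)$ space.
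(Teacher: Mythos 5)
Your overall skeleton matches the paper's: pass one stores the pivot row and the maximum distance, pass two applies an Agarwala-style pivot transform and feeds the transformed stream to the single-pass $2$-approximation of \Cref{theorem:theorem_single_pass_md}, and a factor $3$ from the pivot reduction gives $6$. The execution, however, diverges in a way that leaves a genuine gap. The paper does not reconstruct a tree from a generic approximate ultrametric at all: it adds the centroid matrix $\centroid(i,j)=2\max_k D(\apivot,k)-D(\apivot,i)-D(\apivot,j)$ to $D$, runs the \emph{min-decrement} (MST-based) ultrametric algorithm, and simply subtracts $\centroid$ back. Two specific properties carry the proof there: (i) every entry of $(D+\centroid)$ in the pivot row equals $2m_a$, and because the output is the subdominant/MST ultrametric, the pivot row of $\tilde U^\apivot$ is \emph{exactly} $2m_a$, so $T^\apivot=\tilde U^\apivot-\centroid$ is automatically $\apivot$-restricted; and (ii) since the same matrix $\centroid$ is added and subtracted, the error transfers with no loss, $\|T^\apivot-D\|_\infty=\|\tilde U^\apivot-(D+\centroid)\|_\infty$, so the only losses are the factor $2$ of the ultrametric algorithm and the factor $3$ of Agarwala's Lemma~3.4 comparing the optimal $\apivot$-restricted tree to the optimal tree (Lemma~\ref{lemma:linf_tree_metric} and Claim~\ref{claim:in_agrawal}).

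Your route instead uses a half-scaled Farris transform $D'(uv)=C/2-(u\mid v)_r$ with an ad hoc constant $C$, discards the pivot, and then hand-builds a tree from whatever ultrametric $U$ the black box returns, paying $3/2$ in completeness and $2$ in soundness. The soundness step is exactly where the argument is incomplete: when $C/2-U(uv)$ exceeds $\min\{D(ru),D(rv)\}$ you must clip, and you defer this to ``the case analysis of~\cite{agarwala}.'' That case analysis is tied to their centroid-shifted correspondence (the one the paper actually uses), not to your rescaled transform with an arbitrary $C$ and an arbitrary $2$-approximate ultrametric; it cannot be invoked verbatim, and with a generic $U$ (not necessarily a min-decrement solution, and with no control on its pivot-adjacent behavior since you deleted the pivot from the instance) it is not established that clipping keeps $\|T-D\|_\infty\le 2\|U-D'\|_\infty$. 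This is precisely the difficulty the paper's choice of algorithm is designed to sidestep: the MST/min-decrement structure pins the pivot row and makes the back-transform legitimate without any per-pair clipping argument. There are also smaller unverified points in your write-up — positivity of $D'$ and of $U^\star$ requires relating $C$ to $\mathrm{OPT}$ and $D_{\max}$, and the reconstructed tree must be shown to have nonnegative edge lengths after clipping — but the missing clipping analysis is the substantive hole. If you want to complete your variant, the cleanest fix is to switch to the paper's transform $D+\centroid$ and exploit that the ultrametric subroutine is min-decrement, which removes the reconstruction step entirely.
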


\subsection{Technique Overview}
We provide a technical overview of the most technically novel contribution of our work, namely the results regarding the $\ell_0$ Best-Fit Ultrametrics algorithm in the semi-streaming model (more details in Section~\ref{section:l0ultra}).

\subsubsection{\texorpdfstring{Why previous $\ell_0$ approaches cannot be adapted}{Why previous l-0 approaches cannot be adapted}} \label{sec:othersDontWork}
In general, it is difficult to ensure a hierarchical structure while providing non-trivial approximation guarantees.
In Hierarchical Clustering research, such results usually rely on one of two standard approaches, namely the top-down (divisive) approach, and the bottom-up (agglomerative) approach.
In fact, with the exception of \cite{debarati}, results for $\ell_p$ Best-Fit Ultrametrics ($p<\infty$) \cite{mcgregor, charikar, debarati, cohen2022fitting, CharikarG24} all rely on the divisive approach.

\vspace{-3mm}

\paragraph{Non-divisive approaches.} The only relevant result applying a non-divisive approach is that of \cite{debarati}, which crucially relies on a large LP.
Unfortunately, it is not known how to solve such an LP in streaming.

\vspace{-3mm}

\paragraph{Divisive approaches.} A divisive algorithm starts with the root node (containing the whole $V$), computes its children (subsets $V'$ at height $h$) based on some division strategy, and recurses on its children. Different division strategies have been employed, with the most prominent ones using the solution to an LP, or attempting to satisfy a particular (usually randomly chosen) element called the pivot, or solving some flat clustering problem. In what follows, we discuss why existing division strategies do not work in our case.

\subparagraph{Correlation Clustering.} Perhaps the most straightforward approach is to solve a (flat) clustering problem; for each pair of vertices $u,v\in V'$, we ideally want them together if $h>D(uv)$, and apart otherwise.
This corresponds to the Correlation Clustering problem, which returns a clustering violating as few of our preferences as possible.
Unfortunately, this approach does not work for $\ell_0$ Best-Fit Ultrametrics, as certain choices that appear good locally (on a particular height $h$) may be catastrophic globally.

\subparagraph{The first result for $\ell_0$ Best-Fit Ultrametrics.} The authors of \cite{cohen2022fitting} overcame the shortcomings of Correlation Clustering as a division strategy by solving a particular flavor of it, called Agreement Correlation Clustering. This guaranteed further structural properties\footnote{Informally, when we solve Agreement Correlation Clustering we obtain a clustering $\mathcal{C}$ with all its clusters being dense and the property that there exists a near-optimal clustering $\mathcal{C'}$ such that every cluster of $\mathcal{C'}$ is a subset of some cluster in $\mathcal{C}$.} that could be leveraged to provide $O(1)$ approximation for $\ell_0$ Best-Fit Ultrametrics. However this approach is too strong to guarrantee in streaming, since one can recover adjacency information with black-box calls to an Agreement Correlation Clustering subroutine. This of course requires $\Theta(n^2)$ bits of memory, while in streaming we only have $\widetilde{O}(n)$.

\subparagraph{Other results for $\ell_0$ Best-Fit Ultrametrics.} The other results for $\ell_0$ Best-Fit Ultrametrics are pivot-based and do not work in our case. Indeed, one of them \cite{CharikarG24} is based on a large LP for which no streaming solution is known, while the other one \cite{cohen2022fitting} is combinatorial but with approximation factor $\Omega(\log{n})$.

\subsubsection{Our techniques}\label{section:techniques}

\paragraph*{$\ell_0$ Best-Fit Ultrametrics.}
Our streaming algorithm is a divisive algorithm.
In the divisive framework, each level of the tree is defined by a distinct distance from the input, which allows each level to be visualized as an instance of the correlation clustering problem.
In this instance, two vertices are connected if their distance is at most the threshold associated with that level; otherwise, they are not connected.
Following this, different layers of the ultrametric tree are built by repeatedly applying a division strategy in a top-down fashion.
Here, we highlight the techniques we develop to design a semi-streaming algorithm that uses only a single pass and computes an $O(1)$ approximation for the $\ell_0$ Best-Fit Ultrametrics problem.

\subparagraph*{Distances Summary.} 

The first fundamental challenge is identifying which distances should be preserved in the constructed ultrametric, given that the input may contain $\Omega(n^2)$ distinct distances.
A divisive algorithm may need to perform its division strategy on every level defined by such a distance (of course, sometimes it may decide not to divide anything); however, in the semi-streaming setting, we cannot even afford to store all these distances. 
Instead, we work with a compressed set of distances that effectively captures all important information. More formally, we focus on distances \(d\) for which there exists at least one vertex \(u\) such that the number of vertices with distance less than \(d\) from \(u\) is significantly smaller than the number of vertices with distance at most \(d\) from \(u\). Using this notion, we demonstrate that it is sufficient to consider only a near-linear number of distances to achieve a good approximation.

\subparagraph*{Agreement Sketches.} A key component in many (flat) clustering algorithms 
(including the first algorithm for Correlation~Clustering~\cite{DBLP:conf/focs/BansalBC02}, which inspired many others, such as \cite{cohen2021correlation, Assadi022, cohen2022fitting, abs-2404-06797}) 
involves comparing the set of neighbors of two vertices. While our division strategy also builds on such comparisons, both the hierarchical nature and streaming constraints of our setting present unique challenges.
In Correlation Clustering each vertex has only a single set of neighbors, however, in our hierarchical setting, each layer of the tree is associated with a different distance threshold, producing different sets of neighbors for a vertex. In the worst case, there can be $O(n)$ such sets, and building a sketch for each can require up to quadratic space. 

To address this,
we build a new sketch for a node only when its set of neighbors changes significantly.
The intuition here is that if the neighborhood of a node has not changed substantially, then a precomputed sketch for a nearby neighborhood will suffice. However, implementing this in a semi-streaming
setting, where distances between pairs can arrive in any arbitrary order, is challenging. Since we
cannot store the distances to all other nodes from a given node simultaneously, identifying significant
changes in a node’s neighborhood becomes difficult. To manage this, we develop a new technique that combines random sampling with a pruning strategy, ensuring that the overall space required to store all the sketches is \(\tilde{O}(n)\).

In this approach, we build each sketch by randomly sampling nodes. Assuming the neighborhood size has dropped substantially, we expect the correct sketch to reach a certain size. Notably, the set of neighbors of a node only shrinks as the distance decreases. Thus, for a specific weight threshold, if the sample (or sketch) size grows considerably, this indicates that the neighborhood has not changed much, so we disregard that weight threshold and delete the corresponding sample from the sketch. Specifically, for each node, we build and store sketches when the neighborhood size shrinks by a constant factor. Following this, we consider at most \(\log n\) different sizes, and storing the sketch for all sizes takes only polylogarithmic space for a node. Therefore, the total space required to store the sketches for all the nodes is bounded by \(\tilde{O}(n)\). Moreover, the sketches ensure that for each node \(u\) and each weight \(w\), there exists a weight \(w'\) such that the neighboring nodes of \(u\) at \(w\) and \(w'\) differ very little, and we have built a sketch corresponding to the neighboring set at weight \(w'\).

\subparagraph*{Across-Levels Correlations.} In divisive algorithms, while building a new level of the ultrametric tree, the recursions performed depend on the divisions computed at previous recursion levels.
In this sense, the divisive framework can be viewed as an adaptive adversary for the division strategy we need to perform.
This is not an issue when deterministic division strategies are used (e.g. as in \cite{cohen2022fitting}), but it becomes particularly problematic in our case, where we are forced to use random sketches because of the issue with the $\Omega(n^2)$ distinct distances.

The challenge here arises from the fact that for a given vertex we do not build a new sketch for each level of the tree. Instead, we only construct a sketch when the set of neighbors changes substantially. Consequently, multiple levels of the tree must reuse the same sketch, which increases the correlation among clusters at different levels. This makes it difficult to ensure concentration bounds when limiting the overall error.

To address this, our approach aims to ``limit'' the dependencies by ensuring our algorithm has only a logarithmic recursion depth (as opposed to the \(\Omega(n^2)\) recursion depth in straightforward divisive approaches). This allows us to afford independent randomness by using a new sketch for each recursion depth.
To reduce the recursion depth, we make the following observation: if the correlation clustering subroutine identifies a large cluster (e.g., containing a $0.99$ fraction of the vertices), we can detect this cluster without explicitly applying the correlation clustering algorithm (thus omitting the requirement of using a sketch). This is because all vertices within this cluster have large degrees, while those outside have very small degrees. Therefore, it suffices to identify the vertices with small degrees and remove them to generate the new cluster.
It is important to note that the degree calculation must consider the entire graph, not just the subgraph induced by the current cluster being considered. Otherwise, intra-recursion dependencies could be introduced, and thus the logarithmic recursion depth guarantee may not suffice.

\subparagraph*{Within-Level Correlations.} Correlation issues do not only occur vertically (across levels), but also horizontally (within the same level), as most algorithms for correlation clustering compute a cluster, and then recurse on the rest of the elements.
However in our case, such an adaptive construction may lead to the possibility of reusing sketches, making it difficult to ensure concentration.

We overcome these issues in several ways. 
First, we use these sketches to compute the agreement among vertices (i.e., computing the similarity between the set of neighbors of each pair of vertices) before we start constructing any clusters.
Finally we propose an algorithm that is relying solely on our agreement sketches and is decomposed into independent events, thus only requiring us to consult the sketches of each layer only once. By using the agreements precomputation and our proposed clustering algorithm we ensure concentration while limiting the error for all clusters within a level.

\subparagraph*{$S$-Structural Clustering.} Finally, as argued in \Cref{sec:othersDontWork}, we need a division strategy that is different from the existing Agreement Correlation Clustering of~\cite{cohen2022fitting}.
That is because it can be proven that solving Agreement Correlation Clustering on arbitrary subgraphs requires $\Omega(n^2)$ bits of memory.

Instead, we introduce $S$-Structural Clustering, which is inspired by Agreement Correlation Clustering. The key distinction is that now we require a clustering of $S$ to satisfy the structural properties, while also considering edges with only one endpoint in $S$.
This distinction is exactly what allows us to generalize our proposed algorithm to solve $S$-Structural Clustering by relying solely on the global neighborhoods of its vertices. 
Interestingly, the resulting time complexity of our general algorithm only depends on the size of the subgraph, as we compress all the necessary global information through our sketches.
Finally, we remark that both the construction of the sketches (Section~\ref{section:sketches}) and the introduction of the \( S \)-Structural Clustering (Section~\ref{sec:Sstructure}) are two novel contributions of our work and could be of independent interest.

\paragraph*{$\ell_0$ Best-Fit Tree-Metrics.}
In \cite{kipouridis2023fitting} it is shown how to reduce $\ell_0$ Best-Fit Tree-Metrics to $\ell_0$ Best-Fit Ultrametrics.
In this approach, however, one needs to create $n$ different instances of $\ell_0$ Best-Fit Ultrametrics, which is not feasible in the semi-streaming model.
In this work, we show that randomly solving a logarithmic number of these $n$ different instances suffices.

Our initial approach requires $3$ passes over the stream. One for a preprocessing step implicitly constructing the $\ell_0$ Best-Fit Ultrametrics instances, one to solve these instances (and post-process them to extract trees that solve the original problem), and a final one to figure which one of the logarithmically many trees we need to output (the one with the smallest cost is picked).

We further improve the number of passes to $2$, by eliminating the need for the final pass.
To do that, we note that there are many trees with ``tiny'' cost related to the input; let $A$ be the set containing these trees.
By triangle inequality, all trees in $A$ have ``small'' cost related to each other.
If we create a graph with the trees as nodes, and an edge between two nodes when the cost relative to each other is small, we then show that with high probability this graph contains a big clique.
Finally, we show that any node (corresponding to a tree) from a big clique is a good enough approximation to the original input, even if it is not in $A$.

\paragraph*{$\ell_\infty$ Results.}
Regarding $\ell_\infty$ Best-Fit Ultrametrics, we show that the existing exact algorithm \cite{farach1993robust} can be straightforwardly adapted to a $2$-pass semi-streaming algorithm.
Naturally, as the problem has been solved exactly, no research has focused on approximation algorithms.
In this work we show that the solution to a related problem ($\ell_\infty$ Min-Decrement Ultrametrics) $2$-approximates $\ell_\infty$ Best-Fit Ultrametrics.
Then we adapt the exact solution for $\ell_\infty$ Min-Decrement Ultrametrics \cite{farach1993robust} to obtain a single pass semi-streaming algorithm.

We also show that no single-pass semi-streaming algorithm can give a better-than-2 approximation, for otherwise we could compress any graph in $\widetilde{O}(n)$ space.
Together, these results completely characterize $\ell_\infty$ Best-Fit Ultrametrics in the semi-streaming setting, regarding the optimal number of passes and the optimal approximation factor.

For $\ell_\infty$ Best-Fit Tree-Metrics, there exists a reduction to Ultrametrics \cite{agarwala}, blowing up the approximation by a factor $3$.
Adapting it in the semi-streaming requires one additional pass through the stream.
Using it with our $2$-approximation for $\ell_\infty$ Best-Fit Ultrametrics (rather than with the exact algorithm, as done in \cite{agarwala}),
we need $2$ passes (instead of $3$).

\newcommand{\clusters}{\mathcal{C}}
\newcommand{\inputmatrix}{D}

\section{Preliminaries}
We start by presenting useful notations we employ throughout the text.
We use $uv$ to denote an unordered pair $\set{u,v}$.
We use the term distance matrix to refer to a function from $\binom{V}{2}$ to the non-negative reals.
Let $D$ be a distance matrix.
For easiness of notation, we use $w_{max} = \max_{uv}{D(uv)}$.
We slightly abuse notation and say that for any $u\in V$, $D(uu)=0$.
For $p \ge 1$, $\|D\|_p = \sqrt[p]{\sum_{uv \in \binom{V}{2}}|D(uv)|^p}$ is the $\ell_p$ norm of $D$.
We extend the notation for $p=0$.
In this case, $\|D\|_0$ denotes the number of pairs $uv$ such that $D(uv)\ne 0$.
We even say $\|D\|_0$ is the $\ell_0$ norm of $D$, despite $\ell_0$ not being a norm.

If $T$ is a tree and $u,v$ are two nodes in $T$, then we write $T(uv)$ to denote the distance between $u$ and $v$ in $T$.
An ultrametric is a metric $(V,D)$ with the property that $D(uv) \le \max\set{D(uw),D(vw)}$ for all $u,v,w\in V$.
It holds that $(V,D)$ is an ultrametric iff there exists a rooted tree $T$ spanning $V$ such that all elements of $V$ are in the leaves of $T$, the depth of all leaves is the same, and $D(uv)=T(uv)$ for all $u,v\in V$.
We call trees with these properties \emph{ultrametric trees}.

In the semi-streaming model, the input is again a distance matrix $D$ on a vertex set $V$.
Let $n=|V|$.
Our algorithm has $\widetilde{O}(n)$ available space, and the entries of $D$ arrive one-by-one, in an arbitrary order, as pairs of the form $(uv,D(uv))$.
For simplicity, we use the standard assumption that each distance $D(uv)$ fits in $O(\log{n})$ bits of memory.

We let $E_w$ be the set of pairs $uv$ such that $D(uv) \le w$. 
We define $N_w(u)$, the set of neighbors of $u$ at level $w$, to be the vertices $v$ such that $uv \in E_w$ (including $u$ itself), and the degree of $u$ at level $w$ to be $d_w(u) = |N_w(u)|$.
We even write $N(u)$ and $d(u)$ (instead of $N_w(u)$ and $d_w(u)$) when $E_w$ is clear from the context.
Given an ultrametric tree $T$, a cluster at level $w$ is a maximal set of leaves such that every pairwise distance in $T$ is at most $w$.
It is straightforward that a cluster at level $w$ corresponds to the set of leaves descending from a node of $T$ at height $w/2$.
Abusing notation, and only when it is clear from the context, we refer to this node as a cluster at level $w$ as well.

Regarding $\ell_0$, it is sufficient to focus on ultrametrics where the distances between nodes are also entries in $D$.
That is because if an ultrametric $T$ does not have this property, we can create an ultrametric $T'$ with this property such that $\|T'-D\|_0 \le \|T-D\|_0$ (folklore).
To do this, simply modify every distance $d$ in $T$ to the smallest entry in $D$ that is at least as large as $d$ (if no such entry in $D$ exists, then we modify $d$ to be the maximum entry in $D$).

\newcommand{\sizes}{\mathbb{S}}
\newcommand{\counter}{counter}

\section{\texorpdfstring{$\ell_0$ Ultrametrics}{l-0 Ultrametrics}} \label{section:l0ultra}
In this section, we show how to $O(1)$-approximate $\ell_0$ Best-Fit Ultrametrics with a single pass in the semi-streaming model. Formally we show the following.

\lzerofit*

Our algorithm consists of two main phases. In the streaming phase, we construct efficient sketches 
that capture the essential information of the input matrix $D$. That is, we store a compressed representation of $D$, denoted as $\widetilde{D}$, which, unlike $D$, has a reduced size of $\widetilde{O}(n)$ rather than $O(n^2)$ values (hereafter called \emph{weights}). 
Yet, we will show in Section~\ref{section:sketches} that for every weight $w \in D$ and every $u \in V$, a weight $\tilde{w} \in \widetilde{D}$ is stored, such that $N_w(u)$ and $N_{\tilde{w}}(u)$ are roughly the same. This guarantee enables us to approximate both the size of a neighborhood and the size of the intersection for two different neighborhoods.

The second step is a post-stream process that carefully utilizes the precomputed sketches while addressing the adaptivity challenges discussed in \Cref{section:techniques}.
In \Cref{section:structuralClustering} we show how to compute the $S$-Structural Clustering subroutine, which we will use as our division strategy.
In \Cref{section:algorithm} we present our main algorithm, which uses this subroutine and the distances summary as black-boxes to construct the ultrametric tree. 
Finally, in \Cref{section:LowerBounds}, we establish the necessity of approximation in the streaming setting by proving that computing an optimal solution requires $\Omega(n^2)$ bits of memory.

\subsection{Construction of Sketches}\label{section:sketches}
This section outlines the process for constructing sketches that enable our algorithm's implementation. For now we consider large neighborhoods of size $\Omega (\log^4 n)$. 
While a similar approach was used in~\cite{cohen2021correlation}\footnote{In ~\cite{cohen2021correlation}, the authors claim polylogarithmic size sketches for each vertex. However, we are unable to verify this. Specifically, the random set is constructed by selecting each vertex with probability \(\min\left\{\frac{a \log n}{\beta j}, 1\right\}\), where \(a\) is a constant. Since \(j\) is at most \(O\left(\frac{\log n}{\beta}\right)\), the probability of selecting a vertex is at least \(\min\{\Omega(a), 1\}\), which is a constant. Thus, each random set is of size \(\Omega(n)\). Therefore, for a vertex \(v\) with \(|N(v)|=\Omega(n)\), the sketch size will be of size \(\Omega(n)\).}, for the problem of correlation clustering, the challenge here is different. Unlike correlation clustering, where each vertex has only a single set of neighbors, each layer of the tree in our context is associated with a different distance threshold. Thus each varying threshold can produce a different set of neighbors for a vertex. In the worst case, there can be \(n\) such sets, and building a sketch for each changing set of neighbors for each vertex can require up to quadratic space (or even cubic, if implemented naively).

We denote the weight of an edge $e=uv$ by $w(e)=D(uv)$.
Each sketch is constructed for a specific vertex with a predetermined size chosen from the set $\sizes = \{n, \frac{n}{(1+\zeta)}, \frac{n}{(1+\zeta)^2}, \dots, \log^4 n\}$, where $\zeta$ is a small constant parameter to be adjusted.
Each sketch will encapsulate a neighborhood of the vertex of size roughly $s$, and allow us to compare the common intersection of two different neighborhoods. 
Let \(w^v_s\) be the largest weight for which \(\frac{s}{1+\zeta}<|N_{w^v_s}(v)| \le s\). We call size \(s\) \emph{relevant} for vertex $v$ if such \(w^v_s\) exists.

To obtain the sketches, for each $s' \in \sizes$, we start by generating a random subset \(R_{s'} \subseteq [n]\) by sampling each vertex from $V$ independently with probability $\log^2 n/s'$, prior to processing the stream. 
For each vertex $v$, each relevant size $s$, and each $s'$ satisfying $\frac{1}{2}s \leq s' \leq s$, we define a sketch \(\Sk^{v}_{s,s'}\). 
Every sketch consists of (i) an estimate of the parameter \(w^v_s\), denoted by \(\tilde{w}^v_s\), and (ii) an (almost) random sample of \(v\times N_{\tilde{w}^v_s}(v) \) of size \(O(\log^2 n)\), along with the weight of each sampled edge. To achieve this, we store a collection of edges \(C_1^v, \dots, C_\ell^v \subseteq v\times N_{\tilde{w}^v_s}(v)\), where all edges in \(C_i^v\) have the same weight $w^v_i$, which we also store alongside \(C_i^v\), and let $C_\ell^v$ be the collection corresponding to the largest weight. Furthermore, we ensure that the overall size of all collections \(\sum_{i \in [\ell]} |C_i^v|\) is  \(O(\log^3 n)\) bits.

The purpose of incorporating two size parameters, $s$ and $s'$ into the sketch is to enable comparisons of neighborhoods that have slightly different, but relatively close, sizes. Yet, for simplicity, the reader may assume $s=s'$ for the following construction and claims. 
We now describe the process of constructing a specific sketch $\Sk^{v}_{s,s'}$ given the input stream: 
\begin{enumerate}
    \item Initialize $\counter=0$, \(w_{m} = 0\).
    \item If \(e\) is not incident on \(v\), continue to the next edge. 
    \item Else, if \(w_{m} \neq 0\) and \(w(e) \ge w_{m}\), continue to the next edge. 

    \item Else if $u\notin R_{s'}$, where $e=(u,v)$, continue to the next edge. 

    \item Otherwise proceed as follows:
    \begin{enumerate}
        \item If there is a collection of edges \(C_i^v\) with an associated weight of \(w(e)\), add the edge \(e\) to \(C_i^v\). Otherwise, create a new collection \(C_i^v\) containing the edge \(e\) alongside \(w(e)\).

        \item Increase $\counter$ by $1$.

        \item If $\counter > (1+\frac{\zeta}{2}) \frac{s}{s'} \log^2 n$, delete $C_\ell^v$ the collection with the largest weight associated with it, set $w_{m}=w^v_\ell$ and $\counter = \counter - \abs{C_\ell^v}$.
    \end{enumerate}
\end{enumerate}

After processing all the edges, output \({\Sk}^{v}_{s,s'} = \bigcup_{i \in [\ell]} (C_i^v \times \{w^v_i\})\), furthermore if $s=s'$ we let \(\tilde{w}^v_s = \max_{i \in [\ell]} w^v_i\) and call it a $\emph{governing weight}$ of the sketches parametrized by $v$ and $s$. Namely, $\tilde{w}^v_s$ is the weight associated with the neighborhood a sketch parametrized by $v$ and $s$ is encapsulating. The next claim shows that this sketches can be stored in semi-streaming settings.

\begin{claim}
\label{clm:sketch2size}
The sketches \({\Sk}^{v}_{s,s'}\), where  \(v \in V\) and \(s \in \sizes\), can be constructed and stored in \(O(n \log^4 n)\) bits. 
\end{claim}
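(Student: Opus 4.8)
The plan is to count, for a single vertex $v$, the total number of bits needed to store all sketches $\Sk^v_{s,s'}$ over all relevant sizes $s \in \sizes$ and all $s'$ with $\tfrac12 s \le s' \le s$, and then multiply by $n$. First I would observe that $\sizes$ has only $O(\log n)$ elements, since it is a geometric sequence with ratio $(1+\zeta)$ running from $n$ down to $\log^4 n$; hence there are $O(\log n)$ choices of $s$, and for each such $s$ the number of admissible $s' \in \sizes$ with $\tfrac12 s \le s' \le s$ is $O(1/\log(1+\zeta)) = O(1)$ (a constant depending only on $\zeta$). So for each $v$ there are only $O(\log n)$ sketches in total.

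Next I would bound the size of a single sketch $\Sk^v_{s,s'}$. By construction the sketch outputs $\bigcup_{i\in[\ell]} (C_i^v \times \{w^v_i\})$, and the algorithm maintains the invariant that $\sum_i |C_i^v| = \counter \le (1+\tfrac{\zeta}{2})\tfrac{s}{s'}\log^2 n$ at every point after the deletion step in~5(c); since $s/s' \le 2$, this is $O(\log^2 n)$ edges. Each stored item is an edge (two vertex labels, $O(\log n)$ bits) together with its weight ($O(\log n)$ bits by the standard assumption in the Preliminaries), and we also store the associated weight $w^v_i$ once per collection $C_i^v$ — but there are at most $O(\log^2 n)$ collections, so this is absorbed. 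Thus a single sketch occupies $O(\log^2 n \cdot \log n) = O(\log^3 n)$ bits, matching the bound asserted in the construction. I would also note the auxiliary state ($\counter$, $w_m$, and the governing weight $\tilde w^v_s$) is only $O(\log n)$ bits and is dominated.

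Putting it together: each vertex has $O(\log n)$ sketches, each of $O(\log^3 n)$ bits, for $O(\log^4 n)$ bits per vertex, hence $O(n\log^4 n)$ bits overall. I would add a remark that the random sets $R_{s'}$ themselves need not be stored explicitly — they can be generated from a short seed, or one can afford to store them since $\mathbb{E}|R_{s'}| = n \cdot \log^2 n / s' = O(\log^2 n / (\log^2 n / n) \cdot \ldots)$; actually the cleanest route is to note $|\sizes| = O(\log n)$ sets, each of expected size $O(n \log^2 n / s')$, which summed over the geometric sequence of $s'$ values is $O(n)$ in expectation (dominated by the smallest $s' = \log^4 n$, giving $O(n/\log^2 n)$, times $O(\log n)$ sets), well within $O(n\log^4 n)$; alternatively store a $\mathrm{poly}\log n$-bit seed for a suitable hash family. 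The main thing to be careful about — the only place where the argument is not purely mechanical — is justifying that the $\counter$ invariant genuinely caps the number of stored edges \emph{throughout} the stream (not merely at the end): one must check that whenever $\counter$ exceeds the threshold in step~5(c), a whole collection $C_\ell^v$ is deleted, and argue that a single collection cannot by itself be so large that deleting it fails to restore the invariant. This follows because each increment of $\counter$ in step~5(b) is by exactly $1$, so $\counter$ can exceed the threshold by at most $1$ before a deletion fires, and any nonempty collection has size $\ge 1$; hence after deletion $\counter$ is back at or below the threshold, giving the uniform $O(\log^2 n)$ bound on stored edges at all times.
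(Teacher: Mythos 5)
Your proposal is correct and follows essentially the same route as the paper's proof: there are $O(\log n)$ sketches per vertex, each storing $O(\log^2 n)$ edges and hence $O(\log^3 n)$ bits, giving $O(n\log^4 n)$ overall, with the sets $R_{s'}$ accounted for separately (the paper simply charges them $O(n\log^3 n)$ bits). Your extra verification of the counter invariant and the expected-size accounting for $R_{s'}$ are additional care beyond what the paper writes, but not a different argument.
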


\begin{proof}
First, note that the total space required to store all \( R_{s'} \), where \( s' \in \sizes \), is \( O(n\log^3 n) \) bits. Next, each sketch \({\Sk}^{v}_{s,s'}\) stores at most \( O(\log^2 n) \) edges, thus requires \( O(\log^3 n) \) bits. Since we build \( O(\log n) \) different sketches for each vertex, the overall space required is \( O(n\log^4 n) \).
\end{proof}

The next claim demonstrates that for every $w^v_s$ there is a sketch with governing weight $\tilde{w}^v_s$ such that $\abs{N_{\tilde{w}^v_s}(v)}$ is a good approximation to $\abs{N_{w^v_s}(v)}$.

\begin{claim}
\label{clm:sketch1nb}
With high probability, for each vertex \(v\) and each relevant size \(s \in \sizes \), we have \( (1-\zeta)\abs{N_{w^v_s}(v)} \leq \abs{N_{\tilde{w}^v_s}(v)} \leq (1+\zeta)^2\abs{N_{w^v_s}(v)} \).
\end{claim}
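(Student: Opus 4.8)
Let me think about what needs to be shown. We have a relevant size $s$ for vertex $v$, meaning there's a largest weight $w^v_s$ with $\frac{s}{1+\zeta} < |N_{w^v_s}(v)| \le s$. We build a sketch $\Sk^v_{s,s}$ (taking $s'=s$), whose governing weight is $\tilde{w}^v_s = \max_i w^v_i$ over the surviving collections. We want $(1-\zeta)|N_{w^v_s}(v)| \le |N_{\tilde{w}^v_s}(v)| \le (1+\zeta)^2 |N_{w^v_s}(v)|$.

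The core idea: the sketch construction processes edges incident to $v$ in a way that maintains an (almost) random sample of the smallest-weight edges at $v$, capped at roughly $(1+\frac{\zeta}{2})\log^2 n$ sampled edges (with $s'=s$), and it only ever *shrinks* the weight window by raising $w_m$. Since $R_s$ samples each vertex with probability $\log^2 n / s$, the expected number of sampled neighbors among $N_w(v)$ is $|N_w(v)|\log^2 n / s$. When $|N_w(v)| \approx s$, this is $\approx \log^2 n$; when $|N_w(v)|$ drops by a constant factor, the sample shrinks proportionally. The capping threshold $(1+\frac{\zeta}{2})\log^2 n$ is chosen so that: as the stream arrives, $w_m$ keeps getting pushed down until the number of sampled edges with weight $< w_m$ (equivalently, sampled vertices in $N_{<w_m}(v)$) is at most the threshold; the governing weight $\tilde{w}^v_s$ is then essentially the largest weight $w$ such that $|R_s \cap N_w(v)| \le (1+\frac{\zeta}{2})\log^2 n$ — up to the edge-insertion-order subtleties handled by keeping whole equal-weight collections together.

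Here is the plan, step by step. First, I would argue a structural fact about the final state of the construction: since $w_m$ only increases, the output sketch captures exactly the sampled edges $\{uv : u \in R_s, D(uv) < w_m^{\text{final}}\}$ together with (partially) the collection at weight $w_m^{\text{final}}$, and the governing weight $\tilde w^v_s$ is the largest weight among surviving collections, which satisfies $|R_s \cap N_{\tilde w^v_s}(v)| \le (1+\frac\zeta2)\log^2 n + (\text{one equal-weight block})$ while $|R_s \cap N_{w}(v)|$ for the next larger weight exceeds the threshold — I need to be a little careful because a whole collection $C^v_\ell$ is deleted at once, but the deleted collection has size $O(\log^2 n)$ w.h.p. (it's a sample of one weight-class of neighbors), so the slack is absorbable into the $\zeta$ factors. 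Second, I would invoke a Chernoff bound: for a fixed vertex $v$ and a fixed threshold weight $w$, $|R_s \cap N_w(v)|$ is a sum of independent Bernoulli$(\log^2 n/s)$ variables with mean $\mu_w = |N_w(v)|\log^2 n/s$. Since all relevant neighborhoods here have size $\Omega(\log^4 n)$, we have $\mu_w = \Omega(\log^2 n)$, so $|R_s \cap N_w(v)| = (1\pm o(1))\mu_w$ — actually $(1\pm \zeta')\mu_w$ for a suitably small $\zeta'$ depending on $\zeta$ — with probability $1 - n^{-\Omega(1)}$, uniformly over the (at most $n$) relevant weights $w$ and (at most $n$) vertices $v$ by a union bound, since there are only $\mathrm{poly}(n)$ (vertex, weight) pairs. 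Third, I would combine: by the threshold logic, $\tilde w^v_s$ is the weight where the sample size crosses $\approx (1+\frac\zeta2)\log^2 n$; translating via the Chernoff estimate $|R_s\cap N_w(v)| \approx |N_w(v)|\log^2 n/s$ gives $|N_{\tilde w^v_s}(v)| \approx (1+\frac\zeta2) s$, and since $\frac{s}{1+\zeta} < |N_{w^v_s}(v)| \le s$, the ratio $|N_{\tilde w^v_s}(v)| / |N_{w^v_s}(v)|$ lands in $[1-\zeta, (1+\zeta)^2]$ after choosing the constants (the extra $(1+\zeta)$ factor on the upper side comes from the $|N_{w^v_s}(v)|$ possibly being as small as $s/(1+\zeta)$, and the lower side from $|N_{w^v_s}(v)|$ possibly equal to $s$ while $\tilde w^v_s \ge w^v_s$ so $|N_{\tilde w^v_s}(v)| \ge |N_{w^v_s}(v)|$ — wait, I should double check the direction, since $\tilde w^v_s$ might be smaller or larger than $w^v_s$; the monotonicity $N$ is nondecreasing in $w$ plus the threshold calibration is what pins it down on both sides).

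The main obstacle I anticipate is the interaction between the \emph{discrete, order-dependent} deletion rule (collections are deleted whole, and which collection is "largest weight" depends on arrival order, with the counter possibly overshooting by up to a full block before a deletion triggers) and the \emph{clean probabilistic statement} we want about $|R_s \cap N_w(v)|$. I would handle this by showing the overshoot/undershoot is at most the size of one equal-weight collection, bounding that size by $O(\log^2 n)$ w.h.p. via the same Chernoff argument (each weight-class of $N_s(v)$ has at most $s$ vertices, so its sample has size $O(\log^2 n)$ in expectation — though a single weight class could in principle be large, I may need to also use that we only ever hold $O(\log^2 n)$ sampled edges total, capping the deleted block automatically), and then absorbing this additive $O(\log^2 n)$ slack into the multiplicative $\zeta$-gap, which is legitimate precisely because $\mu_w = \Omega(\log^2 n)$ by the large-neighborhood assumption and because $\zeta$ is a fixed constant while the slack is a lower-order term once we also set the counter threshold with the $(1+\frac\zeta2)$ cushion. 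A secondary subtlety is making the "(almost) random sample" claim precise — the surviving collections are exactly $\{uv : u\in R_s,\ D(uv) < \tilde w^v_s\} \cup (\text{part of the } \tilde w^v_s \text{ block})$, which is a genuine uniform sample of $R_s \cap N_{\tilde w^v_s}(v)$ conditioned on $R_s$, so no extra randomness-reuse issue arises within a single sketch; I would state this explicitly as it is needed for later claims about intersection estimation, even though Claim \ref{clm:sketch1nb} itself only needs the neighborhood-size bound.
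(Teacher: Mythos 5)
Your overall mechanism is the same as the paper's: pin the governing weight down via the counter threshold, then apply Chernoff bounds at deterministic weights with a union bound over the $n\cdot O(\log n)$ pairs $(v,s)$ (the paper phrases this as two bad-event/contradiction cases, $B_1,B_2$ and $B$, rather than an explicit characterization of the final sketch state). However, two steps in your plan are genuinely flawed. First, the slack-absorption argument is unsound: a deleted collection can have size up to the counter cap itself, i.e.\ up to $(1+\frac{\zeta}{2})\log^2 n$, so ``one equal-weight block'' is of the \emph{same} order as the threshold, not a lower-order term; absorbing it into the constant $\zeta$ would only give $\abs{N_{\tilde{w}^v_s}(v)} = O(s)$, not the claimed $(1+\zeta)^2\abs{N_{w^v_s}(v)}$. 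Fortunately no slack is needed, but you must prove the exact statement rather than the padded one: every deletion removes the largest-weight collection at a moment when all stored edges have weight at most that collection's weight, so every deleted weight $w$ has $\abs{R_s\cap N_w(v)}$ strictly above the threshold; and since deleted weights only decrease, every sampled edge with weight below the final $w_m$ survives (there is no surviving ``partial'' collection at $w_m$), so the final counter equals $\abs{R_s\cap N_{\tilde{w}^v_s}(v)}$ and is at most $(1+\frac{\zeta}{2})\log^2 n$ exactly.

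Second, the lower-bound direction ($\abs{N_{\tilde{w}^v_s}(v)}\ge(1-\zeta)\abs{N_{w^v_s}(v)}$ when $\tilde{w}^v_s< w^v_s$) is never actually argued. Your calibration ``$\abs{N_{\tilde{w}^v_s}(v)}\approx(1+\frac{\zeta}{2})s$'' is only an upper bound: the governing weight need not sit at a sample-count crossing (neighborhood sizes can jump), and nothing in your steps rules out $\tilde{w}^v_s$ landing far below $w^v_s$ simply because no vertex in the gap was sampled. The missing ingredient is exactly the paper's combination of $B_1$ and $B_2$: if more than a $\zeta$-fraction of $N_{w^v_s}(v)$ had distance in $(\tilde{w}^v_s,w^v_s]$, then w.h.p.\ at least one such vertex is in $R_s$ (the gap has size at least $\zeta s/(1+\zeta)$, hence $\Omega(\log^2 n)$ expected samples), and w.h.p.\ at most $(1+\frac{\zeta}{2})\log^2 n$ edges of $v\times N_{w^v_s}(v)$ are sampled, which guarantees no collection at weight at most $w^v_s$ is ever deleted; hence that sampled gap edge survives, contradicting maximality of $\tilde{w}^v_s$. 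This can be recovered from your uniform-concentration step by applying it to the deterministic weight $w^-$ defined as the largest weight with $\abs{N_{w^-}(v)}<(1-\zeta)\abs{N_{w^v_s}(v)}$ (lower tail at $w^v_s$ minus upper tail at $w^-$), but it has to be added explicitly; you flagged the direction issue and left it unresolved, so as written the proposal proves only the upper half of the claim.
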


\begin{proof}
First, we consider the case where \(\tilde{w}^v_s \leq w^v_s\), that is, \(N_{\tilde{w}^v_s}(v) \subseteq N_{w^v_s}(v)\). We prove that with high probability, \(|N_{w^v_s}(v) \setminus N_{\tilde{w}^v_s}(v)| \leq \zeta |N_{w^v_s}(v)|\). 

Otherwise, if \(|N_{w^v_s}(v) \setminus N_{\tilde{w}^v_s}(v)| > \zeta |N_{w^v_s}(v)|\), we claim that at least one of the following two bad events must occur. 
We define the first bad event as \(B_1\), where no edge is sampled from \((v\times N_{w^v_s}(v)) \setminus (v\times N_{\tilde{w}^v_s}(v))\). We define the second bad event as \(B_2\), where more than \((1+\frac{\zeta}{2}) \log^2 n\) edges are sampled from \(v\times N_{w^v_s}(v)\). If neither of these bad events occurs, then at least one edge \(e\) is sampled from \((v\times N_{w^v_s}(v) )\setminus (v\times N_{\tilde{w}^v_s}(v))\), where \(\tilde{w}^v_s < w(e) \le w^v_s\), and the associated collection of $w(e)$ is not deleted. Consequently, \(e\) should survive, contradicting the claim that \(\tilde{w}^v_s\) is the maximum weight of an edge that is sampled and not deleted. Since $s$ is relevant and $\frac{s}{1+\zeta} \leq \abs{N_{w^v_s}(v)} \leq s$, both events $B_1$ and $B_2$ occurs with probability at most \(1/n^{10}\) using Chernoff bound.

Next, we consider the case where \(\tilde{w}^v_s > w^v_s\), and thus \(N_{w^v_s}(v) \subset N_{\tilde{w}^v_s}(v)\). We prove that with high probability, \(|N_{\tilde{w}^v_s}(v)| \leq (1+\zeta)^2|N_{w^v_s}(v)|\).

Otherwise, if \(\abs{N_{\tilde{w}^v_s}(v)} > (1+\zeta)^2 \abs{N_{w^v_s}(v)}\), we claim that the following bad event \(B\) must occur. We define \(B\) as the event where at most \((1+\frac{\zeta}{2}) \log^2 n\) edges are sampled from \(v\times N_{\tilde{w}^v_s}(v)\). If more than \( (1+\frac{\zeta}{2}) \log^2 n\) edges are sampled from \(v\times N_{\tilde{w}^v_s}(v)\), then $\tilde{w}^v_s$ cannot be obtained. According to Chernoff bound, since \(\abs{N_{\tilde{w}^v_s}(v)} > (1+\zeta)^2 \abs{N_{w^v_s}(v)} > (1+\zeta)s\), \(B\) occurs with probability at most \(1/n^{10}\). Therefore, the probability that \(\abs{N_{\tilde{w}^v_s}(v)} > (1+\zeta)^2 \abs{N_{w^v_s}(v)}\) is at most \(1/n^{10}\).

As there are $n$ different choices for $v$, and $O(\log n)$ choices for $s$, the claim holds for all $w^v_s$ w.h.p. 
\end{proof}

We now extend this result for every weight $w$, and show how to obtain a sketch that is a good approximation to $N_w(v)$.

\begin{claim}\label{clm:sketch1nb2}
For each vertex \(v\) and each weight $w$ with $\abs{N_w(v)} \geq \log^4 n$, we can report a sketch associated with size $s$ and governing weight $\tilde{w}^v_s$, such that with high probability, 
$\frac{\abs{N_{\tilde{w}^v_{s}}(v)}}{1+5\zeta} \leq |N_{w}(v)| \leq \frac{|N_{\tilde{w}^v_s}(v)|}{1-\zeta}$
\end{claim}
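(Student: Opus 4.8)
The plan is to reduce the general weight $w$ to the "relevant size" case already handled by Claim~\ref{clm:sketch1nb}, and then glue together the two approximation guarantees. Fix a vertex $v$ and a weight $w$ with $|N_w(v)| \ge \log^4 n$, and set $m := |N_w(v)|$. First I would locate the unique size $s \in \sizes$ with $\frac{s}{1+\zeta} < m \le s$; since $m \ge \log^4 n$ and the geometric grid $\sizes$ ranges down to $\log^4 n$, such an $s$ exists, and moreover $s$ is \emph{relevant} for $v$ — indeed $w^v_s$, the largest weight with $\frac{s}{1+\zeta} < |N_{w^v_s}(v)| \le s$, is well defined, and by definition $w^v_s \ge w$ (because $w$ itself witnesses the inequality $\frac{s}{1+\zeta} < |N_w(v)| \le s$). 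Because neighborhoods only grow with the threshold, $N_w(v) \subseteq N_{w^v_s}(v)$, and both sets have size strictly between $\frac{s}{1+\zeta}$ and $s$, so $|N_{w^v_s}(v)| \le s < (1+\zeta) m$ and trivially $|N_{w^v_s}(v)| \ge m$. Thus
\[
m \le |N_{w^v_s}(v)| < (1+\zeta)\, m .
\]

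Next I would invoke Claim~\ref{clm:sketch1nb}, which gives — w.h.p., simultaneously over all $v$ and all relevant $s$ — the sketch with governing weight $\tilde{w}^v_s$ satisfying $(1-\zeta)|N_{w^v_s}(v)| \le |N_{\tilde{w}^v_s}(v)| \le (1+\zeta)^2 |N_{w^v_s}(v)|$. This is the sketch I will report. Chaining the two displays: on the one hand $|N_{\tilde{w}^v_s}(v)| \ge (1-\zeta)|N_{w^v_s}(v)| \ge (1-\zeta) m$, which rearranges to $|N_w(v)| = m \le \frac{|N_{\tilde{w}^v_s}(v)|}{1-\zeta}$, giving the claimed upper bound on $|N_w(v)|$ directly. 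On the other hand $|N_{\tilde{w}^v_s}(v)| \le (1+\zeta)^2 |N_{w^v_s}(v)| < (1+\zeta)^2(1+\zeta) m = (1+\zeta)^3 m$, so $m > \frac{|N_{\tilde{w}^v_s}(v)|}{(1+\zeta)^3}$. It then remains to check the elementary inequality $(1+\zeta)^3 \le 1 + 5\zeta$ for $\zeta$ a sufficiently small constant (e.g. $\zeta \le 1/2$ gives $(1+\zeta)^3 = 1 + 3\zeta + 3\zeta^2 + \zeta^3 \le 1 + 3\zeta + 2\zeta \le 1+5\zeta$), which yields $|N_w(v)| = m > \frac{|N_{\tilde{w}^v_s}(v)|}{1+5\zeta}$, the claimed lower bound.

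The success probability is inherited verbatim from Claim~\ref{clm:sketch1nb} — we union-bound over the $n$ vertices and $O(\log n)$ sizes only once, and the selection of $s$ from $w$ is deterministic — so no additional failure events are introduced. I do not expect any genuine obstacle here: the only mild subtlety is making sure the size $s$ picked is actually relevant (so that Claim~\ref{clm:sketch1nb} applies to it), which is exactly why the hypothesis $|N_w(v)| \ge \log^4 n$ is needed — it guarantees $s$ lands inside the grid $\sizes$ rather than below its bottom element $\log^4 n$ — and why the witness $w$ certifies relevance of $s$. The rest is bookkeeping of the multiplicative slack, with the constant $5\zeta$ chosen generously precisely to absorb the $(1+\zeta)^3$ blow-up without a tight calculation.
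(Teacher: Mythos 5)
There is a genuine gap, and it is exactly at the word ``report.'' Your argument correctly establishes the \emph{existence} of a good sketch: picking $s\in\sizes$ with $\frac{s}{1+\zeta}<\abs{N_w(v)}\le s$, noting that $w$ itself witnesses relevance of $s$ and that $w^v_s\ge w$, and chaining with Claim~\ref{clm:sketch1nb} to get $\frac{\abs{N_{\tilde{w}^v_s}(v)}}{(1+\zeta)^3}\le\abs{N_w(v)}\le\frac{\abs{N_{\tilde{w}^v_s}(v)}}{1-\zeta}$ is sound, and in fact this very chain appears inside the paper's proof. But the claim is algorithmic: post-stream, the only information available is the stored sketches (governing weights plus $O(\log^2 n)$ sampled edges with their weights) — the neighborhood $N_w(v)$ and its size $m$ are \emph{not} stored. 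Your selection rule ``locate the unique $s$ with $\frac{s}{1+\zeta}<m\le s$'' presupposes knowledge of $m=\abs{N_w(v)}$, which is precisely the quantity the claim is supposed to estimate; the reduction is circular as a procedure. Nor can you fix this by simply taking the sketch whose governing weight is the closest one above $w$: the estimate $\tilde{w}^v_s$ of $w^v_s$ can fall on either side of $w$, and if the neighborhood size jumps sharply just above $w$, the immediate governing weight above $w$ may belong to a size class whose neighborhood is far larger than $N_w(v)$, so that sketch would violate the claimed bounds.

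The paper's proof supplies exactly the missing ingredient: a selection rule computable from the sketches alone. It looks at the immediate governing weights $\tilde{w}^v_+$ and $\tilde{w}^v_-$ above and below $w$, counts how many sampled edges in the $\tilde{w}^v_+$ sketch have weight exceeding $w$, and reports that sketch if the count is below $4\zeta\log^2 n$, otherwise the $\tilde{w}^v_-$ sketch. Two extra Chernoff arguments then show that (i) a small count certifies $\abs{N_{\tilde{w}^v_+}(v)}\le(1+5\zeta)\abs{N_w(v)}$, and (ii) if $\abs{N_{\tilde{w}^v_+}(v)}$ is much larger than $\abs{N_w(v)}$ the rule w.h.p.\ falls back to $\tilde{w}^v_-$, in which case your existence chain guarantees the good sketch sits at or below $w$, hence $\abs{N_w(v)}\le\frac{\abs{N_{\tilde{w}^v_-}(v)}}{1-\zeta}$, while $\frac{\abs{N_{\tilde{w}^v_-}(v)}}{1+5\zeta}\le\abs{N_w(v)}$ holds trivially since $N_{\tilde{w}^v_-}(v)\subseteq N_w(v)$. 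To repair your write-up you would need to add such an identification procedure and its accompanying concentration bounds; the arithmetic with $(1+\zeta)^3\le 1+5\zeta$ is fine but is only half of the statement.
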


\begin{proof}
Let $\tilde{w}^v_{+}$ (resp. $\tilde{w}^v_{-}$) be the immediate governing weights above (resp. below) $w$ within all the sketches of $v$. We count the number of sampled edges in the sketch associated with the weight $\tilde{w}^v_{+}$ of weight greater than $w$. If there are less than $4\zeta \log^2 n$ such edges then we report this sketch, and otherwise we report the sketch associated with the weight $\tilde{w}^v_{-}$.

If there are less than  $4\zeta \log^2 n$ edges with weight greater than $w$ in the sketch, then using Chernoff bound w.h.p we have $\abs{N_{\tilde{w}^v_{s}}(v)} \leq (1+5\zeta) \abs{N^v_{w}(v)}$.

However, if $\abs{N_{\tilde{w}^v_{+}}(v)} > (1+5\zeta) \abs{N^v_{w}(v)}$, that is, $\abs{N_{\tilde{w}^v_{+}}(v)} - \abs{N^v_{w}(v)} > 5 \zeta \abs{N^v_{w}(v)}$, then using Chernoff bound we deduce that w.h.p we report the sketch associated with $\tilde{w}^v_{-}$.
We now prove that this sketch is a good approximation. Let \(s \in \sizes \) be such that \(\frac{s}{1+\zeta}<|N^v_{w}(v)| \le s\). By definition, $s$ is relevant for $v$, and 
$\frac{|N_{w^v_s}(v)|}{1+\zeta}< |N^v_{w}(v)| \leq |N_{w^v_s}(v)|$. Following Claim~\ref{clm:sketch1nb}, there exists a sketch $(\tilde{w}^v_s, {\Sk}^{v}_s)$, such that w.h.p. \((1-\zeta)\abs{N_{w^v_s}(v)} \leq \abs{N_{\tilde{w}^v_s}(v)} \leq (1+\zeta)^2\abs{N_{w^v_s}(v)}\). Thus, w.h.p. there exist a sketch $\tilde{w}^v_s$ such that, \(\frac{|N_{\tilde{w}^v_s}(v)|}{(1+\zeta)^3} \leq |N_{w}(v)| \leq \frac{|N_{\tilde{w}^v_s}(v)|}{1-\zeta}\). Since $\frac{\abs{N_{\tilde{w}^v_{+}}(v)}}{1+5\zeta} >  \abs{N^v_{w}(v)}$, it must hold that $|N_{w}(v)| \leq \frac{|N_{\tilde{w}^v_{-}}(v)|}{1-\zeta}$. 
Overall, the reported governing weight satisfies w.h.p, $\frac{\abs{N_{\tilde{w}^v_{s}}(v)}}{1+5\zeta} \leq |N_{w}(v)| \leq \frac{|N_{\tilde{w}^v_s}(v)|}{1-\zeta}$.
\end{proof}

\vspace{3mm}
\noindent

We conclude this section by providing another data structure for storing the nearest \(2\log^4 n\) neighbors for each vertex $v$, denoted by \(N_{\close}(v)\). This will allow us to compare neighborhoods of small size.
The implementation of \(N_{\close}(v)\) is done using a priority queue with predefined and fixed size \(2\log^4 n\). We add every edge incident to $v$ to the priority queue \(N_{\close}(v)\) together with the associated edge. This leads to the following claim.

\begin{claim}
\label{clm:sketch3}
 For each vertex \(v\), \(N_{\close}(v)\) can be stored in \(O(\log^5 n)\) bits, and it contains the nearest \(2\log^4 n\) neighbors of \(v\).
\end{claim}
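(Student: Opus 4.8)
\textbf{Proof proposal for Claim~\ref{clm:sketch3}.}
The plan is to argue the two assertions—space and correctness—separately, both being direct consequences of the fixed-capacity priority-queue implementation. For the space bound, I would observe that $N_{\close}(v)$ holds at most $2\log^4 n$ records, and each record consists of a neighbor identifier (which fits in $O(\log n)$ bits, since $|V| = n$) together with the weight of the corresponding edge (which fits in $O(\log n)$ bits by the standing assumption that every entry of $D$ fits in $O(\log n)$ bits). Hence the stored data occupies $O(\log^4 n \cdot \log n) = O(\log^5 n)$ bits; the auxiliary bookkeeping of a heap (child/parent pointers, or an array layout) is of the same order, so the total is $O(\log^5 n)$ bits as claimed.

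For correctness, I would set up a straightforward loop-invariant/induction argument over the arriving stream. The invariant is: after the algorithm has processed the first $k$ edges incident to $v$, the queue $N_{\close}(v)$ contains exactly the $\min\{k,\, 2\log^4 n\}$ edges incident to $v$ of smallest weight among those $k$ (ties broken by any fixed rule, e.g.\ by neighbor id). The base case $k=0$ is vacuous. For the inductive step, when a new incident edge $e=uv$ arrives: if the queue is not yet full, $e$ is inserted and the invariant is preserved trivially; if it is full, $e$ is inserted and the current maximum-weight element is evicted, so the queue again retains the $2\log^4 n$ smallest-weight incident edges seen so far. Since the input is a complete distance matrix on $\binom{V}{2}$, every vertex $v$ has exactly $n-1$ incident edges, all of which appear in the stream; once the stream is exhausted we have $k = n-1 \ge 2\log^4 n$ (for $n$ larger than a suitable constant, which we may assume, treating small $n$ trivially), so $N_{\close}(v)$ ends up holding precisely the $2\log^4 n$ neighbors of $v$ of smallest distance—i.e.\ its nearest $2\log^4 n$ neighbors.

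There is no substantial obstacle here; the only points requiring a word of care are (i) fixing a deterministic tie-breaking rule so that "nearest $2\log^4 n$ neighbors" is well defined when several neighbors share the same $D$-value, and (ii) noting the harmless edge case $n-1 < 2\log^4 n$, in which the queue simply stores all neighbors of $v$ and the claim holds a fortiori. I would state these explicitly but not dwell on them.
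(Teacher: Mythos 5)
Your proposal is correct and follows essentially the same route as the paper, which justifies the claim directly from the fixed-capacity priority-queue implementation: at most $2\log^4 n$ stored edges with identifiers and weights of $O(\log n)$ bits each give $O(\log^5 n)$ bits, and evicting the largest weight on overflow keeps exactly the smallest-distance neighbors. Your explicit loop invariant and the tie-breaking/small-$n$ remarks just spell out details the paper leaves implicit.
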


In this section, we have outlined the construction of sketches with a total space of $\tilde{O}(n)$, showing that for each vertex $v$ and weight $w$, we can report a sketch associated with  governing weight \(\tilde{w}^v_s \) that with high probability is a random sample of a neighborhood $N_{\tilde{w}^v_s}(v)$ that is roughly of the same size as $N_w(v)$. 
In the next section, we will demonstrate how these sketches can be utilized to estimate the size of the symmetric difference in a way that supports the algorithm’s requirements, justifying the need for maintaining several sketches for each choice of $v$ and $s$.

\subsection{Structural Clustering} \label{section:structuralClustering}
\newcommand{\agreez}[1]{A^3(#1)}

In this section, we introduce an algorithm that requires a single pass over the input stream to solve $S$-Structural Clustering. This extends the notion of Agreement Correlation Clustering from \cite{cohen2022fitting}, to which we refer as $V$-Structural Clustering. Our semi-streaming algorithm hinges on the key idea that clusters should be formed from vertices that share almost similar neighborhoods.
We emphasize that our algorithm is also applicable in the standard RAM (non-streaming) setting and runs in near-linear $\widetilde{O}(|S|^2)$ time, for $S\subseteq V$, improving the $\Omega(|V|^3)$ time algorithm previously known for $V$-Structural Clustering.

We begin our presentation in \Cref{section:structural_properties} with an algorithm solving $V$-Structural Clustering, designed to be adapted in the semi-streaming model.
Then, in \Cref{sec:Sstructure} we show how our proposed algorithm could also be extended to compute $S$-Structural Clustering, which is our division strategy for constructing ultrametrics in \Cref{section:algorithm}.
Finally, in \Cref{section:semi_agreement_cc} we show how to actually implement these algorithms in the semi-streaming model, by utilizing our sketches outlined in \Cref{section:sketches}.

\subsubsection{Algorithm for V-Structural Clustering (Agreement Correlation Clustering)}\label{section:structural_properties}

We begin by solving the Structural Clustering problem for the entire vertex set $V$.
Our graph is $(V,E=E_W)$ for some weight $W$. First we present the definitions of \textit{agreement} and \textit{heavy} vertices as in~\cite{cohen2021correlation}. 
The parameters $\beta$ and $\epsilon$ that appear in the following definitions are sufficiently small constants. Furthermore we denote by $\triangle$ the symmetric difference between two sets, that is, $A \triangle B = A \setminus B \cup B \setminus A$.

\begin{definition}[agreement]
\label{definition:agreement}
Two vertices $u, v$ are in $\beta$-agreement iff $\abs{N(u)\triangle N(v)} < \beta \max\{d(u), d(v)\}$, which means that $u, v$ share most of their neighbors. $A(u)$ is the set of vertices in $\beta$-agreement with $u$.
\end{definition}

\begin{definition}[heavy]
\label{definition:heavy}
We say that a vertex $u$ is $\epsilon$-heavy if $\abs{N(u) \setminus A(u)} < \epsilon d(u)$, which means that most of its neighbors are in agreement with $u$. Denote by $H(u)$ the $\epsilon$-heaviness indicator of vertex $u$.
\end{definition}

Computing the $\beta$-agreement set $A(u)$ of a vertex and its $\epsilon$-heaviness indicator $H(u)$ is a crucial part of the algorithm. Normally, both can be computed exactly by applying the definitions, even using a deterministic algorithm. However, in the semi-streaming model, we can only approximate $A(u)$ and $H(u)$ with high probability. 
In \Cref{section:semi_agreement_cc} we show that, using the sketches outlined in Section~\ref{section:sketches}, we can achieve a sufficient approximation that allows us to solve the Structural Clustering for $V$ with high probability.

We allow the following relaxations. Let $A(v)$ be a set containing all vertices that are in $0.8$ agreement with $v$ and no vertices that are not in $\beta$ agreement with $v$. Similarly let $\agreez{v}$ be a set containing all vertices that are in $2.4\beta$-agreement with $v$ and no vertices that are not in $3\beta$-agreement with $v$. And finally let $H(v)$ be a method that returns true if $v$ is $\epsilon$-heavy and false if $v$ is not $1.2\epsilon$-heavy.

With these tools and definitions at our disposal, we can introduce Algorithm~\ref{algorithm:simplified_structural_clustering}. It is important to note that this algorithm is executed, using the sketches alone, post stream process. Given the respective sketches, the time complexity of the algorithm is $\widetilde{O}(|V|^2)$.

\begin{algorithm}[H]
\caption{$V$-Structural-Clustering}
\label{algorithm:simplified_structural_clustering}
\begin{algorithmic}[1]
    \For{$v \in V$}
        \If{$H(v)$ \textbf{and} $v$ is not already included in an existing cluster}
            \State Create a new cluster $\agreez{v}$ \label{alg:cluster_creation}
        \EndIf
    \EndFor
    \State Create singleton clusters for all remaining vertices.
\end{algorithmic}
\end{algorithm}

We next show that Algorithm~\ref{algorithm:simplified_structural_clustering} is guaranteed to return a set of disjoint clusters that satisfy the required structural properties. We are referring to the special properties of $V$-Structural Clustering, which are expressed in terms of the definitions of \textit{important} and \textit{everywhere dense} groups as in~\cite{cohen2022fitting}.

\begin{definition}[important group]
\label{definition:important_group}
Given a correlation clustering instance, we say that
a group of vertices $C$ is important if for any vertex $u\in C$, $u$ is adjacent to at least $(1-\epsilon)$ fraction of the vertices in $C$ and has at most $\epsilon$ fraction of its neighbors outside of $C$.
\end{definition}

\begin{definition}[everywhere dense]
\label{definition:everywhere_dense}
Given a correlation clustering instance, we say that a group of vertices $C$ is everywhere dense if for any vertex $u\in C$, $u$ is adjacent to at least $\frac{2}{3}|C|$ vertices of $C$.
\end{definition}

Lemma~\ref{theorem:structural_properties} formally outlines the supplementary properties required for a correlation clustering algorithm to qualify as structural, demonstrated in the context of Algorithm~\ref{algorithm:simplified_structural_clustering}.

\begin{lemma} [structural properties]
\label{theorem:structural_properties}
    Suppose $\beta = 5\epsilon(1+\epsilon)$ for a small enough parameter $\epsilon \leq 1/95$. Let $\clusters$ be the set of clusters returned by Algorithm~\ref{algorithm:simplified_structural_clustering}. Then, for any important group of vertices $C' \subseteq V$, there is a cluster $C \in \clusters$ such that $C' \subseteq C$, and $C$ does not intersect any other important groups of vertices disjoint from $C'$. Moreover, every
    cluster $C \in \clusters$ is everywhere dense.
\end{lemma}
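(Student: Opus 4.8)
We need to show two things about the clustering $\clusters$ output by Algorithm~\ref{algorithm:simplified_structural_clustering}: (1) every important group $C'$ is contained in a single cluster $C \in \clusters$ that does not hit any other disjoint important group, and (2) every cluster in $\clusters$ is everywhere dense.

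Let me think through the structure.

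The algorithm iterates over vertices, and when it finds a heavy vertex $v$ not yet clustered, it creates the cluster $A^3(v)$ (the relaxed $\approx 3\beta$-agreement set). Singletons for the rest.

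Key facts I'd want to establish:

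**Fact A: Agreement is "transitive-ish" / important groups are tightly agreeing.** If $C'$ is important, then for any $u, v \in C'$, their neighborhoods overlap a lot. Indeed $u$ has $\geq (1-\epsilon)|C'|$ neighbors in $C'$ and $\leq \epsilon$ fraction outside, so $d(u) \leq \frac{|C'|}{1-\epsilon}$ roughly, and $|N(u) \triangle N(v)|$ is small compared to $\max\{d(u),d(v)\}$. Need to compute: $N(u) \triangle N(v)$ is contained in (vertices outside $C'$ that are neighbors of $u$ or $v$) $\cup$ (vertices in $C'$ non-adjacent to $u$ or $v$). First part $\leq 2\epsilon d(u)$-ish, second part $\leq 2\epsilon |C'|$-ish. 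So $|N(u)\triangle N(v)| \leq O(\epsilon) \max\{d(u), d(v)\}$. With $\beta = 5\epsilon(1+\epsilon)$ this should make $u,v$ be in $\beta$-agreement, hence in particular any $u \in C'$ is heavy (all its $C'$-neighbors, which are $\geq (1-\epsilon)$ fraction, are in agreement with it — need to double check $A(u) \supseteq C'$ and the non-$C'$ neighbors are few).

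**Fact B: What the algorithm does to an important group.** Since every $u \in C'$ is heavy, the first vertex $v \in C'$ reached by the loop (if not already clustered) creates $A^3(v) \supseteq C'$ (need: $C' \subseteq A^3(v)$, using that members of $C'$ are in $2.4\beta$-agreement with $v$, which follows from Fact A with a cleaner constant). So either $C'$ gets absorbed into $A^3(v)$ for that $v \in C'$, or some earlier-created cluster $A^3(w)$ already contained $v$ — and then I need to argue it contains all of $C'$. This is the delicate chaining step.

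**Fact C: Two important groups that share a vertex are equal / the created cluster $A^3(v)$ can't straddle two disjoint important groups.** If $u$ is in two important groups $C_1', C_2'$ then... actually the cleaner statement is the one in the lemma: the cluster $C$ containing $C'$ doesn't intersect other important groups disjoint from $C'$. The reason: $A^3(v)$ consists of vertices in $\leq 3\beta$-agreement with $v$; if $v \in C'$ and $x$ is in a disjoint important group $C''$, then $x$ and $v$ can't be in $3\beta$-agreement because their neighborhoods are nearly disjoint (each neighborhood concentrated on its own group, which are disjoint). Need $|N(v) \triangle N(x)| \geq 3\beta \max\{d(v), d(x)\}$. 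Since $N(v)$ is $\geq(1-\epsilon)$-concentrated in $C'$ and $N(x)$ is $\geq(1-\epsilon)$-concentrated in $C''$, and $C' \cap C'' = \emptyset$, the symmetric difference is nearly all of $N(v) \cup N(x)$, which is $\geq \max\{d(v),d(x)\}(2 - O(\epsilon))$, way more than $3\beta$. Good.

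**Fact D: Everywhere dense.** Singletons are trivially everywhere dense. For a cluster $C = A^3(v)$ with $v$ heavy: take any $u \in C$. $u$ is in $3\beta$-agreement with $v$, so $|N(u) \triangle N(v)| < 3\beta \max\{d(u),d(v)\}$. Also $v$ heavy means most of $v$'s neighbors are in $\beta$-agreement with $v$. I want: $u$ is adjacent to $\geq \frac{2}{3}|C|$ vertices of $C$. Bound $|C| = |A^3(v)|$: all of $A^3(v)$ is in $3\beta$-agreement with $v$, so each such vertex $w$ has $|N(w) \triangle N(v)| < 3\beta\max\{d(w),d(v)\}$. This forces $A^3(v)$ to be "almost inside $N(v)$" and $N(v)$ to be "almost inside each $N(w)$" — chase the inclusions to get $|C \setminus N(u)|$ small relative to $|C|$, using the numeric relationships between $d(v)$, $d(u)$, $|C|$. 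The $\epsilon \leq 1/95$ and $\beta = 5\epsilon(1+\epsilon)$ constants are presumably tuned exactly so that the final fraction clears $2/3$.

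The plan is: first prove Fact A (important group $\Rightarrow$ pairwise $\beta$-agreement, in fact $2.4\beta$ with room to spare, and every member heavy), carefully tracking the constant so it beats the $0.8$ and $2.4\beta$/$3\beta$ thresholds in the definitions of the relaxed $A(\cdot)$, $A^3(\cdot)$, $H(\cdot)$. Then Fact C (disjoint important groups are $\geq 3\beta$ far apart in symmetric difference) to guarantee no straddling. Then combine: the loop, upon reaching the first unclustered $v \in C'$, forms $A^3(v) \supseteq C'$; and if $v$ was already clustered, it was put into some $A^3(w)$ with $w$ heavy and $v \in A^3(w)$, and I must show $C' \subseteq A^3(w)$ too — argue $w$ is within $3\beta$-agreement of $v \in C'$ hence (by a triangle-inequality-type bound on symmetric differences, $|N(a)\triangle N(c)| \leq |N(a)\triangle N(b)| + |N(b) \triangle N(c)|$) within $O(\beta)$ of every member of $C'$; need the constants to still land inside $3\beta$, which is where I expect to have to be most careful or possibly to invoke that $w$ itself must lie in (or very near) $C'$. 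Finally prove Fact D for everywhere-denseness by the inclusion-chasing above.

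\textbf{Main obstacle.} The hard part is the chaining/consistency argument: ensuring that when an important group $C'$ is partially "eaten" by a cluster $A^3(w)$ created from a vertex $w$ reached earlier in the loop, \emph{all} of $C'$ ends up in that same cluster and the cluster still doesn't straddle a disjoint important group. Symmetric-difference distance is only \emph{approximately} a metric for our purposes (the $\max\{d(u),d(v)\}$ normalization is not symmetric-friendly), so propagating agreement from $v$ to $w$ to all of $C'$ costs constant factors each hop, and the whole point of the carefully chosen relation $\beta = 5\epsilon(1+\epsilon)$, $\epsilon \le 1/95$, together with the slack between $0.8$-agreement vs.\ $\beta$-agreement and $2.4\beta$ vs.\ $3\beta$ in the relaxed oracles, is to absorb exactly these losses. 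I expect the bulk of the proof to be verifying that these constants close, i.e.\ that two hops of agreement-loss plus the oracle slack stays under the $3\beta$ threshold, and symmetrically that the everywhere-dense fraction stays above $2/3$.
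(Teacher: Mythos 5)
Your skeleton matches the paper's: your Fact A is Claim~\ref{claim:importants_agree} (members of an important group are in $0.8\beta$-agreement, hence every such member is heavy), your Fact C is Claim~\ref{claim:disjoints_disagree}, and your Fact D is Claim~\ref{claim:clusters_dense}. But the step you yourself flag as the main obstacle --- your Fact B, the chaining step --- is exactly where your proposed argument fails, and the missing idea is the paper's Claim~\ref{claim:clusters_disjoint}. Suppose some $u\in C'$ already lies in a cluster $A^3(h)$ created from an earlier heavy pivot $h$, which need not lie in or near $C'$. To force another $v\in C'$ into that same cluster, the relaxed oracle only guarantees $v\in A^3(h)$ when $v$ is in $2.4\beta$-agreement with $h$. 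Your route (triangle inequality through $u$) gives only $|N(h)\triangle N(v)| < 3\beta\max\{d(h),d(u)\} + 0.8\beta\max\{d(u),d(v)\}$, i.e.\ roughly $3.8\beta$ times the larger degree even before the renormalization losses of Fact~\ref{fact:agreement_facts}; this exceeds $3\beta$, let alone $2.4\beta$, so the constants do not close, and your fallback (``$h$ must lie in or very near $C'$'') is not true in general. The paper's fix exploits the heaviness of \emph{both} pivots: since $h$ is (at most $1.2\epsilon$-)heavy and in $3\beta$-agreement with $u$, more than half of $u$'s neighbors are in $\beta$-agreement with $h$, and the same holds for $v$ (heavy because it belongs to an important group); hence there is a common witness $x$ in $\beta$-agreement with both, and triangling through $x$ costs only $\beta+\beta$, which after degree renormalization stays at $2.4\beta$. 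That is Claim~\ref{claim:clusters_disjoint}: if the $A^3$-sets of two heavy vertices intersect at all, each pivot lies in the other's $A^3$-set.

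The same omission breaks your non-straddling argument: your Fact C only rules out a vertex $x$ of a disjoint important group $C''$ joining a cluster whose pivot lies in $C'$, but the cluster containing $C'$ may be $A^3(h)$ with $h\notin C'\cup C''$, and nothing in your sketch prevents $x$ from being in $3\beta$-agreement with $h$. The paper handles this by applying Claim~\ref{claim:clusters_disjoint} twice: from $u\in C'\cap A^3(h)$ and $x\in C''\cap A^3(h)$, with $u,x,h$ all heavy, one gets $h\in A^3(u)$ and $h\in A^3(x)$, so $A^3(u)$ and $A^3(x)$ intersect, forcing $u,x$ into $3\beta$-agreement and contradicting Claim~\ref{claim:disjoints_disagree}. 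Finally, your everywhere-dense sketch (``inclusion chasing'') is also not enough by itself: the paper's Claim~\ref{claim:clusters_dense} needs, besides the agreement inclusions, an edge-counting argument between $B=N(h)\cap C$ and $C\setminus B$ to bound $|C|$ by roughly $d(h)$, again using the pivot's heaviness. In short, the overall decomposition is right, but the pivotal intersection lemma with its heaviness-witness trick is absent, and the specific way you propose to close the chaining gap would fail numerically.
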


In order to prove Lemma~\ref{theorem:structural_properties}, we require the following claims. The next fact follows immediately from the definition of agreement and will be utilized in the subsequent proofs of the claims (cf.~\cite{cohen2021correlation}).

\begin{fact}\label{fact:agreement_facts}
    If $u,v$ are in $i\beta$-agreement, for some $1 \leq i < \frac{1}{\beta}$, then
    \[
    (1-i\beta)d(u) \leq d(v) \leq \frac{d(v)}{1-i\beta}
    \]
\end{fact}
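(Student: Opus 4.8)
If $u,v$ are in $i\beta$-agreement, for some $1 \leq i < \frac{1}{\beta}$, then
$(1-i\beta)d(u) \leq d(v) \leq \frac{d(u)}{1-i\beta}$.

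Wait, let me re-read. The statement says $(1-i\beta)d(u) \leq d(v) \leq \frac{d(v)}{1-i\beta}$ — but that has a typo; the right side should be $\frac{d(u)}{1-i\beta}$. Let me think about what the correct statement is and prove it.

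The plan is to unpack the definition of $i\beta$-agreement and turn the bound on the symmetric difference into a two-sided comparison of degrees; note that the right-hand side of the displayed inequality should read $d(u)/(1-i\beta)$, and that is the statement I will establish (together with its symmetric counterpart). First I would record the trivial inclusions $N(u)\setminus N(v)\subseteq N(u)\triangle N(v)$ and $N(v)\setminus N(u)\subseteq N(u)\triangle N(v)$, so that by Definition~\ref{definition:agreement} both $\abs{N(u)\setminus N(v)}$ and $\abs{N(v)\setminus N(u)}$ are strictly less than $i\beta\max\{d(u),d(v)\}$. Combining with the partition $d(u)=\abs{N(u)\cap N(v)}+\abs{N(u)\setminus N(v)}$ gives $\abs{N(u)\cap N(v)}>d(u)-i\beta\max\{d(u),d(v)\}$, and symmetrically with $u$ and $v$ swapped.

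Next I would split into the cases $d(u)\ge d(v)$ and $d(u)<d(v)$. In the first case $\max\{d(u),d(v)\}=d(u)$, so the previous line yields $\abs{N(u)\cap N(v)}>(1-i\beta)d(u)$; since $d(v)\ge \abs{N(u)\cap N(v)}$ (as $N(u)\cap N(v)\subseteq N(v)$), we obtain $d(v)>(1-i\beta)d(u)$, the claimed lower bound, while the upper bound $d(v)\le d(u)\le d(u)/(1-i\beta)$ is immediate since $0<1-i\beta\le 1$ — here the hypothesis $i<1/\beta$ is exactly what guarantees $1-i\beta>0$. In the second case the roles of $u$ and $v$ are interchanged: $\max\{d(u),d(v)\}=d(v)$ gives $\abs{N(u)\cap N(v)}>(1-i\beta)d(v)$, hence $d(u)>(1-i\beta)d(v)$, i.e. $d(v)<d(u)/(1-i\beta)$, and the lower bound $d(v)>d(u)\ge(1-i\beta)d(u)$ is again immediate. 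Combining the two cases gives $(1-i\beta)d(u)\le d(v)\le d(u)/(1-i\beta)$.

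There is essentially no obstacle here: the only point requiring a moment of care is that the symmetric difference is normalized by $\max\{d(u),d(v)\}$ rather than by a fixed degree, which is precisely what forces the short case analysis above; everything else is elementary arithmetic. The same bookkeeping also yields $\abs{N(u)\cap N(v)}>(1-i\beta)\max\{d(u),d(v)\}$, the form in which this fact is most convenient to invoke in the subsequent claims.
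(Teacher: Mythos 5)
Your proof is correct and is exactly the immediate-from-the-definition argument the paper intends (the paper states this fact without proof, citing that it follows directly from Definition~\ref{definition:agreement}), and you rightly note the typo: the upper bound should read $d(u)/(1-i\beta)$, which is indeed the form used later (e.g.\ in Claim~\ref{claim:clusters_dense}). Nothing is missing; the case split on $\max\{d(u),d(v)\}$ and the observation $N(u)\cap N(v)\subseteq N(v)$ are all that is needed.
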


\begin{restatable}{claim}{clustersDisjoint}
\label{claim:clusters_disjoint}
Suppose $(1-3\beta-1.2\epsilon)(1-3\beta) > \frac{1}{2}$.
Assume $u_1, u_2$ are two vertices for which $H(u_1)$ and $H(u_2)$ both return true. 
If $u_2$ is not part of $\agreez{u_1}$, then the sets $\agreez{u_1}$ and $\agreez{u_2}$ are disjoint.
\end{restatable}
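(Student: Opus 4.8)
My plan is to argue by contradiction. Suppose there exists some vertex $x \in \agreez{u_1} \cap \agreez{u_2}$. The key idea is that the symmetric difference is "almost transitive": if $x$ is in $3\beta$-agreement with both $u_1$ and $u_2$ (which it is, by the relaxed definition of $\agreez{\cdot}$), then $u_1$ and $u_2$ should share a large fraction of their neighbors, forcing $u_2$ into $\agreez{u_1}$ — contradicting the hypothesis that $u_2 \notin \agreez{u_1}$. To make this precise I need to show $\abs{N(u_1) \triangle N(u_2)}$ is small relative to $\max\{d(u_1), d(u_2)\}$; the target is to push it below $3\beta \max\{d(u_1),d(u_2)\}$ so that $u_2 \in \agreez{u_1}$ by the ``no vertices that are not in $3\beta$-agreement'' clause.

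**Key steps.**

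First I would record what membership of $x$ in $\agreez{u_1}$ and $\agreez{u_2}$ gives us: since $\agreez{v}$ contains no vertex outside $3\beta$-agreement with $v$, we get $\abs{N(x) \triangle N(u_1)} < 3\beta \max\{d(x), d(u_1)\}$ and similarly for $u_2$. Second, I would use \Cref{fact:agreement_facts} to relate the degrees $d(x), d(u_1), d(u_2)$ to each other up to factors of $1/(1-3\beta)$, so that I can convert all the $\max$'s into a common reference degree, say $d(u_1)$. Third, apply the triangle inequality for symmetric difference, $\abs{N(u_1)\triangle N(u_2)} \le \abs{N(u_1)\triangle N(x)} + \abs{N(x)\triangle N(u_2)}$, and combine with the degree comparisons to get a bound of the form $\abs{N(u_1)\triangle N(u_2)} \le c\beta \cdot \max\{d(u_1),d(u_2)\}$ for some explicit constant $c$. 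The hypothesis $(1-3\beta-1.2\epsilon)(1-3\beta) > \frac{1}{2}$ — and the heaviness of $u_1,u_2$, which guarantees $x$ is actually a \emph{neighbor} of $u_1$ and $u_2$ (not just in agreement), since heavy vertices have almost all their neighbors in agreement and the cluster $\agreez{v}$ sits inside $N(v)$ — is what I expect to be needed to ensure the relevant degrees cannot be too different and that the constant $c$ works out to be at most $3$. I would then conclude $u_2 \in \agreez{u_1}$, the desired contradiction.

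**Main obstacle.**

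The delicate point is the bookkeeping of which $\max\{d(\cdot),d(\cdot)\}$ appears in each symmetric-difference bound and converting everything to a single reference degree without losing too much. In particular, $x$ could a priori have degree much smaller than $d(u_1)$ or $d(u_2)$; this is exactly why we need $H(u_1)$ and $H(u_2)$ to both hold — heaviness ensures $x \in \agreez{u_i} \subseteq N(u_i)$ and more importantly that $x$ is in genuine $3\beta$-agreement so that \Cref{fact:agreement_facts} applies and pins $d(x)$ within a $(1-3\beta)$ factor of $d(u_i)$. I expect the condition $(1-3\beta-1.2\epsilon)(1-3\beta) > \frac12$ to enter precisely at the step where I bound how small $d(x)$ (or the overlap) can be relative to $\max\{d(u_1),d(u_2)\}$, ensuring the accumulated error stays under the $3\beta$ threshold. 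Getting this constant chase tight — rather than just "some constant" — is the part that requires care; the rest is routine triangle-inequality manipulation.
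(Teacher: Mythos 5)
There is a genuine gap. Your plan is to run the triangle inequality directly through the common vertex $x$: since $x\in\agreez{u_1}\cap\agreez{u_2}$ only guarantees $3\beta$-agreement with each $u_i$, this route gives at best
$\abs{N(u_1)\triangle N(u_2)} < 3\beta\max\{d(u_1),d(x)\}+3\beta\max\{d(x),d(u_2)\}$, which after the degree corrections from \Cref{fact:agreement_facts} is on the order of $6\beta\max\{d(u_1),d(u_2)\}$ — nowhere near small enough. Moreover, your target threshold is wrong: to force $u_2\in\agreez{u_1}$ you need $u_2$ to be in $2.4\beta$-agreement with $u_1$ (the ``contains all vertices in $2.4\beta$-agreement'' clause); the ``no vertices that are not in $3\beta$-agreement'' clause is only a necessary condition for membership, so showing $3\beta$-agreement would not yield the contradiction. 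So the constant chase you defer to cannot work out as stated, no matter how carefully the degrees are tracked.

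The missing idea is an intermediate vertex found by counting, which is where the heaviness of $u_1,u_2$ and the hypothesis $(1-3\beta-1.2\epsilon)(1-3\beta)>\frac{1}{2}$ are actually used (not, as you guess, to pin $d(x)$ or to place $x$ in $N(u_i)$). Because $u_1$ is $1.2\epsilon$-heavy and in $3\beta$-agreement with the common vertex $v$, at least $(1-3\beta-1.2\epsilon)d(u_1)$ of $v$'s neighbors are in genuine $\beta$-agreement with $u_1$, and similarly for $u_2$; by \Cref{fact:agreement_facts} each count is at least $(1-3\beta-1.2\epsilon)(1-3\beta)d(v)>\frac{1}{2}d(v)$, so by pigeonhole inside $N(v)$ there is a single vertex $w$ in $\beta$-agreement with \emph{both} $u_1$ and $u_2$. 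Running the triangle inequality through $w$ (two $\beta$-terms, not two $3\beta$-terms) gives $\abs{N(u_1)\triangle N(u_2)}<\beta\max\{d(u_1),d(w)\}+\beta\max\{d(w),d(u_2)\}\le 2.4\beta\max\{d(u_1),d(u_2)\}$, which places $u_2$ in $\agreez{u_1}$ and yields the contradiction. Without this two-step detour your argument cannot close.
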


\begin{proof}
Let $v$ be a vertex common in both $\agreez{u_1}$ and $\agreez{u_2}$.
Since $u_1$ is $1.2\epsilon$-heavy and in $3\beta$-agreement with $v$, we have that $v$ has at least $(1-3\beta-1.2\epsilon)d(u_1)$ neighbors that are in $\beta$-agreement with $u_1$.
Similarly, $v$ has $(1-3\beta-1.2\epsilon)d(u_2)$ neighbors that are in $\beta$-agreement with $u_2$.
Using Fact~\ref{fact:agreement_facts}, both are non-less than $(1-3\beta-1.2\epsilon)(1-3\beta)d(v)$ and by assumption this is greater than $\frac{1}{2}d(v)$. Consequently, there is a vertex $w$ in $\beta$-agreement with both $u_1$ and $u_2$.

Now, by the triangle inequality we get that $u_2$ is contained in $\agreez{u_1}$:
\begin{align*}
|N(u_1)\triangle N(u_2)|
&< |N(u_1)\triangle N(w)| + |N(w)\triangle N(u_2)| \\
&< \beta\max\{d(u_1), d(w)\} + \beta\max\{d(w), d(u_2)\} \leq 2.4\beta \max\{d(u_1), d(u_2)\}
\end{align*}

\noindent Where the last inequality follows from Fact~\ref{fact:agreement_facts} and that $\beta \leq \frac{1}{6}$.
\end{proof}

\begin{restatable}{claim}{clustersDense}
\label{claim:clusters_dense}
    Suppose $1.2\epsilon \leq 1/3-6\beta$. Every cluster $C$ returned by Algorithm~\ref{algorithm:simplified_structural_clustering} is everywhere dense.
\end{restatable}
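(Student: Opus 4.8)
The plan is to handle separately the two kinds of clusters produced by Algorithm~\ref{algorithm:simplified_structural_clustering}. A singleton cluster is trivially everywhere dense, since $u\in N(u)$ while $\frac{2}{3}\cdot 1<1$. So fix a non-singleton cluster $C$; by construction $C=\agreez{v}$ for some vertex $v$ on which $H(v)$ returned true, and hence $v$ is $1.2\epsilon$-heavy. Two observations drive the argument. First, every $w\in C$ is in $3\beta$-agreement with $v$, so by Fact~\ref{fact:agreement_facts} (with $i=3$) we have $\max\{d(w),d(v)\}\le d(v)/(1-3\beta)$, and therefore $|N(w)\triangle N(v)|<\frac{3\beta}{1-3\beta}d(v)$. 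Second, the true agreement set satisfies $A(v)\subseteq C$ (since $\beta$-agreement is stronger than $2.4\beta$-agreement); writing $A'(v):=N(v)\cap A(v)$, this gives $A'(v)\subseteq C\cap N(v)$, and $1.2\epsilon$-heaviness of $v$ gives $|A'(v)|>(1-1.2\epsilon)d(v)$.

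From the second observation I immediately get the lower bound $|C|\ge|A'(v)|>(1-1.2\epsilon)d(v)$. For an arbitrary $u\in C$ I then bound $|C\setminus N(u)|$ from above. Since $C\setminus N(u)\subseteq(N(v)\setminus N(u))\cup(C\setminus N(v))$ and $u$ is in $3\beta$-agreement with $v$, we have $|N(v)\setminus N(u)|\le|N(u)\triangle N(v)|<\frac{3\beta}{1-3\beta}d(v)$, so everything reduces to bounding $|C\setminus N(v)|$.

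This is the crux of the proof, and the only step where heaviness of $v$ is genuinely used (without it $A^{3\beta}(v)\supseteq C$ can be arbitrarily large). I would bound $|C\setminus N(v)|$ by double counting incidences between $C\setminus N(v)$ and $A'(v)$. On one hand, each $w\in C\setminus N(v)$ is in $3\beta$-agreement with $v$, so $|N(w)\cap N(v)|>d(v)-\frac{3\beta}{1-3\beta}d(v)=\frac{1-6\beta}{1-3\beta}d(v)$; discarding the fewer than $1.2\epsilon d(v)$ neighbors of $v$ outside $A(v)$ leaves $|N(w)\cap A'(v)|>\gamma d(v)$, where $\gamma:=\frac{1-6\beta}{1-3\beta}-1.2\epsilon>0$. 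On the other hand, each $z\in A'(v)$ is in $\beta$-agreement with $v$, hence $|N(z)\setminus N(v)|<\frac{\beta}{1-\beta}d(v)$, and since $C\setminus N(v)$ avoids $N(v)$, vertex $z$ is incident to fewer than $\frac{\beta}{1-\beta}d(v)$ vertices of $C\setminus N(v)$. Equating the two counts (using that the graph is undirected) yields $|C\setminus N(v)|\cdot\gamma d(v)<|A'(v)|\cdot\frac{\beta}{1-\beta}d(v)\le\frac{\beta}{1-\beta}d(v)^2$, so $|C\setminus N(v)|<\delta_1 d(v)$ with $\delta_1:=\frac{\beta}{\gamma(1-\beta)}$.

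Putting the pieces together, $|C\setminus N(u)|<\big(\frac{3\beta}{1-3\beta}+\delta_1\big)d(v)$ while $|C|>(1-1.2\epsilon)d(v)$, so $|N(u)\cap C|=|C|-|C\setminus N(u)|\ge\frac{2}{3}|C|$ as long as $\frac{3\beta}{1-3\beta}+\delta_1\le\frac{1}{3}(1-1.2\epsilon)$, which makes $C$ everywhere dense. This last inequality is a routine verification: substituting $\beta=5\epsilon(1+\epsilon)$ and invoking the hypothesis $1.2\epsilon\le\frac{1}{3}-6\beta$, it follows from $\frac{3\beta}{1-3\beta}+\delta_1\le 6\beta\le\frac{1}{3}(1-1.2\epsilon)$, both of which hold with room to spare for $\epsilon\le1/95$ (note $\delta_1\approx1.5\beta$). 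The main obstacle is exactly the bound $|C\setminus N(v)|<\delta_1 d(v)$: one must resist bounding $C$ by $A^{3\beta}(v)$ naively and instead turn the heaviness of $v$ into the incidence double count; everything else is bookkeeping with the triangle inequality for symmetric differences and Fact~\ref{fact:agreement_facts}.
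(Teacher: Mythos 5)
Your proof is correct and follows essentially the same route as the paper's: the heart of both arguments is the edge double-count between the heavy pivot's agreeing neighborhood ($N(v)\cap A(v)$ in your version, $N(h)\cap C$ in the paper's) and the cluster vertices outside $N(v)$, using heaviness together with Fact~\ref{fact:agreement_facts} to show that this outside part has size $O(\beta)\,d(v)$. The remaining bookkeeping differs only superficially --- you lower-bound $|C|$ by $(1-1.2\epsilon)d(v)$ and upper-bound $|C\setminus N(u)|$, whereas the paper upper-bounds $|C|$ by roughly $d(h)$ and lower-bounds $d(u,C)$ directly --- and your final constant verification indeed follows from the stated hypothesis $1.2\epsilon\le 1/3-6\beta$ alone.
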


\begin{proof}
Consider the cluster $C = \agreez{h}$ created from the $1.2\epsilon$-heavy vertex $h$. We know that every $u\in C$ is in $3\beta$-agreement with $h$. Also by Definition~\ref{definition:heavy} of heavy vertices, $h$ has at most $1.2\epsilon$ fraction of its neighbors outside $C$, hence:
\[
|N(h)\cap N(u)\cap C|
\geq |N(h)\cap N(u)| - |N(h)\setminus C|
\geq (1-3\beta-1.2\epsilon) d(h)
\]
This implies:
\begin{equation}\label{equation:cluster_dense}
d(u, C)
\geq |N(h)\cap N(u)\cap C|
\geq (1-3\beta-1.2\epsilon) d(h)
\end{equation}

Next, we show $d(h)$ is an upper bound for the size of the component $|C|$.
To this end, consider the set of vertices $B = N(h)\cap C$ and the set of edges $E$ between $B$ and $C\setminus B$. Every vertex $u\in C\setminus B$ outside of $B$ is adjacent to at least $|B\cap N(u)| \geq (1-3\beta-1.2\epsilon) d(h)$ vertices inside of $B$ and thus $|E| \geq |C\setminus B|\ (1-3\beta-1.2\epsilon)d(h)$. Moreover, every vertex $u\in B$ inside of $B$ is adjacent to at most $3\beta \max(d(h), d(u)) \leq 3\beta d(h)/(1-3\beta)$ vertices outside of $B$, as deduced from Fact~\ref{fact:agreement_facts}. It follows that $|E| \leq |B|\ 3\beta d(h)/(1-3\beta)$. 
By combining both inequalities we get:

\[
|C\setminus B|
\leq \frac{3\beta}{(1-3\beta)(1-3\beta-1.2\epsilon)} |B|
< \frac{3\beta}{1-6\beta-1.2\epsilon} d(h)
\]
Now, by adding up $|C\setminus B|$ with $|B|$ we obtain an upper bound on $|C|$ in terms of $d(h)$.
\[
|C| = |C\setminus B| + |B|
< \frac{3\beta}{1-6\beta-1.2\epsilon} d(h) + d(h)
= \frac{1-3\beta-1.2\epsilon}{1-6\beta-1.2\epsilon} d(h)
\]
Together with Equation~\ref{equation:cluster_dense}, we achieve the desired result.
\[
d(u, C)
\geq (1-3\beta-1.2\epsilon) d(h)
> (1-6\beta-1.2\epsilon) |C|
\geq \frac{2}{3} |C|
\]
\end{proof}

\begin{restatable}{claim}{importantsAgree}
\label{claim:importants_agree}
    Suppose $0.8\beta \geq 2\epsilon\frac{2-\epsilon}{1-\epsilon}$. Let the pair of vertices $u, v$ belong to the same important group, then $u, v$ are in $0.8\beta$-agreement.
\end{restatable}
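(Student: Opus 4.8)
The plan is to bound $|N(u)\triangle N(v)|$ by splitting the symmetric difference into the portion lying inside the common important group $C$ and the portion lying outside $C$, and then separately to bound $|C|$ itself in terms of $\max\{d(u),d(v)\}$, so that everything can be compared against $0.8\beta\max\{d(u),d(v)\}$.

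First I would extract the two consequences of Definition~\ref{definition:important_group} for both $u$ and $v$: each is adjacent to at least a $(1-\epsilon)$ fraction of $C$, so $d(u,C)\ge(1-\epsilon)|C|$ and $d(v,C)\ge(1-\epsilon)|C|$; and each has at most an $\epsilon$ fraction of its neighbors outside $C$, so $|N(u)\setminus C|\le\epsilon\, d(u)$ and $|N(v)\setminus C|\le\epsilon\, d(v)$. From $d(u)\ge d(u,C)\ge(1-\epsilon)|C|$ (and symmetrically for $v$) I obtain the size bound $|C|\le\max\{d(u),d(v)\}/(1-\epsilon)$, which is the one place where the ``everywhere-dense-like'' structure of $C$ is really used.

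Next I would handle the two portions. Outside $C$: $(N(u)\triangle N(v))\setminus C\subseteq(N(u)\setminus C)\cup(N(v)\setminus C)$, of size at most $\epsilon\, d(u)+\epsilon\, d(v)\le 2\epsilon\max\{d(u),d(v)\}$. Inside $C$: since $|N(u)\cap C|=d(u,C)\ge(1-\epsilon)|C|$ gives $|C\setminus N(u)|\le\epsilon|C|$ and likewise $|C\setminus N(v)|\le\epsilon|C|$, and any $x\in C$ lying in $N(u)\triangle N(v)$ is missing from exactly one of $N(u),N(v)$, we get $(N(u)\triangle N(v))\cap C\subseteq(C\setminus N(u))\cup(C\setminus N(v))$, of size at most $2\epsilon|C|$. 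Adding the two portions and substituting the bound on $|C|$ yields
\[
|N(u)\triangle N(v)|\;\le\;2\epsilon\max\{d(u),d(v)\}+\frac{2\epsilon}{1-\epsilon}\max\{d(u),d(v)\}\;=\;2\epsilon\,\frac{2-\epsilon}{1-\epsilon}\max\{d(u),d(v)\}\;\le\;0.8\beta\max\{d(u),d(v)\},
\]
the last step being exactly the hypothesis $0.8\beta\ge 2\epsilon\frac{2-\epsilon}{1-\epsilon}$, which establishes that $u,v$ are in $0.8\beta$-agreement.

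The computation is essentially mechanical, so the only genuinely delicate points — which I expect to be the real bookkeeping burden — are: (i) the paper's convention that $u\in N(u)$, so ``adjacent to a $(1-\epsilon)$ fraction of $C$'' carries $\pm 1$ terms that I would absorb into the constants (or dismiss by noting $|C|$ can be assumed large); and (ii) recovering the \emph{strict} inequality demanded by Definition~\ref{definition:agreement} from the a priori non-strict fractional bounds, which I would get either by observing that one of the defining inequalities of an important group may be taken strict, or simply from the slack in the hypothesis. Neither of these affects the structure of the argument.
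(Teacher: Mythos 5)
Your proof is correct and follows essentially the same route as the paper's: the same split of $N(u)\triangle N(v)$ into the part inside $C$ (bounded by $2\epsilon|C|$) and the part outside $C$ (bounded by $\epsilon(d(u)+d(v))$), followed by the same bound $|C|\le\max\{d(u),d(v)\}/(1-\epsilon)$ and the same arithmetic yielding $2\epsilon\frac{2-\epsilon}{1-\epsilon}\max\{d(u),d(v)\}\le 0.8\beta\max\{d(u),d(v)\}$. Your closing remarks on the $u\in N(u)$ convention and on strict versus non-strict inequality are side issues the paper also glosses over, and do not affect the argument.
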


\begin{proof}
Suppose $u,v$ belong to the same important group $C$. Through the properties of important groups, we get that, (i) both $u,v$ are adjacent to at least $(1-\epsilon)|C|$ vertices of $C$, and thus disagree on at most $2\epsilon|C|$ vertices inside of $C$. (ii) $u,v$ are adjacent to at most $\epsilon d(u), \epsilon d(v)$ vertices not in $C$, respectively.  In total they disagree on at most:

\begin{multline*}
    |N(u) \triangle N(v)| \leq 2\epsilon|C| + \epsilon(d(u)+d(v))
    \leq \frac{2\epsilon}{1-\epsilon} \max\{d(u), d(v)\} + 2\epsilon \max\{d(u), d(v)\} \\
    \leq 2\epsilon \frac{2-\epsilon}{1-\epsilon} \max\{d(u), d(v)\}
    \leq 0.8\beta \max\{d(u), d(v)\}
\end{multline*}

\noindent Where the second inequality follows from
Definition~\ref{definition:important_group}, a vertex in an important group has a degree that is at least $(1-\epsilon)$ fraction of $C$, that is, for any $u \in C$, $d(u) \geq (1-\epsilon)|C|$.
\end{proof}

\begin{restatable}{claim}{disjointsDisagree}
\label{claim:disjoints_disagree}
    Suppose $2\epsilon \leq 1-3\beta$. Let $u, v$ belong to two disjoint important groups, then $u, v$ are not in $3\beta$-agreement.
\end{restatable}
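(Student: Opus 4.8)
The plan is to use disjointness of the two important groups to force $u$ and $v$ to have many ``private'' neighbors, and then invoke the hypothesis on $\epsilon,\beta$. Let $C_u$ (resp.\ $C_v$) be the important group containing $u$ (resp.\ $v$), so that $C_u\cap C_v=\emptyset$. From Definition~\ref{definition:important_group} I would first record the basic facts I need: at most an $\epsilon$ fraction of $N(u)$ lies outside $C_u$, hence $|N(u)\cap C_u|\geq(1-\epsilon)d(u)$, and symmetrically $|N(v)\setminus C_v|\leq\epsilon d(v)$.

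Next I would bound how much of $N(u)\cap C_u$ can also belong to $N(v)$. Since $C_u$ is disjoint from $C_v$, any vertex of $N(v)$ lying in $C_u$ is in particular a neighbor of $v$ outside $C_v$, so $|N(v)\cap C_u|\leq|N(v)\setminus C_v|\leq\epsilon d(v)$. Consequently the set $(N(u)\cap C_u)\setminus N(v)$ has size at least $|N(u)\cap C_u|-|N(v)\cap C_u|\geq(1-\epsilon)d(u)-\epsilon d(v)$, and it is clearly contained in $N(u)\triangle N(v)$. Therefore $|N(u)\triangle N(v)|\geq(1-\epsilon)d(u)-\epsilon d(v)$, and by the symmetric argument also $|N(u)\triangle N(v)|\geq(1-\epsilon)d(v)-\epsilon d(u)$. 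Taking whichever of the two has larger degree, say $d(u)\geq d(v)$, the first bound gives $|N(u)\triangle N(v)|\geq(1-\epsilon)d(u)-\epsilon d(u)=(1-2\epsilon)\max\{d(u),d(v)\}$.

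Finally, the hypothesis $2\epsilon\leq1-3\beta$ (which in particular forces $\epsilon<1/2$, so the estimate above is nonnegative and meaningful) yields $|N(u)\triangle N(v)|\geq3\beta\max\{d(u),d(v)\}$, and by Definition~\ref{definition:agreement} this means $u$ and $v$ are not in $3\beta$-agreement, as claimed. I do not expect a real obstacle here: the only points that need care are the containment $(N(u)\cap C_u)\setminus N(v)\subseteq N(u)\triangle N(v)$, the use of disjointness to pass from ``$N(v)\cap C_u$'' to ``$N(v)\setminus C_v$'', and noting that the self-loops $u\in N(u),\,v\in N(v)$ built into the definition of neighborhoods are harmless since they only help the lower bounds.
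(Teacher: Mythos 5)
Your proof is correct and follows essentially the same route as the paper: lower-bounding the disagreements forced inside each important group by $(1-\epsilon)d(u)-\epsilon d(v)$ and $(1-\epsilon)d(v)-\epsilon d(u)$, and then invoking $1-2\epsilon\geq 3\beta$. The only cosmetic difference is that the paper adds the two (disjoint) contributions to get $(1-2\epsilon)(d(u)+d(v))$, whereas you use just the bound matching the larger degree, which already yields $(1-2\epsilon)\max\{d(u),d(v)\}$.
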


\begin{proof}
Say that $u, v$ belong to two disjoint important groups $C_u, C_v$, respectively. Then by Definition~\ref{definition:important_group}, $u$ has at least $(1-\epsilon)$ fraction of his neighbors in $C_u$, whereas $v$ has at most $\epsilon$ fraction of his neighbors in $C_u$, which means that $u, v$ disagree on at least $(1-\epsilon)d(u)-\epsilon d(v)$ neighbors inside $C_u$. Similarly, $u, v$ disagree on at least $(1-\epsilon)d(v)-\epsilon d(u)$ neighbors inside $C_v$. Overall, the difference in their neighborhoods is:
\[
|N(u)\triangle N(v)|
\geq (1-2\epsilon)(d(u)+d(v)) > (1-2\epsilon) \max \{ d(u), d(v) \}
\]
The claim now follows directly from the Definition~\ref{definition:agreement} together with the assumption that $1-2\epsilon \geq 3\beta$.
\end{proof}

We are finally ready to prove Lemma~\ref{theorem:structural_properties}.

\begin{proof}

Following Claim~\ref{claim:clusters_disjoint} and Claim~\ref{claim:clusters_dense} the clusters returned by the algorithm are disjoint and everywhere dense. Next, we show the properties related to important groups. First, let $u$ be a vertex in some important group $C$, then $u$ has at least a $1-\epsilon$ fraction of its neighbors within $C$, and according to Claim~\ref{claim:importants_agree}, it is in $0.8\beta$-agreement with all vertices in $C$. Thus, every vertex that belongs to an important group is $\epsilon$-heavy, which also implies that any such vertex is part of a non-singleton cluster. 

Now, assume $u, v$ belong to the same important group $C$, and that $u$ belongs to a cluster $\agreez{h}$ created in step~\ref{alg:cluster_creation} of the algorithm, we will show that $C \subseteq \agreez{h}$.
Because $u, v$ are in $0.8\beta$-agreement, $u$ also belongs to the set $\agreez{v}$. However, the intersection of $\agreez{h}$ and $\agreez{v}$ at vertex $u$ implies, based on Claim~\ref{claim:clusters_disjoint}, that vertex $v$ necessarily belongs to the cluster $\agreez{h}$.

It remains to prove that if $u$ and $v$ belong to disjoint important groups, they cannot be part of the same cluster. Suppose, contrarily, that both $u$ and $v$ belongs to $\agreez{h}$. Note that $u$ is $\epsilon$-heavy as it belongs to some important group. As such, since $u$ in $\agreez{h}$, the sets $\agreez{u}$ and $\agreez{h}$ are not disjoint (as both contains $u$). According to Claim~\ref{claim:clusters_disjoint}, as both $u,h$ are $\epsilon$-heavy, $h$ must also be in $\agreez{u}$. Similarly, $h$ belongs to $\agreez{v}$ as well. Thus, $\agreez{u}$ and $\agreez{v}$ intersect at $h$. However, by Claim~\ref{claim:clusters_disjoint}, this intersection implies that $u$ and $v$ must be in $3\beta$-agreement, contradicting Claim~\ref{claim:disjoints_disagree}.
\end{proof}

\subsubsection{S-Structural Clustering}
\label{sec:Sstructure}

So far we have provided an algorithm for $V$-Structural Clustering. However, what we really need for $\ell_0$ Best-Fit Ultrametrics is the more general problem of $S$-Structural Clustering as defined in Lemma~\ref{lemma:sclustering}. Note that it is different from Agreement Correlation Clustering on the induced subgraph $G[S]$, because $S$-Structural Clustering also considers edges with only one endpoint in $S$.

\begin{restatable}{lemma}{restSstructural}
\label{lemma:sclustering}
    Suppose $\beta = 5\epsilon(1+\epsilon)$ for a small enough parameter $\epsilon \leq 1/95$.
    For every $S \subseteq V$
    we can output a set of clusters $\clusters$. This clustering ensures that for every important group of vertices $C' \subseteq S$, there is a cluster $C \in \clusters$ such that $C' \subseteq C$, and $C$ does not intersect any other important groups of vertices contained in $S \setminus C'$. Moreover, every
    cluster $C \in \clusters$ is everywhere dense.
\end{restatable}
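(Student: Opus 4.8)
The plan is to run the procedure of \Cref{algorithm:simplified_structural_clustering} essentially verbatim, but letting the outer loop range only over $v\in S$ and, at the end, creating singleton clusters only for the vertices of $S$ that are still uncovered. The single conceptual change is that throughout the run the neighbourhoods $N(\cdot)$, the agreement sets $A(\cdot),\agreez{\cdot}$, the degrees $d(\cdot)$ and the heaviness test $H(\cdot)$ are all the \emph{global} ones, computed in the graph $(V,E_W)$ rather than in $G[S]$ --- this is what ``also considering edges with one endpoint in $S$'' means, and it is the only place where the generalization is not routine. The output $\clusters$ is then the family of clusters $\agreez{h}$ created from the heavy centers $h\in S$, together with singletons for the leftover vertices of $S$.

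\emph{Disjointness and density come for free.} Both \Cref{claim:clusters_disjoint} and \Cref{claim:clusters_dense} reason only about pairs of heavy vertices and about global neighbourhoods; nothing in their proofs cares which vertex set the loop iterated over. Hence, exactly as in the proof of \Cref{theorem:structural_properties}, whenever a center $h$ is processed it is either already contained in a previously created cluster (and is skipped), or $h\notin\agreez{h'}$ for every earlier center $h'$, so by \Cref{claim:clusters_disjoint} the new cluster $\agreez{h}$ is disjoint from all previously created ones; and by \Cref{claim:clusters_dense} every created cluster is everywhere dense. Singletons are trivially everywhere dense and, by construction, disjoint from every created cluster and from one another.

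\emph{Every important group $C'\subseteq S$ falls inside one cluster.} By \Cref{claim:importants_agree} any two vertices of $C'$ are in $0.8\beta$-agreement, so (since $0.8\beta\le 2.4\beta$) $C'\subseteq\agreez{u}$ for every $u\in C'$; and, repeating the corresponding step of the proof of \Cref{theorem:structural_properties} (a vertex of an important group has at most an $\epsilon$ fraction of its neighbours outside $C'$ while all of $C'$ lies in its agreement set), every $u\in C'$ is $\epsilon$-heavy, so $H(u)$ returns true. Let $u^{\star}$ be the first vertex of $C'$ encountered by the loop over $S$. If $u^{\star}$ is not yet clustered, the algorithm creates $\agreez{u^{\star}}\supseteq C'$. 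Otherwise $u^{\star}$ sits in some earlier cluster $\agreez{h}$ with $h\in S$ and $H(h)$ true; for every $w\in C'$ we have $u^{\star}\in\agreez{w}$, so $\agreez{h}\cap\agreez{w}\neq\emptyset$, and since $h,w$ are both heavy \Cref{claim:clusters_disjoint} forces $w\in\agreez{h}$. As $w\in C'$ was arbitrary, $C'\subseteq\agreez{h}$. Either way $C'$ lies in a single cluster of $\clusters$.

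\emph{No cluster meets a second important group inside $S$, and the subtle point.} Suppose a created cluster $\agreez{h}$ contains $C'$ and also meets an important group $C''\subseteq S\setminus C'$ disjoint from $C'$; pick $u'\in C'\cap\agreez{h}$ and $u''\in C''\cap\agreez{h}$, both heavy. As above, $\agreez{h}\cap\agreez{u'}\ni u'$ and $\agreez{h}\cap\agreez{u''}\ni u''$ give $h\in\agreez{u'}$ and $h\in\agreez{u''}$ by \Cref{claim:clusters_disjoint}; then $h\in\agreez{u'}\cap\agreez{u''}$, so \Cref{claim:clusters_disjoint} again forces $u'$ and $u''$ into $3\beta$-agreement, contradicting \Cref{claim:disjoints_disagree}. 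All parameter inequalities invoked are precisely those already checked under $\beta=5\epsilon(1+\epsilon)$, $\epsilon\le 1/95$. The point that needs care is that the clusters we return must be the \emph{full} global sets $\agreez{h}$, not their restrictions to $S$: the density ratio certified by \Cref{claim:clusters_dense} is $1-6\beta-1.2\epsilon$, which only barely exceeds $2/3$, so intersecting a cluster with $S$ could destroy the everywhere-dense guarantee; keeping the clusters full (and making only the leftover vertices of $S$ into singletons) is exactly what keeps all of the above compatible. I expect this, together with the observation that restricting the loop to $S$ costs us no important group --- which hinges entirely on heaviness being measured globally --- to be the part requiring the most care; everything else is a transcription of the $V$-Structural Clustering argument.
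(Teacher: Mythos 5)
Your construction has a genuine gap: the clusters you output are the \emph{global} sets $\agreez{h}$, which in general contain vertices outside $S$, so what you return is not a clustering of $S$ at all. The object this lemma must deliver is a partition of $S$ --- that is how it is consumed in Algorithm~\ref{algorithm:old_structural_clustering}, where the clusters returned by the $S$-Structural Clustering subroutine become the children of the tree node associated with $S$ and are recursed upon (\Cref{line:L0dist1,line:L0rec1}), and how it is re-stated in Theorem~\ref{lem:horizontal}; a cluster spilling outside $S$ would place vertices that were already separated from $S$ at a higher level into $S$'s subtree, breaking the ultrametric construction and the analysis of Lemmas~\ref{lem:isUltrametric}, \ref{lem:recDepth} and \ref{lem:inclusion}. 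You explicitly notice the obstruction --- restricting $\agreez{h}$ to $S$ can destroy the everywhere-dense guarantee, since the density margin $1-6\beta-1.2\epsilon$ barely clears $2/3$ --- but you treat ``return the full global clusters'' as the fix, when in fact this observation shows that the purely global-agreement approach cannot be patched by intersection with $S$ and is therefore not a proof of the lemma.

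The paper resolves exactly this tension by changing the primitives rather than the output: it defines subset agreement and subset heaviness (Definitions~\ref{definition:subset_agreement} and \ref{definition:subset_heavy}), where the quantity $\abs{N(u)\triangle N(v)} + 2\abs{N(u)\cap N(v)\cap \overline{S}}$ charges common neighbors outside $S$ as if they were disagreements, and then runs Algorithm~\ref{algorithm:simplified_structural_clustering} over $S$ with $A_S,A_S^3,H_S$. Equivalently, this is $V$-Structural Clustering on an auxiliary instance $G_S$ in which each external neighbor of $u\in S$ is replaced by a private dummy vertex; degrees stay global, the clusters $A_S^3(h)$ are by construction subsets of $S$, and density of each cluster uses only edges internal to $S$ (which coincide in $G$ and $G_S$), so Lemma~\ref{theorem:structural_properties} transfers. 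Your global-agreement variant would, for instance, merge two vertices of $S$ whose shared neighborhood lies almost entirely outside $S$, producing a cluster whose trace on $S$ is sparse --- precisely the situation the extra $2\abs{N(u)\cap N(v)\cap \overline{S}}$ term is designed to exclude. Your arguments for disjointness, for containment of important groups, and for non-mixing of disjoint important groups do mirror the paper's (they reuse Claims~\ref{claim:clusters_disjoint}--\ref{claim:disjoints_disagree} correctly), but they are carried out for the wrong object, so the lemma as needed is not established.
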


In the rest of this section we provide a construction of $S$-Structural Clustering by reducing the problem to $V$-Structural Clustering. We start by generalizing the definitions \ref{definition:agreement} and \ref{definition:heavy} presented earlier:

\begin{definition}[subset agreement]
\label{definition:subset_agreement}
We say that two vertices $u, v\in S$ are in $\beta$-agreement inside $S$ if $\abs{N(u)\triangle N(v)} + 2\abs{N(u)\cap N(v)\cap \overline{S}} < \beta \max\{d(u), d(v)\}$, which means that $u, v$ share most of their neighbors inside $S$. Denote by $A_S(u)$ the set of vertices that are in $\beta$-agreement with $u$ inside $S$.
\end{definition}

\begin{definition}[subset heavy]
\label{definition:subset_heavy}
We say that a vertex $u\in S$ is $\epsilon$-heavy inside $S$ if $\abs{N(u) \setminus A_S(u)} < \epsilon d(u)$, which means that most of its neighbors are in agreement with $u$ inside $S$. Denote by $H_S(u)$ the $\epsilon$-heaviness indicator of vertex $u$ inside $S$.
\end{definition}

Note that definitions \ref{definition:subset_agreement} and \ref{definition:subset_heavy} are more general than definitions \ref{definition:agreement} and \ref{definition:heavy}, since we can derive the latter ones by substituting $S = V$. The extra term $2\abs{N(u)\cap N(v)\cap \overline{S}}$ has an intuitive meaning that will become clear later during the reduction. Similarly to the previous section we need to approximate the new sets $A_S(u), A_S^3(u)$ and $H_S(u)$, which we do in \Cref{section:semi_agreement_cc}.

We finally prove Lemma \ref{lemma:sclustering} by reducing $S$-Structural Clustering, for a set $S$ in the correlation clustering instance $G$, to $V$-Structural Clustering, in a specially constructed instance $G_S$. The instance $G_S$ can be seen as a transformation of $G$ that preserves all internal edges within $S$ and replaces all external neighbors $v \in \overline{S}$ of internal vertices $u \in S$ with dummy vertices, ensuring that our subset-specific definitions of agreement \ref{definition:subset_agreement} and heaviness \ref{definition:subset_heavy} applied to $G$ correspond exactly to the original definitions \ref{definition:agreement} and \ref{definition:heavy} applied to $G_S$. Therefore, to produce the desired $S$-Structural Clustering, it suffices to run Algorithm \ref{algorithm:simplified_structural_clustering} on $S$ using $A_S(u), A_S^3(u), H_S(u)$ as the parameters. The full proof is presented below.

\begin{proof}
    Denote by $\clusters_S$ the clustering of $S$ returned by Algorithm \ref{algorithm:simplified_structural_clustering} when executed over the vertex set $S$ using $A_S(u), A_S^3(u), H(u)$ as parameters. These parameters refer to the $S$-subset agreement sets \ref{definition:subset_agreement} and $S$-subset heaviness indicator \ref{definition:subset_heavy} calculated for each vertex $u\in S$ given our correlation clustering instance $G$.
    Also denote by $G_S$ a new correlation clustering instance that contains all the vertices in $S$ and some additional dummy vertices. Given the subset $S$, the instance $G_S$ can be seen as a transformation of $G$ with the following steps. First insert all the edges of $G$ with internal endpoints $u, v\in S$ to $G_S$. Second, consider all the edges of $G$ with an internal endpoint $u\in S$ and an external endpoint $v\in \overline{S}$. If $u$ has any neighbor other than $v$ in the instance $G$ (that is $d(u) > 2$), then create a dummy vertex $u_v$ and insert the edge with endpoints $u, u_v$ to $G_S$.

    Next we need to notice that $A_S(u)$, which is a $\beta$-agreement set for $G$ inside $S$ according to Definition \ref{definition:subset_agreement}, is also a $\beta$-agreement set for $G_S$ according to Definition \ref{definition:agreement}. By definition $A_S(u)$ is calculated for every internal vertex $u\in S$ of $G_S$. For the sake of clarity we also define $A_S(u) = \{u\}$ for every dummy vertex $u\in \overline{S}$ of $G_S$.
    Indeed consider any dummy vertex $u_v\in \overline{S}$ of $G_S$, which is only adjacent to its internal vertex $u\in S$, by construction of $G_S$. Since $u$ has at least some neighbor other than $u_v$, then $u_v$ is not in $\beta$-agreement with $u$ by Definition~\ref{definition:agreement}. But $u_v$ is not in $\beta$-agreement with any other vertex $w\neq u$ since $u$ can be their only common neighbor ($N_S(u_v) = \{u_v, u\}$). It now suffices to show that any pair of (non singleton) internal vertices $u, v\in S$ of $G_S$ is in $\beta$-agreement according to Definition~\ref{definition:agreement} if and only if $u, v\in S$ of $G$ are in $\beta$-agreement inside $S$ according to Definition~\ref{definition:subset_agreement}. But this is true since, by construction of $G_S$, the degrees $d(u), d(v)$ in $G$ are equal to the degrees $d_S(u), d_S(v)$ in $G_S$ and that $|N_S(u)\triangle N_S(v)| = |N(u)\triangle N(v)| + 2|N(u)\cap N(v)\cap \overline{S}|$.
    
    Now we will prove that $\clusters_S$ along with the singleton vertices $u_v\in \overline{S}$ of instance $G_S$ constitute a $V$-Structural Clustering for $G_S$, that is a clustering satisfying the structural properties of Lemma~\ref{theorem:structural_properties}. Using the same argument as with $A_S(u)$, we see that $A_S^3(u)$ is a $3\beta$-agreement set of $G_S(u)$ and using the definitions \ref{definition:heavy} and \ref{definition:subset_heavy} we see that $H_S(u)$ is an $\epsilon$-heaviness indicator of $G_S$. Just for the sake of clarity we also define $A_S^3(u) = \{u\}$ and $H_S(u) = \text{false}$ for every dummy vertex $u\in \overline{S}$ of $G_S$.
    Given that the parameters $A_S(u), A_S^3(u), H_S(u)$ are some proper agreement sets \ref{definition:agreement} and heaviness indicators \ref{definition:heavy} for the instance $G_S$, then running Algorithm \ref{algorithm:simplified_structural_clustering} in the entire vertex set of $G_S$ produces a $V$-Structural Clustering for $G_S$. But during the execution of the Algorithm \ref{algorithm:simplified_structural_clustering}, every dummy vertex $u_v\in \overline{S}$ of $G_S$ will be eventually ignored. Indeed the algorithm will never iterate through $u_v$ as $H(u_v) = \text{false}$ and the algorithm will never include $u_v$ in a non trivial cluster as there is no set $u_v\in A_S^3(w)$. So it would be equivalent to first iterate through $S$ to create clusters $\clusters_S$ and later iterate through $\overline{S}$ to create the remaining singleton clusters.
    
    Finally we are ready to prove that $\clusters_S$ is the required $S$-Structural Clustering of $G$, that is a clustering of $S$ satisfying the structural properties of Lemma \ref{lemma:sclustering}. We start by observing that that a group of vertices $C'\subseteq S$ is important in $G$ if and only if it is important in $G_S$. By construction of $G_S$ any (non singleton) internal vertex $u\in S$ has the same total degree $d(u) = d_S(u)$ and set of internal neighbors $N(u) = N_S(u)$ in both graphs $G, G_S$. And since $C'\subseteq S$, every vertex $u\in C'$ is adjacent to the same fraction of the vertices in $|C'|$ and the same fraction of its neighbors outside of $C'$ in both graphs $G, G_S$, which proves the equivalence of Definition~\ref{definition:important_group} in the two graphs.
    Now we know that every important group of vertices $C'\subseteq S$ of $G$ is also important in $G_S$ and subsequently from Lemma~\ref{theorem:structural_properties} there is a cluster $C\in \clusters_S$ such that $C'\subseteq C$. Also, cluster $C$ does not intersect any other important groups of vertices of $G$ contained in $S\setminus C'$, since otherwise it would intersect with the respective disjoint important groups of vertices of $G_S$, which would contradict Lemma~\ref{theorem:structural_properties}. Lastly, any cluster $C\in \clusters_S$ is everywhere dense in both $G$ and $G_S$, since $C\subseteq S$ has the exact same internal edges by construction of $G_S$.
\end{proof}

\newcommand{\vagagr}{\abs{N(u)\cap N(v)} +
\abs{N(u)\triangle N(v)}- \abs{N(u)\cap N(v)\cap S}}

\subsubsection{Computing Agreements}\label{section:semi_agreement_cc}

Building on the algorithm from \Cref{sec:Sstructure}, we now describe its adaptation to the semi-streaming model. Since this algorithm only requires computing approximations to $\beta$-agreements in $S$ and heaviness queries, we need to prove the following lemma:
\begin{restatable}{lemma}{subsetstructuralimpl}\label{lemma:subset_structural_impl}
The following statements hold with high probability:

\begin{enumerate}
    \item For a given $\gamma \in \{\beta, 3\beta\}$ and every $u,v \in S$, we can output `yes' if $\abs{N(u)\triangle N(v)} + 2\abs{N(u)\cap N(v)\cap \overline{S}} < 0.8\gamma \max\{d(u), d(v)\}$ and `no' if $\abs{N(u)\triangle N(v)} + \abs{N(u)\cap N(v)\cap \overline{S}} \geq \gamma \max\{d(u), d(v)\}$.
    \item For every $u \in S$, we can output `yes' if $\abs{N(u) \setminus A_S(u)} < \epsilon d(u)$ and `no' if $\abs{N(u) \setminus A_S(u)} > 1.2 \epsilon d(u)$.
\end{enumerate}
\end{restatable}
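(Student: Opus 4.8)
The plan is to reduce both statements to the problem of estimating, for pairs $u,v \in S$, the quantities $d(u)$, $d(v)$, $|N(u) \cap N(v)|$, and $|N(u) \cap N(v) \cap \overline{S}|$ — equivalently $|N(u) \cap N(v) \cap S|$ — up to a small multiplicative error, using only the sketches of \Cref{section:sketches}. Note that once all four of these quantities are known approximately, the quantity $|N(u) \triangle N(v)| + 2|N(u) \cap N(v) \cap \overline{S}| = d(u) + d(v) - 2|N(u)\cap N(v)\cap S|$ is a linear combination of them, so a small relative error on each term translates into a small relative error (or additive error scaled by $\max\{d(u),d(v)\}$) on the full expression. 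By choosing the slack between the yes-threshold ($0.8\gamma$) and the no-threshold ($\gamma$, and the $1.2\epsilon$ vs $\epsilon$ gap in part 2) large enough relative to the sketch approximation parameter $\zeta$, one can always correctly decide. So the core of the proof is: (i) produce good size estimates, and (ii) produce good intersection estimates.

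First I would handle the \emph{large-neighborhood} regime, say $\max\{d(u),d(v)\} \ge \log^4 n$. For size: by Claim~\ref{clm:sketch1nb2} we can report, for any $u$ and any relevant weight $w$, a sketch with governing weight $\tilde w^u_s$ whose neighborhood size is within a $(1\pm O(\zeta))$ factor of $|N_w(u)|$, so $d(u)$ is estimated up to $(1\pm O(\zeta))$. For the intersection $|N(u)\cap N(v)|$: the sketch $\Sk^u_{s,s'}$ is an (almost) uniform sample of $N_{\tilde w^u_s}(u)$ of size $\Theta(\log^2 n)$, and crucially the same underlying random set $R_{s'}$ is used for all vertices at scale $s'$, so for two vertices $u,v$ whose relevant sizes lie in a common factor-$2$ window we may take sketches built from the \emph{same} $R_{s'}$; then $R_{s'} \cap N(u) \cap N(v)$ is an unbiased sample of $N(u) \cap N(v)$, and a Chernoff bound over the $\Theta(\log^2 n)$ sampled elements gives a multiplicative estimate of $|N(u)\cap N(v)|$ provided this intersection is not too small relative to $\max\{d(u),d(v)\}$. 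The part-2 heaviness query is then obtained by first computing, for each $v \in S$, the approximate membership indicator of $v$ in $A_S(u)$ via part 1, then estimating $|N(u) \setminus A_S(u)| = d(u) - |N(u) \cap A_S(u)|$ by sampling neighbors of $u$ (again via the sketch of $u$) and testing each sampled neighbor for $A_S$-membership; the one-sided slack between $\epsilon$ and $1.2\epsilon$ absorbs both the sampling error and the fact that $A_S(\cdot)$ is itself only approximately known (the relaxed $0.8\gamma$ vs $\gamma$ guarantee), which is why the lemma's two thresholds are asymmetric.

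Next I would deal with the \emph{small-neighborhood} regime $\max\{d(u),d(v)\} < \log^4 n$ (equivalently, whenever $\min$ of the two degrees is below, say, $2\log^4 n$): here the sketches from the sampled sets $R_{s'}$ are too sparse to be reliable, but the data structure $N_{\close}(\cdot)$ of Claim~\ref{clm:sketch3} stores the exact nearest $2\log^4 n$ neighbors, so for any weight $w$ with $|N_w(u)| \le 2\log^4 n$ we can recover $N_w(u)$ exactly and compute $d(u)$, $d(v)$, $N(u)\cap N(v)$ and its intersection with $S$ or $\overline{S}$ \emph{exactly} — no probability needed. One subtlety: when one of $u,v$ has a small neighborhood and the other large, $|N(u)\cap N(v)| \le \min\{d(u),d(v)\}$ is small, so the exact small-side set still suffices. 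The main obstacle I anticipate is the intersection estimation in the large regime: getting an honest multiplicative (not just additive) estimate of $|N(u)\cap N(v)|$ requires that the two sketches be sampled with a \emph{common} random set $R_{s'}$ at a scale $s'$ simultaneously valid for both $u$ and $v$, and that the sampling rate $\log^2 n / s'$ be high enough that $\Theta(\log^2 n)$ common elements land in the sample even when the intersection is only a constant fraction of $\max\{d(u),d(v)\}$ — this is exactly the reason the sketch construction maintains sketches for the pair of parameters $(s,s')$ with $\tfrac12 s \le s' \le s$ rather than a single parameter, and I would need to verify that for every pair $u,v$ that a heaviness/agreement query touches, such a common scale exists. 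A final union bound over the $O(n^2)$ pairs and $O(n)$ heaviness queries, each failing with probability $n^{-\Omega(1)}$ (tunable via the constants in the $\Theta(\log^2 n)$ sample sizes), yields the "with high probability" conclusion.
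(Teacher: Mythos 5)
Your plan matches the paper's proof in all essentials: the same degree-based case split (exact computation via $N_{\close}$ for small neighborhoods, sketches for large ones, and an automatic answer when the two degrees differ by a large factor), the same identity $\abs{N(u)\triangle N(v)} + 2\abs{N(u)\cap N(v)\cap \overline{S}} = d(u)+d(v)-2\abs{N(u)\cap N(v)\cap S}$, the same use of a common random set $R_{s'}$ through the two-parameter sketches $\Sk^{u}_{s_u,s_v}$, $\Sk^{v}_{s_v,s_v}$ to estimate the intersection with $S$ by Chernoff plus a union bound, and the same sample-and-test-membership estimator for the heaviness query, with the $0.8\gamma$ vs.\ $\gamma$ and $\epsilon$ vs.\ $1.2\epsilon$ gaps absorbing the $O(\zeta)$-scale additive error. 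The common-scale concern you flag is resolved exactly as you suspect: when the estimated sizes differ by more than the allowed factor the answer is already forced to `no', and otherwise $s_u/s_v>\tfrac12$ guarantees the needed sketch exists.
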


Before proving the lemma we state the following corollary which is a direct consequence of Claim~\ref{clm:sketch1nb2}.

\begin{restatable}{corollary}{closesketch}\label{coro:closesketch}
With high probability, for each vertex \(v\) and each weight $w$, there exists $\tilde{w}^v_s$, such that $\abs{N_w(v) \triangle N_{\tilde{w}^v_s}(v)} \leq 5\zeta \abs{N_w(v)}$.
\end{restatable}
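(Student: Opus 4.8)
The plan is to derive \Cref{coro:closesketch} directly from the size estimate of Claim~\ref{clm:sketch1nb2}, with the small-neighborhood regime (which that claim explicitly excludes) handled separately via the close-neighbor structure of Claim~\ref{clm:sketch3}. Fix a vertex $v$ and a weight $w$, and split into two cases according to whether $\abs{N_w(v)} \geq \log^4 n$.

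In the first case $\abs{N_w(v)} \geq \log^4 n$, so Claim~\ref{clm:sketch1nb2} supplies, w.h.p., a governing weight $\tilde{w}^v_s$ with $\frac{\abs{N_{\tilde{w}^v_s}(v)}}{1+5\zeta} \leq \abs{N_w(v)} \leq \frac{\abs{N_{\tilde{w}^v_s}(v)}}{1-\zeta}$, equivalently $(1-\zeta)\abs{N_w(v)} \leq \abs{N_{\tilde{w}^v_s}(v)} \leq (1+5\zeta)\abs{N_w(v)}$. The observation that turns this size comparison into a symmetric-difference bound is that for a fixed vertex the neighborhoods are totally ordered by inclusion in the weight parameter: if $\tilde{w}^v_s \leq w$ then $N_{\tilde{w}^v_s}(v) \subseteq N_w(v)$, and if $w \leq \tilde{w}^v_s$ then $N_w(v) \subseteq N_{\tilde{w}^v_s}(v)$; in both cases $\abs{N_w(v) \triangle N_{\tilde{w}^v_s}(v)} = \bigl|\,\abs{N_w(v)} - \abs{N_{\tilde{w}^v_s}(v)}\,\bigr|$. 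Plugging in the two-sided size estimate, this absolute difference is at most $5\zeta\abs{N_w(v)}$ in either sub-case, which is exactly the claimed bound.

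In the second case $\abs{N_w(v)} < \log^4 n$, I would not use a sketch at all: the priority queue $N_{\close}(v)$ of Claim~\ref{clm:sketch3} stores the $2\log^4 n$ nearest neighbors of $v$ exactly, and since $N_w(v)$ consists of at most $\log^4 n < 2\log^4 n$ vertices, all among the closest to $v$, the set $N_w(v)$ is recovered exactly from $N_{\close}(v)$; treating this exactly-recovered neighborhood as the reported one (formally $\tilde{w}^v_s := w$) gives symmetric difference $0 \leq 5\zeta\abs{N_w(v)}$. The "with high probability" qualifier is inherited directly from Claim~\ref{clm:sketch1nb2}, whose guarantee already holds simultaneously over all $v$ and all relevant weights via the union bound there, and no further randomness enters in the small case. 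I do not expect a genuine obstacle: the corollary is essentially a reformulation of Claim~\ref{clm:sketch1nb2}, so the only points needing care are (i) invoking the inclusion structure of the $N_w(v)$'s to pass from "approximate size" to "approximate symmetric difference", and (ii) covering the $\abs{N_w(v)} < \log^4 n$ regime through $N_{\close}(v)$ rather than through the sketches.
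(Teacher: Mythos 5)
Your proposal is correct and follows essentially the same route as the paper: it invokes the two-sided size estimate of Claim~\ref{clm:sketch1nb2} and uses the nesting of the neighborhoods $N_w(v)$ in $w$ to turn that size comparison into the symmetric-difference bound (the paper splits into the cases $N_w(v)\subseteq N_{\tilde{w}^v_s}(v)$ and $N_{\tilde{w}^v_s}(v)\subseteq N_w(v)$, exactly as you do). Your explicit treatment of the regime $\abs{N_w(v)}<\log^4 n$ via $N_{\close}(v)$ is a small extra care the paper's one-line proof leaves implicit, but it does not change the argument.
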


\begin{proof}
If $N_w(v) \subseteq  N_{\tilde{w}^v_s}(v)$, then by Claim~\ref{clm:sketch1nb2},
$\abs{N_w(v) \triangle N_{\tilde{w}^v_s}(v)} \leq \abs{N_{\tilde{w}^v_s} \setminus  N_w(v)} \leq 5\zeta \abs{ N_w(v)}$. Else, $\abs{N_w(v) \triangle N_{\tilde{w}^v_s}(v)} \leq \abs{N_w(v) \setminus N_{\tilde{w}^v_s}} \leq \zeta \abs{N_w(v)}$. 
\end{proof}

\begin{proof}
We are now ready to prove \Cref{lemma:subset_structural_impl}. For the first item, let $d(v) \geq d(u)$ and consider the 3 possible cases: (i) $d(u),d(v) \leq 2\log^4 n$, (ii) $d(u) \leq \log^4 n$ and $d(v) > 2\log^4 n$, and, (iii) \(d(u),d(v) \geq \log^4 n\).

In the first case case the entire neighborhoods of $u$ and $v$ are known and the query can be computed precisely. Whereas in the second case, $\abs{N(u) \triangle N(v)} \geq d(v)-d(u) \geq \frac{d(v)}{2}$, this implies that, $u,v$ cannot satisfy the conditions required for a `yes' instance, and we report `no'. Consequently, we remain with the third case which will require the sketching scheme outlined in Section~\ref{section:sketches}.

Let $s_u$ and $s_v$ be the sizes reported by Claim~\ref{clm:sketch1nb2} for $u$, and $v$ respectively. Then, we have that w.h.p, $\frac{s_v}{1+5\zeta} \leq d(v) \leq \frac{s_v}{1-\zeta}$, and similarly for $d(u)$. Thus:

\[   \abs{N(u)\triangle N(v)} \geq d(v)-d(u) = d(v)\big(1-\frac{d(u)}{d(v)}\big) \geq d(v)\big(1-\frac{(1+5\zeta)s_u}{(1-\zeta)s_v}\big)
\]
Consequently, if $1-\frac{(1+5\zeta)s_u}{(1-\zeta)s_v} > 0.8\gamma$, then $\abs{N(u) \triangle N(v)} > 0.8\gamma d(v)$, and $u,v$ cannot satisfy the conditions required for a `yes' instance in this lemma, and thus we report `no'.

Else, it is the case that $\frac{s_u}{s_v} \geq (1-0.8\gamma)\frac{1-\zeta}{1+5\zeta} > \frac{1}{2}$, where the last inequality follows by selecting sufficiently small values for $\gamma$ and $\zeta$.

Note that, $\abs{N(u)\triangle N(v)} = d(u)+d(v)-2\abs{N(u)\cap N(v)}$, hence:
\begin{equation}\label{eq:iden}
\begin{aligned}
\abs{N(u)\triangle N(v)} + 2\abs{N(u)\cap N(v)\cap \overline{S}} 
&= d(u)+d(v) - 2\abs{N(u) \cap N(v) \cap S}    
\end{aligned}
\end{equation}
Using $s_u,s_v$ we can approximate $d(u),d(v)$, respectively, and obtain:

\begin{equation}\label{eq:est}
\begin{aligned}
    &\frac{s_u}{1+5\zeta} + \frac{s_v}{1+5\zeta} - 2\abs{N(u) \cap N(v) \cap S} \\
    &\qquad\leq \abs{N(u)\triangle N(v)} + 2\abs{N(u)\cap N(v)\cap \overline{S}} \\
    &\qquad\leq \frac{s_u}{1-\zeta} + \frac{s_v}{1-\zeta} - 2\abs{N(u) \cap N(v) \cap S}
\end{aligned}
\end{equation}

Thus, to estimate $\abs{N(u)\triangle N(v)}+ 2\abs{N(u)\cap N(v)\cap \overline{S}} $, it is enough to estimate $\abs{N(u) \cap N(v) \cap S}$. W.l.o.g. assume $s_u\ge s_v$. We will consider the sketches ${\Sk}^{v}_{s_v,s_v}$ and ${\Sk}^{u}_{s_u,s_v}$ (otherwise, consider the sketches ${\Sk}^{v}_{s_v,s_u}$ and ${\Sk}^{u}_{s_u,s_u}$). 
Using Corollary~\ref{coro:closesketch}, $N(u)$ and $N_{\tilde{w}^u_{s_u}}(u)$ disagree on at most  $5\zeta d(u)$ elements where $5\zeta d(u) \leq 5\zeta d(v) \leq 5\zeta \frac{s_v}{1-\zeta}$, and similarly for $N(v)$ and $N_{\tilde{w}^v_{s_v}}(v)$. Let $M=N_{\tilde{w}^v_{s_v}}(v) \cap N_{\tilde{w}^u_{s_u}}(u)$, then:

\begin{equation}\label{eq:s_estimate}
 \abs{N(u) \cap N(v) \cap S} -10\zeta \frac{s_v}{1-\zeta} \leq
\abs{M \cap S}
\leq \abs{N(u) \cap N(v) \cap S}  +10\zeta \frac{s_v}{1-\zeta}
\end{equation}

Define a random variable $X^S_{u,v}=\abs{{\mathcal{N}}^{v}_{s_v} \cap {\mathcal{N}}^{u}_{s_v} \cap S}$. 
Recall that these sketches are constructed using the random set $R_{s_v} \subseteq V$, where each vertex of $V$ is sampled independently at random with probability $\frac{\log^2 n}{s_v}$.
By linearity of expectation we have:
\begin{align*}
\mathbb{E}[X^S_{u,v}] = \frac{\log^2 n}{s_v}\abs{M \cap S}  
\end{align*}
We apply Chernoff bound to obtain w.h.p 
$(1-\zeta)\frac{s_v}{\log^2 n}X^S_{u,v} < \abs{M \cap S} < (1+\zeta)\frac{s_v}{\log^2 n}X^S_{u,v}$.

Combining both Equation~\ref{eq:est} and Equation~\ref{eq:s_estimate} with the bounds on $\abs{M \cap S}$, we get:

\begin{equation}\label{eq:final}
\begin{aligned}
    &2\frac{s_v}{1+5\zeta}-\frac{20\zeta}{1-\zeta}s_v-2(1+\zeta)\frac{s_v}{\log^2 n}X^S_{u,v} \\
    &\qquad\leq \abs{N(u)\triangle N(v)} + 2\abs{N(u)\cap N(v)\cap \overline{S}} \\
    &\qquad\leq (2+5\zeta)\frac{s_v}{1-\zeta}+\frac{20\zeta}{1-\zeta}s_v-2(1-\zeta)\frac{s_v}{\log^2 n}X^S_{u,v}
\end{aligned}
\end{equation}

Observe that by Equation~\ref{eq:final}, for some constant $k$, we can also write:

\[\abs{(\abs{N(u)\triangle N(v)} + 2\abs{N(u)\cap N(v)\cap \overline{S}})-(2s_v-2\dfrac{s_v}{\log^2 n}X^S_{u,v})}\le k\zeta s_v\] 

The lemma now follows by selecting small enough parameter $\zeta$ relative to $\gamma$, namely, $\zeta < \frac{1}{10k} \gamma$. Based on this, we report 'yes' if $2s_v-2\frac{s_v}{\log^2 n}X^S_{u,v} \leq 0.9\gamma s_v$, and 'no' otherwise.

For the second item, if $d(u) \leq 2\log^4 n$ the entire neighborhood is known and the query can be computed precisely. Else, let ${\mathcal{N}}^{u}_{s_u}$ be the vertices defining the edges incident on $u$ in sketch ${\Sk}^{u}_{s_u,s_u}$ and define the random variable $Y_u=|{\mathcal{N}}^{u}_{s_u} \cap S\cap A_S(u)|$. 

Observe that, $\abs{N(u) \setminus A_S(u)} = \abs{N(u)} -\abs{N(u) \cap S \cap A_S(u)}$, and we can write: 
\begin{equation}
    \frac{s_u}{1+5\zeta}-\abs{N(u) \cap S \cap A_S(u)} \leq \abs{N(u) \setminus A_S(u)} \leq \frac{s_u}{1-\zeta}-\abs{N(u) \cap S \cap A_S(u)}
\end{equation}

Similarly to the first part of the lemma we estimate $\abs{N(u) \cap S \cap A_S(u)}$. Using Corollary~\ref{coro:closesketch} we have:
\begin{align}
    \abs{N_{\tilde{w}^u_{s_u}}(u) \cap S \cap A_S(u)}-5\zeta\frac{s_u}{1-\zeta} \leq \abs{N(u) \cap S \cap A_S(u)} \leq \abs{N_{\tilde{w}^u_{s_u}}(u) \cap S \cap A_S(u)}+5\zeta\frac{s_u}{1-\zeta}
\end{align} 

Note that, ${\mathcal{N}}^{u}_{s_u}$ contains a random sample of $N_{\tilde{w}_{s_u}^u(u)}$, where each vertex is sampled with probability $\dfrac{\log^2 n}{s_u}$. Thus, by linearity of expectation, the expected value of $Y_u$ is $\frac{\abs{N_{\tilde{w}_{s_u}^u(u)} \cap S \cap A_S(u)}}{s_u}\log^2 n$. Using Chernoff bound, we obtain w.h.p that:
\begin{equation}
    (1-\zeta)\frac{s_u}{\log^2 n}Y_u < \abs{N_{\tilde{w}_{s_u}^u} \cap S \cap A_S(u)} < (1+\zeta)\frac{s_u}{\log^2 n}Y_u
\end{equation} 
We then report `yes' if $s_u-\frac{s_u}{\log^2 n}Y_u \leq 1.1 \epsilon s_u$, and 'no' otherwise. We conclude that:

\[\abs{\abs{N(u)\setminus A_S(u)} -(s_u-\dfrac{s_u}{\log^2 n}Y_u)}\le k\zeta s_u\] 

For some constant $k$. The lemma now follows by selecting small enough parameter $\zeta$ relative to $\epsilon$, namely, $\zeta < \frac{1}{10k} \epsilon$.
\end{proof}

We can now establish S-Structural Clustering in the semi-streaming model:
\begin{theorem} \label{lem:horizontal}
    Suppose $\beta = 5\epsilon(1+\epsilon)$ for a small enough parameter $\epsilon \leq 1/95$.
    Given access to the sketches of all vertices in $S$, and w.h.p.,
    for every $S \subseteq V$
    we can output a set of clusters $\clusters$. This clustering ensures that for every important group of vertices $C' \subseteq S$, there is a cluster $C \in \clusters$ such that $C' \subseteq C$, and $C$ does not intersect any other important groups of vertices contained in $S \setminus C'$. Moreover, every
    cluster $C \in \clusters$ is everywhere dense. 
\end{theorem}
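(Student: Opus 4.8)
The plan is to obtain Theorem~\ref{lem:horizontal} as a direct composition of Lemma~\ref{lemma:sclustering} with Lemma~\ref{lemma:subset_structural_impl}, using no new ideas beyond plugging the sketch-based oracles in place of the exact parameters. Recall that the proof of Lemma~\ref{lemma:sclustering} shows that an $S$-Structural Clustering is produced by running Algorithm~\ref{algorithm:simplified_structural_clustering} on the vertex set $S$ with the parameters $A_S(u)$, $A_S^3(u)$, $H_S(u)$, and — crucially — that the correctness analysis (Lemma~\ref{theorem:structural_properties} together with Claims~\ref{claim:clusters_disjoint}--\ref{claim:disjoints_disagree}) never uses exact agreement or heaviness values: it only uses the relaxed guarantees that $A_S(u)$ contains every vertex in $0.8\beta$-agreement with $u$ inside $S$ and no vertex failing $\beta$-agreement inside $S$, that $A_S^3(u)$ contains every vertex in $2.4\beta$-agreement and no vertex failing $3\beta$-agreement inside $S$, and that $H_S(u)$ is correct on the $\epsilon$-heavy / not-$1.2\epsilon$-heavy dichotomy. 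So it suffices to show that, from the sketches of Section~\ref{section:sketches} alone, we can realize parameters meeting exactly these relaxed specifications with high probability; we do not build the auxiliary instance $G_S$ explicitly, it only appears in the analysis inherited from Lemma~\ref{lemma:sclustering}.

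This is precisely what Lemma~\ref{lemma:subset_structural_impl} delivers. Instantiating item~1 with $\gamma=\beta$ gives, for every ordered pair $u,v\in S$, a Boolean that answers `yes' whenever $u,v$ are in $0.8\beta$-agreement inside $S$ and `no' whenever they fail $\beta$-agreement inside $S$; collecting the `yes' answers for a fixed $u$ is a valid realization of the relaxed $A_S(u)$. Instantiating item~1 with $\gamma=3\beta$ realizes $A_S^3(u)$ in the same way, since the relaxation window $[2.4\beta,3\beta]$ is exactly the $[0.8\gamma,\gamma]$ window for $\gamma=3\beta$. Item~2 realizes $H_S(u)$ with the required one-sided $\epsilon$ / $1.2\epsilon$ guarantee. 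We then compute \emph{all} of these oracle answers first (there are $O(|S|^2)$ agreement queries and $O(|S|)$ heaviness queries), and afterwards run the deterministic Algorithm~\ref{algorithm:simplified_structural_clustering} on $S$ using them in place of the exact parameters; by the relaxation-tolerant analysis behind Lemma~\ref{lemma:sclustering} the resulting clustering satisfies every stated structural property, and the running time is $\widetilde{O}(|S|^2)$ as claimed in the surrounding text.

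For the probabilistic statement we take a union bound. Lemma~\ref{lemma:subset_structural_impl} (resting on Corollary~\ref{coro:closesketch} and Claim~\ref{clm:sketch1nb2}) fails on a fixed query with probability at most $n^{-c}$ for a large constant $c$; there are $O(|S|^2)=O(n^2)$ queries in total, so with high probability all of them are simultaneously correct, and conditioned on that event the (deterministic) Algorithm~\ref{algorithm:simplified_structural_clustering} outputs an exact $S$-Structural Clustering. Since the queries are all issued before any cluster is formed, there is no adaptivity to worry about within this single call.

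The step I expect to be the main obstacle is the constant bookkeeping in the second paragraph: one must verify that the thresholds produced by Lemma~\ref{lemma:subset_structural_impl} — the $0.8\gamma$ cutoff on the `yes' side, the $\gamma$ cutoff on the `no' side, and in particular the coefficient on the $\overline S$-contribution term $\abs{N(u)\cap N(v)\cap\overline S}$ — land inside the slack that Claims~\ref{claim:clusters_disjoint}--\ref{claim:disjoints_disagree} leave open under the hypotheses $\beta=5\epsilon(1+\epsilon)$ and $\epsilon\le 1/95$. Concretely, one should confirm that a vertex excluded only because it fails $\beta$- (resp.\ $3\beta$-) agreement inside $S$, rather than some strictly tighter threshold, is still excluded safely by re-tracing the constant inequalities of those claims with the $\overline S$-term carried through the identity $\abs{N_S(u)\triangle N_S(v)}=\abs{N(u)\triangle N(v)}+2\abs{N(u)\cap N(v)\cap\overline S}$ from the proof of Lemma~\ref{lemma:sclustering}. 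Everything else is a mechanical substitution of the approximate oracles for the exact ones.
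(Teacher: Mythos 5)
Your proposal is correct and follows essentially the same route as the paper: the paper's proof is exactly the composition you describe, namely that the algorithm of Section~\ref{sec:Sstructure} only needs polynomially many approximate $\beta$-agreement and heaviness queries inside $S$, which Lemma~\ref{lemma:subset_structural_impl} supplies w.h.p.\ from the sketches of vertices in $S$, after which Lemma~\ref{lemma:sclustering} gives the structural guarantees. Your additional remarks (precomputing all oracle answers to avoid adaptivity, the union bound, and the caution about the constant/coefficient bookkeeping on the $\overline{S}$-term) are consistent with, and slightly more explicit than, the paper's brief argument.
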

\begin{proof}
The algorithm outlined in \Cref{sec:Sstructure} only requires the computation of polynomially many approximations to $\beta$-agreements in $S$ and queries of heaviness in $S$; these can be computed with high probability using \Cref{lemma:subset_structural_impl}.
Note that this only requires access to the sketches of vertices in $S$.
The theorem now follows from \Cref{lemma:sclustering}.
\end{proof}

\subsection{Main Algorithm}
\label{section:algorithm}

To run our main algorithm it suffices to obtain access to certain black-boxes established in the previous sections.
Ideally, we would like to have access to a summary of the input distances, to estimations of neighborhood sizes, and to be able to repeatedly compute $S$-Structural Clustering for instances given by an adaptive adversary.
We show that even though we cannot generally guarantee the last requirement, it suffices to guarantee it for a particular (technical) type of adaptive adversary (see \Cref{lem:blackBoxes}).

We first need the following definitions.
For a weight $w \in D$, let $\widehat{w}$ be the smallest weight in $\widetilde{D}$ such that $\widehat{w}>w$.
Similarly, for any $w$ we let $\widecheck{w}$ be the largest value in $\widetilde{D}$ smaller than $w$.
We say that $\widetilde{D}$ is a \emph{compressed set} if for $w\not \in \widetilde{D}$, $\widetilde{D}$ has the property that $d_w(u) \le (1+\delta) d_{\widecheck{w}}(u)$ for all $u$.
Finally let $\widetilde{d_w(u)}$ be a function with $\widetilde{d_w(u)}\in [(1-\lambda)d_w(u), (1+\lambda)d_w(u)]$ for a sufficiently small constant $\lambda$.

Our algorithm (see \Cref{algorithm:old_structural_clustering} for the pseudocode) is a divisive algorithm running $S$-Structural Clustering at each level to divide a cluster.
However, it then performs a different division strategy for the largest cluster.
This different strategy for the largest cluster allows us to guarantee that each vertex only participates in a logarithmic number of $S$-Structural Clustering computations, and is only possible if the size of the largest cluster has not dropped by a constant factor.

More formally, our algorithm takes as argument a set $S$ (initially the whole $V$) and a distance $w$ (initially the maximum distance).
First, it creates a tree-node $A$ at distance $w/2$ from the leaves, whose leaves-descendants are all the vertices in $S$.
Then it uses an $S$-Structural Clustering subroutine, and for each cluster $C'$ with size at most $0.99|S|$ it recurses on $(C',\widecheck{w})$.
The roots of the trees created from each of these recursions then become children of $A$.

Subsequently, for the largest cluster $C$ we perform the following postprocessing:
Let $w' \gets \widecheck{w}$ and $w''\gets \widecheck{w'}$.
\begin{itemize}
    \item If there are at most $0.99|S|$ vertices $u$ in $S$ with large estimated degree $\widetilde{d_{w''}(u)}$ (larger than $0.66 |S|$), then we recurse on $(C,w')$; the root of the tree created from this recursion becomes a child of $A$.
    \item Otherwise, we let $R$ contain the vertices $u$ whose estimated degree $\widetilde{d_{w''}(u)}$ is small (less than $0.65 |S|$), and recurse on $(R,w')$.
    The root of the tree created from this recursion (let us call it $A'$) becomes a child of $A$, and then we update $A \gets A'$.
    Finally, we repeat the postprocessing again (but this time on $C\setminus R$ (instead of $R$) at level $\widecheck{w}$ (instead of $w$)).
\end{itemize}

\begin{algorithm}[]
\caption{\textsc{$\ell_0(S, w)$}}
\label{algorithm:old_structural_clustering}
\begin{algorithmic}[1] \label{alg:HierL0}
    \State Mark $S$ at level $w$ as a core cluster
    \If{$|S|\le 1$} \Return \EndIf

    \State obtain $\calC = \set{C_1, \ldots, C_k}$ using an $S$-Structural Clustering subroutine on $(V,E_{\widecheck{w}})$
    \ForAll{$u,v$ in different clusters of $\calC$}
        $T(uv) \gets w$ \label{line:L0dist1}
    \EndFor
    \ForAll{$C' \in \calC$ with $|C'| \le 0.99|S|$}
        \textsc{$\ell_0(C', \widecheck{w})$} \label{line:L0rec1}
    \EndFor
    \If{$\exists C\in \calC$ with $|C| > 0.99|S|$}
        \State $w' \gets \widecheck{w}$, $w'' \gets \widecheck{w'}$
        \While{$|C| > 0.99|S|$ and $|\set{u\in C \mid \widetilde{d_{w''}(u)} > 0.66 |S|}| > 0.99|S|$} \label{line:whileLoopL0}
            \State $R \gets \set{u\in C \mid \widetilde{d_{w''}(u)} < 0.65 |S|}$ \label{line:L0R}
            \ForAll{$u\in R, v\in C\setminus R$}
            \State $T(uv) \gets w'$ \label{line:L0dist2}
            \EndFor
            \State \textsc{$\ell_0(R, w')$} \label{line:L0rec2}
            \State $C \gets C\setminus R, w' \gets \widecheck{w'}, w'' \gets \widecheck{w''}$
        \EndWhile
        \State \textsc{$\ell_0(C, w')$} \label{line:L0rec3}
    \EndIf
\end{algorithmic}
\end{algorithm}

The idea of the postprocessing is that nodes whose degree drops significantly cannot be in a huge cluster without a big cost.
The challenging part is showing that keeping the rest of the nodes in $C$ is sufficient.

In the rest of this section we provide the proof of our main result (\Cref{theorem:lzerofit}) along with its required lemmas.
We remind the reader that even though (for simplicity) \Cref{algorithm:old_structural_clustering} explicitly stores the output distance between every pair of vertices, we cannot afford to do that in the semi-streaming model.
That is why, in the proof of Theorem~\ref{theorem:lzerofit}, we show how we can implicitly represent all these distances by storing a tree.
From this point on, we let $T = \ell_0(V, w_{max})$ be the output of \Cref{alg:HierL0}.

We first provide two results: $T$ is a valid ultrametric, and the depth of the recursion of \Cref{alg:HierL0} is $O(\log{n})$.
Informally, the latter is crucial in order to limit the dependencies across different recursive calls, which in turn allows us to treat different recursive calls as independent from each other.
Of course the components of the algorithm guaranteeing the $O(\log{n})$ recursion depth also make the analysis of the algorithm different.

\begin{restatable}{lemma}{isUltrametric}\label{lem:isUltrametric}
$T = \ell_0(V, w_{max})$ is a valid ultrametric.
\end{restatable}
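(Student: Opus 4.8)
The plan is to show that $T = \ell_0(V, w_{max})$, as constructed by Algorithm~\ref{alg:HierL0}, satisfies the two requirements for being an ultrametric tree: (i) it is a well-defined rooted tree spanning $V$ with all elements of $V$ appearing as leaves at a common depth, and (ii) the distance $T(uv)$ assigned to each pair $u,v$ by the algorithm is consistent, i.e.\ equals the tree distance between $u$ and $v$ in the rooted tree, and the resulting distance function satisfies the ultrametric (strong triangle) inequality $T(uv) \le \max\{T(uw), T(vw)\}$ for all $u,v,w$. Since the excerpt notes that a metric $(V,D)$ is an ultrametric iff it is realized by such a rooted tree, it suffices to exhibit the tree and verify that every distance written on Lines~\ref{line:L0dist1}, \ref{line:L0dist2} of the algorithm agrees with the path length in that tree.

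The natural approach is by induction on the recursion, tracking the invariant that a call $\ell_0(S, w)$ produces a tree-node $A$ at height $w/2$ whose leaf-descendants are exactly the elements of $S$, and that every pair $u,v \in S$ gets assigned $T(uv) \le w$, with equality iff their lowest common ancestor is $A$ (or one of the intermediate nodes $A'$ created in the postprocessing loop, all of which sit at heights $\le w/2$). First I would check the base case $|S|\le 1$ trivially. For the inductive step I would argue: each recursive call $\ell_0(C',\widecheck{w})$ on Line~\ref{line:L0rec1}, $\ell_0(R,w')$ on Line~\ref{line:L0rec2}, and $\ell_0(C,w')$ on Line~\ref{line:L0rec3} is invoked with a strictly smaller weight ($\widecheck{w} < w$, $w' = \widecheck{w} < w$, and in the loop $w'$ keeps decreasing), so heights strictly decrease down the recursion and the children of $A$ are genuinely lower nodes; the clusters $C_1,\dots,C_k$ returned by $S$-Structural Clustering are disjoint (guaranteed by Theorem~\ref{lem:horizontal}) and their union is $S$ (after adding singletons), and the postprocessing only re-partitions the single large cluster $C$ into the pieces $R$ and $C\setminus R$, so the leaf sets of the children of $A$ partition $S$ — this gives a valid tree structure. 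Then for any pair $u,v\in S$: if they land in different clusters of $\calC$, Line~\ref{line:L0dist1} sets $T(uv) = w$, which is exactly $2\cdot(\text{height of }A)$, their LCA; if $u\in R$, $v\in C\setminus R$ in some postprocessing iteration, Line~\ref{line:L0dist2} sets $T(uv) = w'$, matching the height of the node $A'$ which is their LCA; otherwise $u,v$ stay together in a recursive call with smaller weight and we invoke the induction hypothesis. Finally, the ultrametric inequality follows from the standard fact that any leaf-labeled rooted tree with heights realizing pairwise distances gives an ultrametric: for $u,v,w$, two of the three LCAs coincide and are the deepest common ancestor structure, so the two larger distances are equal and the third is at most that.

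The one point requiring genuine care — and the main obstacle — is verifying that the postprocessing \textbf{while}-loop interacts correctly with the height/weight bookkeeping, so that the intermediate nodes $A', A'', \dots$ form a valid descending chain and no pair is assigned two conflicting distances. Concretely I would need to confirm that when the loop sets $C \gets C\setminus R$, $w' \gets \widecheck{w'}$, $w'' \gets \widecheck{w''}$ and repeats, the node created for the next iteration sits strictly below the current one, that the recursion $\ell_0(R, w')$ is consistent with $R \subseteq S$ now being the leaf-set of a node at height $w'/2 < w/2$, and that a pair $u,v$ both surviving into $C\setminus R$ across several iterations only gets its distance set once — at the iteration where they are finally separated, or in the terminal call on Line~\ref{line:L0rec3}. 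This is a finite case analysis over "which iteration of the loop (if any) separates $u$ and $v$", and once it is laid out the ultrametric inequality is immediate from the tree realization; I would also remark that the termination of the loop (hence finiteness of the chain) follows since $|C|$ strictly decreases whenever $R\ne\emptyset$, and if $R=\emptyset$ the loop makes no progress only if its guard still holds, in which case one more pass either shrinks $C$ or exits — a detail worth stating explicitly to avoid an infinite tree.
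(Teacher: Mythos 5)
Your proposal is correct, but it takes a genuinely different route from the paper's proof. You build the tree explicitly---each call $\ell_0(S,w)$ contributes a node at height $w/2$ with leaf set $S$, the children's leaf sets partition $S$, and every distance written on Lines~\ref{line:L0dist1} and~\ref{line:L0dist2} equals twice the height of the relevant lowest common ancestor---and then invoke the standard tree characterization of ultrametrics. The paper never constructs a tree at this stage: it verifies the strong triangle inequality directly, by induction over the recursion, via a case analysis on how a triple $u_1,u_2,u_3$ is first separated (different clusters of $\calC$; all inside a small cluster; all inside the large cluster $C$, split according to the first time one of them lands in $R$). In every separating case two of the three pairs receive exactly the current weight and the third receives at most that weight, so the inequality follows using only the monotonicity invariant that all distances assigned within $\ell_0(S,w)$ are at most $w$ (respectively, the current $w'$). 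That is precisely the bookkeeping you flag as your main obstacle: in your route one must verify that the while-loop produces a descending chain of nodes (the sets $R_i$ split off at levels $w'_1 > w'_2 > \cdots$, with the surviving part of $C$ descending one level per iteration), that each pair is assigned a distance exactly once, and that the assignment matches the LCA height. This is doable, and the paper in fact carries out essentially this construction later, in the proof of \Cref{lem:blackBoxes}, for the semi-streaming implementation; your approach front-loads that work and in exchange yields the explicit tree representation immediately, while the paper's triple-based argument buys a shorter proof that needs no knowledge of where the intermediate nodes sit. Two small corrections: your invariant should be that $T(uv)$ equals twice the height of the LCA (equality $T(uv)=w$ occurs only when the LCA is the node created for $S$ itself; pairs separated at the intermediate nodes get the strictly smaller weights $w'<w$, as your own case analysis then states), and termination of the while loop is cleanest to argue from the fact that $w'$ and $w''$ march down the finite compressed set $\widetilde{D}$, rather than from $|C|$ shrinking, since an iteration with $R=\emptyset$ shrinks nothing and your ``one more pass either shrinks $C$ or exits'' claim is not justified as stated.
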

\begin{proof}
We inductively prove that for any three vertices $u_1,u_2,u_3$, the strong triangle inequality (characterizing ultrametrics) $T(u_1u_2) \le \max\set{T(u_1u_3), T(u_2u_3)}$ is satisfied.
It trivially follows if $|S|=1$.

Otherwise, for any three vertices $u_1,u_2,u_3$, if not all $3$ of them are in the same cluster of $\calC$, then by Line~\ref{line:L0dist1} at least two pairs have distance $w$ in $T$.
The other pair cannot get distance larger than $w$, thus the strong triangle inequality is satisfied.

If all $3$ of them are in a cluster of $\calC$ with size at most $0.99|S|$, then our claim holds inductively, when we recurse in Line~\ref{line:L0rec1}.

If all $3$ of them are in the unique cluster of $\calC$ with size greater than $0.99|S|$, then either all $3$ of them stay in $C$ by the end of the while-loop (and thus inductively our claim holds when recursing in Line~\ref{line:L0rec3}), or there is a first time when one of them (say $u_1$) is in $R$.
Now:
\begin{itemize}
    \item If at the same time all of them are in $R$, inductively our claim holds when we recurse in Line~\ref{line:L0rec2}.
    \item Else, if one more (say $u_2$) is in $R$, then $T(u_1u_3) = T(u_2u_3) = w'$, and $T(u_1u_2)$ can be at most $w'$, therefore the strong triangle inequality is satisfied.
    \item Otherwise $T(u_1u_2) = T(u_1u_3) = w'$, and $T(u_2u_3)$ can be at most $w'$, therefore the strong triangle inequality is satisfied.
\end{itemize}
\end{proof}

To analyze the approximation factor of our algorithm, we first define a tree $OPT'$ that is an $O(1)$ approximation of an optimal tree $OPT$, but has more structure.
We then show that $T$ is an $O(1)$ approximation of $OPT'$, and therefore an $O(1)$ approximation of $OPT$ as well.

\begin{restatable}{lemma}{recDepth}\label{lem:recDepth}
In \Cref{alg:HierL0}, for any given $u\in V$ we have that the number of recursive calls \textsc{$\ell_0(S, w)$} with $u\in S$ are $O(\log{n})$.
\end{restatable}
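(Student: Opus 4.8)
The plan is to track, for a fixed vertex $u$, how the parameter $|S|$ of the recursive calls containing $u$ shrinks geometrically, so that only $O(\log n)$ such calls are possible. I would set up a potential argument on the sizes of the sets $S$ in calls $\ell_0(S,w)$ with $u \in S$, chaining together all the ways a call can spawn a child call that still contains $u$: (i) recursion in Line~\ref{line:L0rec1} on a cluster $C'$ with $|C'| \le 0.99|S|$; (ii) recursion in Line~\ref{line:L0rec2} on the removed set $R$; (iii) recursion in Line~\ref{line:L0rec3} on the final $C$; and (iv) a single iteration of the while-loop, which passes from $C$ to $C\setminus R$ but \emph{does not} change $|S|$ (it only changes the cluster being processed within the same recursive call). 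So really the recursion tree branches only through (i), (ii), (iii), while (iv) is an internal loop.

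First I would handle the ``easy'' shrinking: if $u$ ends up in a cluster $C'$ with $|C'| \le 0.99|S|$ (case (i)), or in a removed set $R$ (case (ii)), then we recurse on a set of size at most $0.99|S|$ — for (ii) because $R \subseteq C$ and, by the guard of the while-loop together with the definition of $R$ in Line~\ref{line:L0R}, the vertices with small estimated degree number at most $0.01|S|$ when the loop still runs, wait — I need to be careful here: actually $|R|$ can be close to $|S|$. Let me reconsider. The key quantitative fact must come from the postprocessing design: $R$ consists of vertices with $\widetilde{d_{w''}(u)} < 0.65|S|$, and the while-loop \emph{continues} only while more than $0.99|S|$ vertices have $\widetilde{d_{w''}(u)} > 0.66|S|$; hence $|R| < 0.01|S|$ at every iteration where we recurse via Line~\ref{line:L0rec2}. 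So case (ii) recurses on a set of size $< 0.01|S|$, which is even better than a constant-factor drop. Thus in cases (i) and (ii) the size parameter shrinks by a factor of at least $0.99$.

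The subtle case is (iii): when the while-loop terminates and we recurse on $C$ via Line~\ref{line:L0rec3} with $w' = \widecheck{w}^{(k)}$ after $k$ loop iterations. Here $|C|$ may still be $> 0.99|S|$, so there is \emph{no} size drop. The argument must then be that the while-loop itself cannot run too long: each iteration strips off a set $R$ of vertices and decreases $w''$ by one ``step'' of $\widetilde{D}$; but more importantly, the loop terminates either because $|C|$ dropped below $0.99|S|$ (in which case the subsequent Line~\ref{line:L0rec3} call does get a size drop relative to the original $S$) or because the degree condition failed. In the former subcase, $C$ shrank by a constant factor \emph{within} the same recursive call, so across recursive calls where $|S|$ does not shrink we can only have $O(\log n)$ of them before forcing a constant-factor size drop overall — so I would define the potential as the pair $(\lceil \log_{1/0.99} |S| \rceil, \text{something bounding loop progress})$ ordered lexicographically, or more cleanly, observe that in the whole chain of calls containing $u$, the quantity $|S|$ is nonincreasing and strictly decreases by a factor $\le 0.99$ whenever we pass through case (i) or (ii), and that we cannot pass through case (iii) twice in a row without an intervening drop (because after a Line~\ref{line:L0rec3} recursion on $C$ with $|C| > 0.99|S|$, in the child call the cluster structure $\calC$ will... ).

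The main obstacle, which I would flag explicitly, is ruling out a long chain of ``Line~\ref{line:L0rec3}-type'' recursions where $|S|$ never shrinks by a constant factor. My proposed resolution: show that such a chain corresponds to a chain of nested sets $S \supseteq C \supseteq C' \supseteq \cdots$ all of size $> 0.99|S|$, and that along this chain the distance parameter $w$ strictly decreases through the values of $\widetilde{D}$; but on entering a Line~\ref{line:L0rec3} call, the set $S$ passed is the $C$ that \emph{survived} the entire while-loop, and the while-loop was entered only when there was a cluster of size $> 0.99|S|$ after $S$-Structural Clustering — and I would argue (using that all these sets are $\ge 0.99|S|$ and everywhere dense, plus the estimate $\widetilde{d_{w''}(u)} \in [(1-\lambda)d_{w''}(u),(1+\lambda)d_{w''}(u)]$) that either the loop reduces $|C|$ below $0.99|S|$ within a bounded number of steps, or we reach $|S| \le 1$. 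Concretely, each loop iteration that does \emph{not} shrink $C$ below $0.99|S|$ must have $> 0.99|S|$ high-degree vertices; combined with everywhere-density and the pigeonhole on how many distinct weights in $\widetilde{D}$ are "relevant", this bounds the loop length, and hence bounds the number of Line~\ref{line:L0rec3}-calls before a genuine constant-factor drop. Summing, the total number of recursive calls containing $u$ along any root-to-leaf path in the recursion tree is $O(\log n)$; since the recursion tree restricted to calls containing $u$ is itself a path (each call has at most one child containing $u$), this gives the claim.

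\begin{proof}[Proof sketch of \Cref{lem:recDepth}]
Fix $u \in V$. Among recursive calls $\ell_0(S,w)$ with $u \in S$, each call spawns at most one child call containing $u$ (the clusters of $\calC$ are disjoint, the sets $R$ are disjoint from $C \setminus R$, and the postprocessing partitions $C$), so these calls form a path $P$. It remains to bound $|P|$.

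Along $P$, the parameter $|S|$ is nonincreasing. We claim it drops by a factor at most $0.99$ at least once every $O(\log n)$ steps. A step of $P$ is of one of three types: \textbf{(a)} recursion in Line~\ref{line:L0rec1} on $C' \ni u$ with $|C'| \le 0.99|S|$; \textbf{(b)} recursion in Line~\ref{line:L0rec2} on $R \ni u$; \textbf{(c)} recursion in Line~\ref{line:L0rec3} on $C \ni u$. In type (a), $|S|$ drops by a factor $\le 0.99$ by definition. In type (b), the while-loop guard in Line~\ref{line:whileLoopL0} ensured more than $0.99|S|$ vertices have $\widetilde{d_{w''}(\cdot)} > 0.66|S|$, hence fewer than $0.01|S|$ vertices have $\widetilde{d_{w''}(\cdot)} < 0.65|S|$, so $|R| < 0.01|S|$ and again $|S|$ drops by a factor $\le 0.99$. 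In type (c), $|C|$ may equal $|S|$ up to lower order terms; but a type-(c) step occurs only at the termination of the while-loop of the parent call, and the while-loop, after finitely many iterations, either reduced the current cluster below $0.99 \cdot (\text{the } |S| \text{ of that call})$ — which on the next type-(c) recursion forces the desired drop — or exited because the degree condition failed, in which case the cluster $C$ passed down satisfies, for every $v \in C$, $\widetilde{d_{w'}(v)} \le 0.66|S|$ for a constant fraction of $v$; since $C$ is everywhere dense (\Cref{lem:horizontal}), $d_{w'}(v) \ge \frac23|C|$ for all $v\in C$, and using $\widetilde{d_{w'}(v)} \ge (1-\lambda)d_{w'}(v) \ge (1-\lambda)\frac23|C|$ we contradict $\widetilde{d_{w'}(v)} \le 0.66|S|$ once $|C| > 0.99|S|$, for $\lambda$ small. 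Hence at most two consecutive steps of $P$ can be of type (c) without an intervening constant-factor drop.

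Therefore every $O(1)$ steps of $P$, the value $|S|$ shrinks by a factor at most $0.99$; since $|S| \le n$ initially and the recursion stops once $|S| \le 1$, we conclude $|P| = O(\log n)$.
\end{proof}
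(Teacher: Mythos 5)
You correctly reduce the problem to a path of calls containing $u$ and correctly handle Lines~\ref{line:L0rec1} and~\ref{line:L0rec2} ($|R|<0.01|S|$ from the while-loop guard), but the resolution you offer for the crucial case --- recursing in Line~\ref{line:L0rec3} with $|C|>0.99|S|$ --- does not work. You apply the everywhere-dense guarantee to the surviving set $C$ that is passed down, at the current weight $w'$ (or $w''$). That guarantee only holds for clusters \emph{freshly output} by $S$-Structural Clustering, at the weight at which that clustering was computed; the set $C$ at the end of the while-loop has had vertices stripped away and the relevant weight has been decreased through several $\widecheck{\cdot}$ steps, and degrees only shrink as the weight drops, so $d_{w''}(v)\ge\frac23|C|$ is not justified. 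Worse, the contradiction you aim for would show that the loop cannot exit with $|C|>0.99|S|$ via failure of the degree condition --- but that exit is a genuine, reachable case (it is exactly why Line~\ref{line:L0rec3} exists), so any argument ruling it out must be wrong. Even numerically there is no contradiction: $(1-\lambda)\cdot\frac23\cdot 0.99|S| = (1-\lambda)\cdot 0.66|S| < 0.66|S|$, so a low estimate $\widetilde{d_{w''}(v)}\le 0.66|S|$ is perfectly compatible with your (unjustified) density bound.

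The paper closes this case by looking one level \emph{down} rather than at the exiting loop: in the child call entered via Line~\ref{line:L0rec3}, $S$-Structural Clustering is run at weight $\widecheck{w'}=w''$, and \emph{its} clusters are everywhere dense, so every vertex $v$ in such a cluster $C^\ast$ has $d_{w''}(v)\ge\frac23|C^\ast|$, i.e.\ $|C^\ast|\le 1.5\,d_{w''}(v)$. Since the loop exited with at most $0.99|S|$ vertices having $\widetilde{d_{w''}}>0.66|S|$, either $C^\ast$ consists only of such vertices (hence $|C^\ast|\le 0.99|S|$), or it contains some $v$ with $\widetilde{d_{w''}(v)}\le 0.66|S|$, giving $d_{w''}(v)\le 0.66|S|/(1-\lambda)$ and $|C^\ast|\le 0.99|S|/(1-\lambda)<0.9999|S|$ for small $\lambda$. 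Thus the constant-factor drop is guaranteed within two consecutive calls --- the conclusion you wanted (``at most two consecutive type-(c) steps''), but obtained from the density of the \emph{next} clustering, not of the set currently being passed down. With that substitution your argument goes through; as written, the key step is a gap.
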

\begin{proof}
If we recurse in Line~\ref{line:L0rec1}, or in Line~\ref{line:L0rec3} after having $|C| \le 0.99|S|$, the size of the vertex-set $S$ drops by a constant factor.

If we recurse in Line~\ref{line:L0rec3} while $|C| > 0.99|S|$, then it holds that $|\set{u\in C \mid \widetilde{d_{w''}(u)} > 0.66 |S|}| \le 0.99|S|$.
When in the next recursion call we run $S$-Structural Clustering, we have that for any cluster $C$ and any vertex $u$ it holds $d_{\widecheck{w'}}(u) \ge \frac{2}{3} |C|$, or equivalently $|C| \le 1.5 d_{\widecheck{w'}}(u)$.
Now if $C$ only contains vertices $v$ with $\widetilde{d_{w''}(v)} > 0.66 |S|$, we get $|C| \le 0.99|S|$.
Otherwise it contains a vertex $u$ with $\widetilde{d_{w''}(u)} \le 0.66 |S|$, which implies $d_{\widecheck{w'}}(u) \le 0.66|S| / (1-\lambda)$, and thus $|C| \le 0.99|S| / (1-\lambda)$, which is less than $0.9999|S|$ for sufficiently small $\lambda$.
In all cases, the size of $C$ drops by a constant factor, and therefore all subsequent recursive calls are called with a vertex-set which is a constant factor smaller than $S$.

If we recurse in Line~\ref{line:L0rec2}, it holds that the size of $R$ is at most $0.01|S|$, as $R$ only contains vertices $u$ with $\widetilde{d_{w''}(u)} < 0.65 |S| < 0.66 |S|$.

In all cases, after at most $2$ recursive calls, the size of the vertex-set argument drops by a constant factor, and thus the claim follows.
\end{proof}

\paragraph*{Obtaining $OPT'$} Let us now describe how to obtain $OPT'$, given $OPT$.
To make the exposition easier, we define some intermediary trees that are also constant factor approximations to $OPT$.

We first use the transformation from \cite{cohen2022fitting} on $OPT$, to acquire $OPT'_1$.
We write $OPT'_1 = f(OPT)$ to denote this transformation.
It works as follows:
We first set $OPT'_1 = OPT$, and proceed top-down.
If a cluster $C$ has way too many missing internal edges ($|\set{uv | u,v \in C, D(uv) \ge w}| > \frac{\epsilon^2 |C|^2}{12.5}$), or way too many outgoing edges ($|\set{uv | u \in C, v \not \in C, D(uv) < w}| > \frac{\epsilon^2 |C|^2}{12.5}$) we set $OPT'(uv)=w$ for all $u,v\in C$.
Viewing $OPT'_1$ as a tree, this corresponds to replacing the subtree rooted at $C$ with a star, effectively separating all vertices in $C$ into singletons in all lower levels.
Then we again proceed top-down; as long as there exists a vertex $u$ in a non-singleton cluster $C$ at level $w$, with more than an $\epsilon$ fraction of its neighbors outside $C$, or less than $(1-\epsilon)$ of its neighbors inside $C$, we set the distance of $u$ to $w$.
Viewing $OPT'_1$ as a tree, this corresponds to removing $u$ from all lower level clusters, effectively making it a singleton in all lower levels.

From the construction of $OPT'$ we get the following properties:

\begin{lemma} \label{lem:transformationVincent}
Given an ultrametric $U$, let $U' = f(U)$.
It holds that:
\begin{itemize}
    \item $\|U'-D\|_0 = O(\|U-D\|_0)$.
    \item every cluster in $U'$ is either an important cluster in its respective level or a singleton.
    \item if $U'(u'v') = w$ for any $u',v'$, then there exist $u,v$ such that $U(uv) = w$.
\end{itemize} 
\end{lemma}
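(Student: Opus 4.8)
The plan is to treat the first two bullets as essentially a restatement of the analysis of the transformation in \cite{cohen2022fitting} (carried out there for $OPT$ but valid verbatim for an arbitrary ultrametric $U$), and to establish the third bullet by direct inspection of the construction. Recall that $f$ performs two top-down sweeps on the tree of $U$: in the first, whenever a cluster $C$ at level $w$ has more than $\epsilon^2|C|^2/12.5$ missing internal edges or more than $\epsilon^2|C|^2/12.5$ outgoing edges, we replace the subtree rooted at $C$ by a star, i.e.\ set $U'(uv)=w$ for all $u,v\in C$; in the second, whenever a vertex $u$ of a non-singleton cluster $C$ at level $w$ fails the important-group condition (it is adjacent to fewer than a $(1-\epsilon)$ fraction of $C$, or has more than an $\epsilon$ fraction of its neighbours outside $C$), we raise $u$'s distance to all of its cluster-mates to $w$, making $u$ a singleton below level $w$.

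For the \emph{second bullet}, I would first argue that both sweeps terminate — the first trivially, the second because the potential $\Phi=\sum_v |\{\text{levels }w : v\text{ lies in a non-singleton cluster at level }w\}|$ strictly decreases at each step — and then observe that, since lower-level operations never change higher-level clusters, the cluster at level $w$ is frozen by the time the second sweep finishes processing level $w$, at which point the stopping condition of the sweep is, per vertex, exactly Definition~\ref{definition:important_group}. Hence every non-singleton cluster of $U'$ is important, so every cluster of $U'$ is important or a singleton.

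For the \emph{first bullet}, I would write $\|U'-D\|_0 \le \|U-D\|_0 + N$, where $N$ is the number of pairs whose $U$-distance is altered by $f$, and charge $N$ to $O(\|U-D\|_0)$ disagreements of $U$. In the first sweep the starred clusters form an antichain of the laminar family of clusters of $U$ (a top-down pass never stars a cluster strictly below an already-starred one, and clusters on a common level are disjoint), so they are pairwise disjoint; starring $C$ alters at most $\binom{|C|}{2}$ pairs, whereas the $\ge \epsilon^2|C|^2/12.5$ missing-internal or outgoing edges that triggered it are disagreements of $U$ (a missing internal edge $uv$, $u,v\in C$, has $U(uv)\le w$ but $D(uv)$ exceeds the threshold; an outgoing edge $uv$, $u\in C$, $v\notin C$, has $U(uv)>w$ but $D(uv)$ below it), and these disagreement sets are disjoint up to $O(1)$ overcounting across the disjoint $C$'s — so this contributes $O(\epsilon^{-2}\|U-D\|_0)$. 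In the second sweep each vertex $u$ is singletonized at most once, in some cluster $C$ at some level $w$; this alters at most $|C|$ pairs, and in either branch of the stopping condition $u$ is incident to $\Omega(\epsilon|C|)$ disagreements of $U$ (either $\ge \epsilon|C|$ non-edges to vertices of $C$, or — when $u$ is adjacent to $\ge(1-\epsilon)|C|$ vertices of $C$, hence $d_w(u)\ge(1-\epsilon)|C|$ — at least $\epsilon d_w(u)\ge \epsilon(1-\epsilon)|C|$ edges to $\overline{C}$); charging to these, with each disagreement charged by at most its two endpoints, adds $O(\epsilon^{-1}\|U-D\|_0)$. Summing, $N=O(\|U-D\|_0)$.

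For the \emph{third bullet}, assume w.l.o.g.\ that the tree of $U$ has no unary internal nodes (contracting them changes no leaf--leaf distance); then every level $w$ of that tree equals $U(xy)$ for the two leaves $x,y$ in distinct child-subtrees of the node at height $w/2$. Every distance that $f$ writes is such a level — the level of the starred cluster in the first sweep, or the level of the cluster in which a vertex is singletonized in the second — and every distance not written by $f$ is inherited unchanged from $U$; hence $\{U'(u'v')\}\subseteq\{U(uv)\}$, which is the claim. The only genuinely delicate point is the charging bookkeeping of the first bullet: ensuring that no disagreement of $U$ is charged $\omega(1)$ times across the two sweeps, and pinning down the edge/non-edge convention at a level so that the pairs triggering an operation really are disagreements of $U$. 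The antichain structure of the stars and the ``each vertex is singletonized at most once'' property are exactly what keep the overcounting to $O(1)$.
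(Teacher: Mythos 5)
Your proposal takes essentially the same route as the paper, which handles the first two bullets by invoking Lemma~3.3 of \cite{cohen2022fitting} (applied to an arbitrary ultrametric, not just $OPT$) and observes that the third bullet is immediate from the construction of $U'$. Your extra self-contained reconstruction of that cited analysis (termination, the antichain of starred clusters charged to the $\Omega(\epsilon^2|C|^2)$ disagreements that triggered them, and each singletonized vertex charged to $\Omega(\epsilon|C|)$ incident disagreements) is the standard argument behind that lemma and is sound, so it adds detail rather than a genuinely different method.
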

\begin{proof}
The first two claims follow from Lemma $3.3$ of \cite{cohen2022fitting}, and the third claim follows directly by the construction of $U'$.
\end{proof}

We then create $OPT'_2$, which only has distances that are in $\widetilde{D}$, by modifying $OPT'_1$.
If $OPT'_1(uv)\not\in \widetilde{D}$, we set $OPT'_2(uv) = \widehat{OPT'_1(uv)}$.
Otherwise, we set $OPT'_2(uv) = OPT'_1(uv)$.
We say we \emph{destroy} a cluster in an ultrametric tree by connecting its children to its parent and then removing said cluster (note that destroying a cluster in an ultrametric tree preserves the ultrametric property).
Viewing $OPT'_2$ as a tree, our transformation corresponds to destroying all clusters at levels that are not in $\widetilde{D}$, one by one.
Finally, we apply the transformation of \Cref{lem:transformationVincent} again, to get $OPT' = f(OPT'_2)$.
Given the tree view of $OPT'$, it is straightforward to verify that it is indeed an ultrametric, as $OPT$ is also an ultrametric.

We now establish structural properties of $OPT'$. Recall that a cluster $C\subseteq V$ is important by Definition~\ref{definition:important_group} if each vertex $u\in C$ is adjacent to at least $(1-\epsilon)$ fraction of vertices in $C$ while having at most $\epsilon$ fraction of its neighbors outside $C$. This definition leads naturally to the following lemma:

\begin{lemma} \label{lem:importantDegrees}
Let $C$ be an important cluster, and $u$ be a vertex in $C$.
Then $(1-\epsilon)|C| \le d(u) \le |C|/(1-\epsilon)$.
Equivalently, for any $u\in C$ we have $(1-\epsilon)d(u) \le |C| \le d(u) / (1-\epsilon)$.
\end{lemma}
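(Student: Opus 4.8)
The plan is to unfold Definition~\ref{definition:important_group} directly. Fix an important cluster $C$ and a vertex $u \in C$. By definition, $u$ is adjacent to at least $(1-\epsilon)|C|$ vertices of $C$; since the neighbors of $u$ inside $C$ are in particular neighbors of $u$, this immediately gives the lower bound $d(u) \ge (1-\epsilon)|C|$. For the upper bound, again by definition $u$ has at most an $\epsilon$ fraction of its neighbors outside $C$, i.e.\ $|N(u) \setminus C| \le \epsilon\, d(u)$, so $|N(u) \cap C| \ge (1-\epsilon) d(u)$. But $|N(u) \cap C| \le |C|$, hence $(1-\epsilon) d(u) \le |C|$, which rearranges to $d(u) \le |C|/(1-\epsilon)$. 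This establishes $(1-\epsilon)|C| \le d(u) \le |C|/(1-\epsilon)$.

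For the equivalent reformulation, I would just invert the two inequalities. From $d(u) \le |C|/(1-\epsilon)$ we get $|C| \ge (1-\epsilon) d(u)$, and from $d(u) \ge (1-\epsilon)|C|$ we get $|C| \le d(u)/(1-\epsilon)$. Combining, $(1-\epsilon) d(u) \le |C| \le d(u)/(1-\epsilon)$, as claimed.

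There is essentially no obstacle here: the statement is a one-line consequence of the two defining conditions of an important group, and the only mild point to be careful about is that "at least a $(1-\epsilon)$ fraction of $C$" refers to $(1-\epsilon)|C|$ while "at most an $\epsilon$ fraction of its neighbors" refers to $\epsilon\, d(u)$ — keeping track of which quantity the fraction is taken of is all that is needed. (Here I am using, as throughout this subsection, that $d(u)$ and $N(u)$ refer to the degree and neighborhood at the level $w$ at which $C$ sits.)
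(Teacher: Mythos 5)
Your proof is correct and follows essentially the same route as the paper: the lower bound reads off the "adjacent to at least $(1-\epsilon)|C|$ vertices of $C$" condition, and the upper bound decomposes $N(u)$ into neighbors inside and outside $C$ (the paper phrases this as $d(u) \le |C| + \epsilon d(u)$, which is the same inequality you derive). The reformulation by inverting the two inequalities is also the intended, trivial step.
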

\begin{proof}
Directly by the definition of an important cluster~\ref{definition:important_group}, $u$ is connected with $(1-\epsilon)|C|$ vertices in $C$, therefore $(1-\epsilon)|C| \le d(u)$.
On the other hand, $u$ can be connected with all vertices in $C$, and only an $\epsilon$ fraction of its edges can be out of $C$.
Therefore $d(u) \le |C| + \epsilon d(u)$, which means $d(u) \le |C| / (1-\epsilon)$.
\end{proof}

We then prove the following properties:

\begin{lemma}
For any $uv$ we have that both $OPT'_2(uv), OPT'(uv) \in \widetilde{D}$.
\end{lemma}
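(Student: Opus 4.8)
The plan is to prove the statement by carefully tracing through the construction of $OPT'_2$ and $OPT'$ and checking that no step introduces a distance outside $\widetilde{D}$. First I would handle $OPT'_2$: by its definition, every pair $uv$ has either $OPT'_2(uv) = OPT'_1(uv)$ (in the case where $OPT'_1(uv) \in \widetilde{D}$ already) or $OPT'_2(uv) = \widehat{OPT'_1(uv)}$ (in the case where $OPT'_1(uv) \notin \widetilde{D}$). In the first case the membership is immediate, and in the second case $\widehat{OPT'_1(uv)}$ is by definition the smallest weight in $\widetilde{D}$ exceeding $OPT'_1(uv)$, hence it too lies in $\widetilde{D}$. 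One subtlety to address is that $\widehat{w}$ might fail to exist if $OPT'_1(uv)$ exceeds every element of $\widetilde{D}$; I would note that $w_{max}$ (or whichever is the largest entry) can be assumed to be in $\widetilde{D}$, so this edge case does not arise — alternatively one checks that the construction of $\widetilde{D}$ as a compressed set always retains the maximum distance. The tree-view remark (destroying clusters at levels not in $\widetilde{D}$) confirms that the resulting object still has all its pairwise distances drawn from $\widetilde{D}$, so $OPT'_2(uv) \in \widetilde{D}$ for all $uv$.

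Next I would handle $OPT'$. By construction $OPT' = f(OPT'_2)$, and the key observation is the third bullet of \Cref{lem:transformationVincent}: if $OPT'(u'v') = w$ for any pair $u',v'$, then there exist $u,v$ with $OPT'_2(uv) = w$. Since we have just shown $OPT'_2(uv) \in \widetilde{D}$ for every pair, it follows that $w = OPT'_2(uv) \in \widetilde{D}$, hence $OPT'(u'v') \in \widetilde{D}$. This gives the claim for $OPT'$ as well. So the proof is essentially a two-line argument once the relevant definitions and the already-established \Cref{lem:transformationVincent} are invoked.

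The main obstacle — really the only thing requiring care — is the boundary behavior of the $\widehat{\cdot}$ operator, i.e. making sure that when $OPT'_1(uv) \notin \widetilde{D}$ the value $\widehat{OPT'_1(uv)}$ is well-defined and lies in $\widetilde{D}$. This hinges on the maximum distance being present in $\widetilde{D}$, which should follow from how the compressed set / distances summary is built (the sketches preserve, among others, the largest weight). I would state this explicitly as a small remark and otherwise keep the proof short. Concretely:

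\begin{proof}
Consider first $OPT'_2$. By construction, for every pair $uv$ we have $OPT'_2(uv) = OPT'_1(uv)$ if $OPT'_1(uv) \in \widetilde{D}$, and $OPT'_2(uv) = \widehat{OPT'_1(uv)}$ otherwise. In the former case $OPT'_2(uv) \in \widetilde{D}$ trivially. In the latter case, $\widehat{OPT'_1(uv)}$ is by definition the smallest weight in $\widetilde{D}$ that is strictly larger than $OPT'_1(uv)$; since the largest input distance is retained in $\widetilde{D}$ and $OPT'_1$ only uses distances that appear in $OPT$ (by the third item of \Cref{lem:transformationVincent}, recalling $OPT$ uses only entries of $D$), such a weight exists, so again $OPT'_2(uv) \in \widetilde{D}$.

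Now consider $OPT' = f(OPT'_2)$. Suppose $OPT'(uv) = w$ for some pair $uv$. By the third item of \Cref{lem:transformationVincent} applied to $U = OPT'_2$, there exist $u', v'$ with $OPT'_2(u'v') = w$. By the previous paragraph $OPT'_2(u'v') \in \widetilde{D}$, hence $w \in \widetilde{D}$. Therefore $OPT'(uv) \in \widetilde{D}$ for every pair $uv$, which completes the proof.
\end{proof}
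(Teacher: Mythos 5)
Your proof is correct and follows essentially the same route as the paper: membership for $OPT'_2$ comes directly from its construction (either keeping $OPT'_1(uv)$ when it is already in $\widetilde{D}$ or replacing it by $\widehat{OPT'_1(uv)} \in \widetilde{D}$), and membership for $OPT'$ comes from the fact that $f$ introduces no distances not already present in its argument, i.e.\ the third item of \Cref{lem:transformationVincent}. Your extra remark about the well-definedness of $\widehat{\cdot}$ at the top of the range is a reasonable detail that the paper simply elides.
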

\begin{proof}
$OPT'_2(uv) \in \widetilde{D}$ follows directly by construction of $OPT'_2$.
$OPT'$ is obtained by modifying $OPT'_2$ without introducing any distances not in $OPT'_2$.
\end{proof}

\begin{lemma} \label{lem:subsetImportant}
Every non-singleton cluster in $OPT'_1, OPT'$ at level $w$ is an important cluster of $(V,E_w)$.
Every non-singleton cluster in $OPT'_2$ at level $w$ is a subset of some important cluster of $(V,E_w)$.
\end{lemma}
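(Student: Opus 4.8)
The plan is to treat $OPT'_1$ and $OPT'$ directly, and to reduce the statement for $OPT'_2$ to the one for $OPT'_1$ by carefully tracking how levels get collapsed when distances are rounded into $\widetilde D$. For $OPT'_1$ and $OPT'$ nothing is needed beyond the second bullet of \Cref{lem:transformationVincent}: since $OPT'_1=f(OPT)$ and $OPT'=f(OPT'_2)$, every tree node of $OPT'_1$ (resp.\ $OPT'$) sitting at level $w$ is either a singleton or an important cluster of $(V,E_w)$, which is exactly the first assertion.

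For $OPT'_2$ I would first identify its non-singleton clusters with clusters of $OPT'_1$. Recall $OPT'_2(uv)=\min\set{e\in\widetilde D : e\ge OPT'_1(uv)}$, so for every $w\in\widetilde D$ one has $\set{v : OPT'_2(uv)\le w}=\set{v : OPT'_1(uv)\le w}$. Hence a non-singleton tree node $C$ of $OPT'_2$ at level $w$ — and all levels of $OPT'_2$ lie in $\widetilde D$ — is precisely the cluster at level $w$ of $OPT'_1$, i.e.\ the deepest tree node of $OPT'_1$ whose leaf-set has all pairwise distances at most $w$; write $w_C:=\max_{u,v\in C}OPT'_1(uv)\in D$ for its level in $OPT'_1$. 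Since $C$ sits at level $w$ and not lower in $OPT'_2$, one checks $w=\min\set{e\in\widetilde D : e\ge w_C}$; thus $w_C\le w$, and when $w_C<w$ we get $w_C\notin\widetilde D$, no element of $\widetilde D$ lies in $[w_C,w)$, and therefore $\widecheck w<w_C$ with $\widecheck{w_C}=\widecheck w$. Applying the already-proved first assertion to $OPT'_1$ then shows $C$ is an important cluster of $(V,E_{w_C})$.

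It then remains to push from level $w_C$ up to level $w$; if $w_C=w$ there is nothing to do, so assume $w_C<w$. Here I would invoke the compressed-set property of $\widetilde D$: for every weight $x$ with $\widecheck w<x<w$ we have $d_x(u)\le(1+\delta)d_{\widecheck w}(u)$ for all $u$, and since $\widecheck w\le w_C$ this keeps every degree at a level in $[\widecheck w,w)$ within a $(1+\delta)$ factor of $d_{w_C}(u)$, so the only neighbours a vertex of $C$ can gain going up to level $w$ are those at distance exactly $w$. Combining this with \Cref{lem:importantDegrees} (which relates $d_{w_C}(u)$ and $|C|$ for $u\in C$), every $u\in C$ still has at least $(1-\epsilon)|C|$ of its $E_w$-neighbours inside $C$ and, apart from the distance-exactly-$w$ edges, only an $O(\delta+\epsilon)$ fraction of them outside $C$. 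To finish I would let $C'$ consist of $C$ together with every vertex that is $E_w$-adjacent to all but an $O(\epsilon)$ fraction of $C$, and argue that $C'$ is an important cluster of $(V,E_w)$ with $C\subseteq C'$, using the importance of $C$ at the finer level $w_C$, the compressed-set bound, and the pairwise disjointness of important clusters of a fixed graph. I expect this last step — defining $C'$ and verifying both conditions of Definition~\ref{definition:important_group} for it — to be the main obstacle; everything before it is bookkeeping about how the levels of $OPT$, $OPT'_1$, $OPT'_2$ and $\widetilde D$ interleave, plus black-box use of \Cref{lem:transformationVincent} and \Cref{lem:importantDegrees}.
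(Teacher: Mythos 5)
Your treatment of $OPT'_1$ and $OPT'$ is fine and matches the paper (it is exactly the second bullet of \Cref{lem:transformationVincent}). The gap is in the $OPT'_2$ part, and it is the step you yourself flag as the ``main obstacle'': you never actually establish that your candidate set $C'$ (namely $C$ together with every vertex $E_w$-adjacent to all but an $O(\epsilon)$ fraction of $C$) is an important cluster of $(V,E_w)$, and it is far from clear that it is. The genuine important cluster of $(V,E_w)$ containing $C$ can be much larger than $C$, in which case its members need only be adjacent to a $(1-\epsilon)$ fraction of \emph{that} larger set and may miss a large fraction of $C$, so your $C'$ need not coincide with it; conversely, nothing in your setup controls how many $E_w$-neighbours the vertices you add to $C'$ have outside $C'$, which is the second condition of Definition~\ref{definition:important_group}. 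The compressed-set property only bounds degrees strictly below $w$, so the mass of edges of weight exactly $w$ is unconstrained, and no amount of bookkeeping at level $w_C$ fixes this. So as written the argument for $OPT'_2$ is incomplete, and the sketched completion does not obviously go through.

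The detour to $w_C$ is also unnecessary, which is how the paper avoids the problem entirely. In this paper ``a cluster at level $w$'' means a \emph{maximal} set of leaves with pairwise tree-distance at most $w$ (not only a tree node at its native level), and since $OPT'_2(uv)\ge OPT'_1(uv)$ for every pair, the level-$w$ partition induced by $OPT'_2$ refines the level-$w$ partition induced by $OPT'_1$ at every $w$ (this is the sense in which ``$OPT'_2$ only splits clusters of $OPT'_1$''). Hence any non-singleton level-$w$ cluster of $OPT'_2$ is contained in a level-$w$ cluster of $OPT'_1$, which is non-singleton and therefore, by the first assertion applied \emph{at level $w$}, an important cluster of $(V,E_w)$; in your own notation, for $w\in\widetilde D$ you already showed $C$ itself is a level-$w$ cluster of $OPT'_1$, so the first assertion at level $w$ finishes immediately. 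No transfer from $(V,E_{w_C})$ to $(V,E_w)$, no compressed-set property, and no construction of $C'$ is needed.
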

\begin{proof}
Every non-singleton cluster in $OPT'_1$ or in $OPT'$ is an important cluster (not just a subset of one), directly by \cite{cohen2022fitting} (Lemma 3.3, using $\epsilon$ instead of $\epsilon/8$).

As $OPT'_2$ is only splitting clusters of $OPT'_1$, the claim follows.
\end{proof}

It holds that all described trees are $O(1)$ approximations of $OPT$.

\begin{lemma} \label{lem:costOPTp}
$\|OPT'-D\|_0 = O(\|OPT-D\|_0)$.
\end{lemma}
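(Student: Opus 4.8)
The plan is to exploit the chain $OPT \to OPT'_1 \to OPT'_2 \to OPT'$ together with the identities $OPT'_1 = f(OPT)$ and $OPT' = f(OPT'_2)$. By the first bullet of \Cref{lem:transformationVincent}, $\|OPT'-D\|_0 = O(\|OPT'_2-D\|_0)$ and $\|OPT'_1-D\|_0 = O(\|OPT-D\|_0)$, so the whole statement reduces to the single inequality $\|OPT'_2 - D\|_0 = O(\|OPT'_1 - D\|_0)$. Equivalently, I must show that rounding every distance of $OPT'_1$ up to the nearest value in $\widetilde{D}$ (which is exactly what passing to $OPT'_2$ does) increases the number of violated pairs by at most a constant factor.

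First I would isolate the pairs that can become newly violated. Since $OPT'_2(uv) = OPT'_1(uv)$ whenever $OPT'_1(uv) \in \widetilde{D}$, a pair $uv$ satisfied by $OPT'_1$ can be violated by $OPT'_2$ only if $OPT'_1(uv) = D(uv) = w$ with $w \notin \widetilde{D}$. For such a pair, $C := \mathrm{LCA}_{OPT'_1}(u,v)$ is a non-singleton cluster of $OPT'_1$ at level $w$, hence (by \Cref{lem:subsetImportant}) an important cluster of $(V,E_w)$. I would group the newly-violated pairs by their LCA: for each such $C$, with children $C_1,\dots,C_k$, let $P_C$ be the set of newly-violated pairs split at $C$ and let $E_C$ be the set of pairs split at $C$ that are already violated by $OPT'_1$ (so $D(uv)\ne w$ for them). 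Then $2(|P_C|+|E_C|) = |C|^2 - \sum_i |C_i|^2 =: Q_C$, the sets $E_C$ are pairwise disjoint across clusters, and $\sum_C |E_C| \le \|OPT'_1-D\|_0$. So it suffices to prove the per-cluster bound $|P_C| = O(|E_C|)$.

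The two ingredients I would use are: (i) since $\widetilde{D}$ is a compressed set, $w\notin\widetilde{D}$ gives $|N_w(u)\setminus N_{\widecheck{w}}(u)| \le \delta\, d_w(u)$, so each $u\in C$ has at most $\delta\, d_w(u)$ partners $v$ with $D(uv)=w$; and (ii) importance of $C$, which via \Cref{lem:importantDegrees} gives $d_w(u)\le |C|/(1-\epsilon)$ for $u\in C$. Summing (i) over $C$ yields $|P_C| \le \tfrac{\delta}{2(1-\epsilon)}|C|^2$. If $Q_C \ge \tfrac{2\delta}{1-\epsilon}|C|^2$, then $|E_C| = Q_C/2 - |P_C| \ge Q_C/4 \ge |P_C|$ and we are done. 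The main obstacle is the opposite regime, where $Q_C$ is small, i.e. a single child $C_{i^*}$ swallows almost all of $C$: then there are few split pairs at all, the crude bound on $|P_C|$ is useless, and one has to charge more carefully. Here I would argue: an elementary estimate ($\sum_{i\ne i^*}|C_i|^2 \le (|C|-|C_{i^*}|)^2$, forcing $C_{i^*}$ to be a majority child for $\delta$ small enough) shows $t := |C|-|C_{i^*}| = O(\delta|C|)$; then each of the $t$ vertices $v\in C\setminus C_{i^*}$ forms $|C_{i^*}|=|C|-t$ split pairs with vertices of $C_{i^*}$, of which at most $\delta\, d_w(v)\le \tfrac{\delta}{1-\epsilon}|C|$ can have $D(vu)=w$, so $v$ contributes at least $|C|/2$ distinct pairs to $E_C$; summing over these $t$ vertices gives $|E_C|\ge t|C|/2$, while $|P_C|\le Q_C/2\le t|C|$ because every split pair has an endpoint outside $C_{i^*}$. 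In both cases $|P_C| = O(|E_C|)$, hence $\sum_C |P_C| = O(\|OPT'_1-D\|_0)$, so $\|OPT'_2-D\|_0 \le \|OPT'_1-D\|_0 + \sum_C|P_C| = O(\|OPT'_1-D\|_0)$, and combining with the two applications of \Cref{lem:transformationVincent} above yields $\|OPT'-D\|_0 = O(\|OPT-D\|_0)$.
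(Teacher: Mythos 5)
Your proof is correct, and while it shares the paper's skeleton — reduce via \Cref{lem:transformationVincent} to the single inequality $\|OPT'_2-D\|_0 = O(\|OPT'_1-D\|_0)$, then combine the compressed-set property with \Cref{lem:subsetImportant} and \Cref{lem:importantDegrees} — the charging argument itself is genuinely different. The paper first proves a structural claim: a cluster $C$ of $OPT'_1$ at a level $w\notin\widetilde{D}$ cannot contain two non-singleton clusters at level $\widecheck{w}$, so every newly violated pair has an endpoint that is already a singleton at the bottom of the gap; it then charges, per vertex and per gap of consecutive missing levels, the at most $|\bigcup_i N_{w_i}(u)| = O(d_{\widecheck{w_1}}(u))$ new disagreements against the $d_{\widecheck{w_1}}(u)$ disagreements $OPT'_1$ already pays at that singleton vertex. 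You instead group the new violations by their LCA node and compare, locally at each such node, the newly violated split pairs $P_C$ against the already violated split pairs $E_C$, with a two-case analysis (no dominant child versus a dominant child $C_{i^*}$ with $|C|-|C_{i^*}| = O(\delta|C|)$). Your route buys locality and robustness: the accounting never leaves a single tree node, disjointness of the charges is immediate since LCA groups partition the pairs, and you never need the singleton structural claim or the union-of-neighborhoods argument across the levels of a gap (where the paper's write-up is terser about a vertex being charged in several gaps). The paper's route buys a crisper structural picture — only pairs touching gap-bottom singletons are affected — at the cost of that extra structural lemma. Both arguments need $\delta,\epsilon$ sufficiently small and yield the same conclusion.
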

\begin{proof}
By \Cref{lem:transformationVincent}, it suffices to show that $\|OPT'_2-D\|_0 = O(\|OPT'_1-D\|_0)$.

Let $C$ at level $w$ be a cluster of $OPT'_1$ that we modify in $OPT'_2$ (therefore $w\not \in \widetilde{D}$, by construction).

We first prove that there are no two non-singleton clusters $C_1,C_2$ at level $\widecheck{w}$ in $OPT'_1$ such that $C_1\subseteq C, C_2\subseteq C$.
Assume for the sake of contradiction that there exist such $C_1,C_2$, and w.l.o.g. $|C_1| \le |C_2|$.
By \Cref{lem:subsetImportant} we have that $C_1,C_2,C$ are all important clusters in their respective levels.
For any $u\in C_1$ we have $d_{\widecheck{w}}(u) \le |C_1| / (1-\epsilon)$, by \Cref{lem:importantDegrees}.
As $u\in C$, again by \Cref{lem:importantDegrees} we have $d_{w}(u) \ge (1-\epsilon)|C| \ge 2(1-\epsilon)|C_1|$.
But as $w\not \in \widetilde{D}$, we have that $d_w(u) \le (1+\delta)d_{\widecheck{w}}(u)$.
But then it should be $d_{\widecheck{w}}(u) \ge 2(1-\epsilon)|C_1|/(1+\delta)$ and at the same time $d_{\widecheck{w}}(u) \le |C_1| / (1-\epsilon)$, which is a contradiction for sufficiently small $\delta,\epsilon$.

Therefore, the only difference between $OPT'_1, OPT'_2$ is that certain nodes become singletons at some consecutive levels $w_1 > \ldots > w_k \not \in \widetilde{D}$ of $OPT'_2$ (for which $\widecheck{w_1} = \ldots = \widecheck{w_k}$), while they were already singletons at level $\widecheck{w_1}$ in $OPT'_1$.
Thus, for pairs including any such vertex $u$, the cost of $OPT'_2$ is increased by at most the number of outgoing edges of $u$ in these levels, that is by $x=|\bigcup_{i=1}^{k} N_{w_i}(u)|$.
But as $w_1 > \ldots > w_k$, we have $N_{w_1}(u) \supseteq \ldots \supseteq N_{w_k}(u)$, therefore $x=d_{w_1}(u) = O(d_{\widecheck{w_1}}(u))$ (due to $w_1\not \in \widetilde{D}$). 
However $u$ was already a singleton at level $\widecheck{w_1}$ of $OPT'_1$, and thus $OPT'_1$ was paying 
$d_{\widecheck{w_1}}(u)$ for pairs including $u$.
This proves our claim.
\end{proof}

We now prove some structural properties of $T$ related to $OPT'$.
Informally:
\begin{itemize}
    \item For every cluster $C$ of $OPT'$, there exists a cluster $C_T$ of $T$ at the same level.
    \item No cluster $C_T$ of $T$ contains two non-singleton clusters of $OPT'$ of the same level.
    \item Every cluster of $T$ is dense inside.
\end{itemize}

\begin{lemma} \label{lem:inclusion}
Let $C$ be a cluster of $OPT'$. Then there exists a cluster $C' \supseteq C$ of $T$ at the same level.
\end{lemma}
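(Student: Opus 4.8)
I would prove, by induction on the recursion tree of \Cref{algorithm:old_structural_clustering}, the following invariant, of which \Cref{lem:inclusion} is an immediate consequence: for every recursive call $\ell_0(S,w)$ and every cluster $C$ of $OPT'$ at level $w_C \le w$, if $C \cap S \ne \emptyset$ then $C \subseteq S$. Given this, \Cref{lem:inclusion} follows by tracing $C$ downward from the root call $\ell_0(V,w_{max})$: at any call $\ell_0(S,w)$ with $C \subseteq S$ and $w > w_C$, the recursive children have pairwise disjoint vertex-sets that partition $S$, so by the invariant $C$ lies entirely inside exactly one child's vertex-set; and since $OPT'$ uses only distances of $\widetilde{D}$ while (assuming $w_{max}\in\widetilde{D}$) each child is invoked at the $\widetilde{D}$-predecessor of its parent's level, the levels along the trace run through consecutive elements of $\widetilde{D}$, so the trace eventually reaches a call $\ell_0(C',w_C)$ with $C\subseteq C'$, and $C'$ is marked as a core cluster at level $w_C$.

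The base case $S=V$ is trivial. For the inductive step, fix a call $\ell_0(S,w)$ satisfying the invariant, let $\calC$ be the clustering returned by the $S$-Structural Clustering subroutine at level $\widecheck{w}$, and consider first a child $\ell_0(C_i,\widecheck{w})$ with $|C_i| \le 0.99|S|$; let $C$ be a cluster of $OPT'$ with $w_C \le \widecheck{w}$ and $C \cap C_i \ne \emptyset$. Let $C^{+} \supseteq C$ be the cluster of $OPT'$ at level $\widecheck{w}$ containing $C$. I would argue that $C^{+}$ is an important group of $(V,E_{\widecheck{w}})$: if $\widecheck{w}$ is a level at which $OPT'$ has an explicit node this is exactly \Cref{lem:subsetImportant}; otherwise $C^{+}$ is the leaf-set of an $OPT'$-node at a strictly lower level, and one controls the growth of degrees up to level $\widecheck{w}$ using the compressed-set property of $\widetilde{D}$ together with \Cref{lem:importantDegrees} to preserve the density and outgoing-edge bounds. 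Since $C^{+}\cap S \supseteq C \ne \emptyset$ and $C^{+}$ is a cluster of $OPT'$ at level $\le \widecheck{w} \le w$, the invariant for $\ell_0(S,w)$ gives $C^{+}\subseteq S$; so $C^{+}$ is an important group contained in $S$, and by \Cref{lem:horizontal} it lies in a single cluster of $\calC$. As $C^{+}$ meets $C_i$ and the clusters of $\calC$ are disjoint, that cluster is $C_i$, hence $C \subseteq C^{+} \subseteq C_i$, and the invariant holds for the child.

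For the children produced by the postprocessing while-loop, $\ell_0(R_i,w'_i)$ and the final $\ell_0(C_{m+1},w'_{m+1})$ (with $w'_1=\widecheck{w}$, $w'_{i+1}=w''_i=\widecheck{w'_i}$, and $R_i$ the set of vertices of the current cluster whose estimated degree at level $w''_i$ is below $0.65|S|$), the crux is that no $R_i$ splits a cluster $C$ of $OPT'$ with $w_C \le w'_i$ that meets it, i.e.\ $C\subseteq R_i$ or $C\cap R_i=\emptyset$. When $w_C < w'_i$ we have $w''_i \ge w_C$, so by \Cref{lem:importantDegrees} every $u\in C$ has $d_{w''_i}(u)$ within a bounded factor of $|C|$, and the gap between the thresholds $0.65|S|$ and $0.66|S|$ is large enough (given the small constants $\epsilon,\lambda$) to absorb this factor together with the $(1\pm\lambda)$ degree-estimation error, so the test defining $R_i$ evaluates uniformly on $C$. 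The borderline case $w_C = w'_i$ (where $w''_i=\widecheck{w_C}$ is one $\widetilde{D}$-level below $w_C$) is the most delicate: here one performs a case analysis on $|C|$ versus $|S|$, using that $C$ sits inside the size-$(>0.99|S|)$ cluster $\hat{C}$ and that the while-loop is still running, to conclude that either $|C|$ is small and $C$ is swept entirely into $R_i$, or $|C|$ is large and $C$ survives intact into the remainder. In all cases, by the disjointness of $\{R_i\}_i$ and $C_{m+1}$, $C$ lands in exactly one child, and the invariant propagates via the invariant for $\ell_0(S,w)$.

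The main obstacle is reconciling the two ``discretizations of scale'' at play: $OPT'$ lives on $\widetilde{D}$ but may skip some of its levels, whereas \Cref{algorithm:old_structural_clustering} descends through $\widetilde{D}$ one element at a time. Showing that the pertinent leaf-set of $OPT'$ remains an important group at whatever level $\widecheck{w}$ the algorithm is currently clustering at — and, in the same spirit, that the postprocessing's degree tests never cut through an $OPT'$-cluster — is precisely where one needs the compressed-set guarantee $d_w(u)\le(1+\delta)\,d_{\widecheck{w}}(u)$ for $w\notin\widetilde{D}$ and the exact numerical constants ($0.65,0.66,0.99$ and small $\epsilon,\lambda,\delta$) built into the algorithm; the rest is bookkeeping. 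I note that \Cref{lem:inclusion} is the first of a short sequence of structural comparisons between $T$ and $OPT'$ — the companions being that no core cluster of $T$ contains two non-singleton $OPT'$-clusters of the same level, and that every core cluster of $T$ is internally dense — which together will let one bound $\|T-D\|_0$ against $\|OPT'-D\|_0$.
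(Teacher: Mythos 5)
Your overall strategy contains the same ingredients as the paper's proof: a top-down induction, importance of the $OPT'$ clusters at each level (\Cref{lem:subsetImportant}), the $S$-Structural Clustering guarantee for core clusters, and a degree-threshold dichotomy via \Cref{lem:importantDegrees} for the while-loop removals. But the scaffolding you build around them has a genuine gap. Your invariant speaks only about the vertex-sets $S$ of recursive calls (the core clusters), and your derivation of the lemma from it --- ``the trace eventually reaches a call $\ell_0(C',w_C)$, and $C'$ is marked as a core cluster at level $w_C$'' --- fails: when $C$ stays inside the surviving big cluster, the while loop steps through several levels of $\widetilde{D}$ without making any recursive call whose argument contains $C$ (the final call $\ell_0(C,w')$ happens only after the loop exits, possibly far below $w_C$, so it is not true that each child is invoked at the $\widetilde{D}$-predecessor of its parent's level). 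At those skipped levels the only cluster of $T$ containing $C$ is one of the intermediate, non-core clusters $C\setminus R$ created inside the loop, which your invariant never mentions. The paper handles exactly this by running the induction level-by-level over clusters of $T$ (core or not), splitting into the two cases ``$C'_p$ is a core cluster'' and ``$C'_p$ is a while-loop cluster.''

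Moreover, the invariant in the strong form you state is false, and your ``borderline case $w_C=w'_i$'' is an attempt to prove precisely the false part. A non-singleton $OPT'$ cluster $C$ at level $w'_i$ is important at level $w'_i$, but $R_i$ is defined by degrees at $w''_i=\widecheck{w'_i}$, one level below, where $C$ may already have split in $OPT'$ into children of very different sizes: one child can have uniformly small degree at level $w''_i$ (hence lie entirely in $R_i$) while another has uniformly large degree (hence survives), so $R_i$ can genuinely cut $C$, contradicting ``$C\subseteq R_i$ or $C\cap R_i=\emptyset$.'' This does no harm to the lemma, because the $R_i$-versus-rest distances are set to exactly $w'_i$ (Line~\ref{line:L0dist2}), so the cluster of $T$ at level $w'_i$ is still the whole current big cluster; but it does falsify your invariant for the child $\ell_0(R_i,w'_i)$, and with it the trace argument. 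Relatedly, for $w_C<w'_i$ you invoke \Cref{lem:importantDegrees} to bound $d_{w''_i}(u)$ within a constant factor of $|C|$, which is only legitimate if $C$ is still a cluster of $OPT'$ at level $w''_i$; otherwise you must pass to the $OPT'$ cluster at level $w''_i$ containing it. The repair in all cases is what the paper does: compare, at each single level of $\widetilde{D}$, the $OPT'$ cluster at that level (important at that level) with the split the algorithm performs affecting that level, measuring degrees at that same level, and let the induction step down one level at a time over clusters of $T$ rather than over recursive-call arguments.
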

\begin{proof}
Let $OPT'_{sub}$ be a subtree of $OPT'$, whose root corresponds to a cluster $A\subseteq V$ and is at level $w$.
We prove an even stronger statement, namely that if we run \Cref{alg:HierL0} with parameters $(S,w)$ and obtain $T_{sub}$, where $S\supseteq A$, then for any cluster $C$ of $OPT'_{sub}$ there exists a cluster $C' \supseteq C$ of $T_{sub}$ at the same level.
This immediately implies the lemma, by setting $OPT'_{sub}=OPT'$ and $T_{sub}=T$ by running \Cref{alg:HierL0} with parameters $(V,w_{max})$.

The claim immediately follows if $|C|=1$.
It also follows for the topmost cluster of $OPT'_{sub}$, as it corresponds to $A$ while the root of $T_{sub}$ at the same level corresponds to $S\supseteq A$.

Now assume $C$ is a cluster of $OPT'_{sub}$, and let $C_p$ be its parent cluster. Inductively, $C_p$ is a subset of some cluster $C'_p$ of $T_{sub}$ at the same level.

If $C'_p$ is a core cluster, then $C$ is a subset of an important cluster by \Cref{lem:subsetImportant}.
As we obtain the children of $C'_p$ by $C'_p$-Structural Clustering, we obtain a cluster $C'$ containing the important cluster.

If $C'_p$ is not a core cluster, we find a set $R\subseteq C'_p$, create cluster $C'_p\setminus R$ in $T_{sub}$ at level $w$, and recurse on $R$ at level $\widehat{w}$.
Notice that, by Line~\ref{line:L0R}, $R$ only contains vertices $u$ with $\widetilde{d_{w''}(u)} < 0.65 |S|$ (where $S$ is such that $0.99|S| < |C'_p| \le |S|$), and there are at most $0.01|S|$ such vertices (Line~\ref{line:whileLoopL0}).
Therefore $0.98|S| < |C'_p\setminus R| \le |S|$.

If $u\in R$, then $d_w(u) < \frac{1+\lambda}{1-\lambda} 0.65|S| < 0.653 |S|$ for sufficiently small $\lambda$.
Similarly, at least $0.99|S|$ vertices in $C'_p$ have degree larger than $0.657|S|$ at level $w$ (Line~\ref{line:whileLoopL0}).

Now if $|C| > 0.01|S|$, then it contains some vertex with degree larger than $0.657|S|$ at level $w$; as $C$ is an important cluster (\Cref{lem:subsetImportant}), all vertices inside it have degree above $0.653 |S|$ (\Cref{lem:importantDegrees}), and therefore completely lies in $C'_p\setminus R$.
If $|C| \le 0.01|S|$, then again by \Cref{lem:importantDegrees} it can only contain vertices with degree at most $0.02|S|$, therefore only contains vertices in $R$; as we recurse on $R$, we inductively prove our claim.
\end{proof}

\begin{lemma} \label{lem:Tdense}
Let $C$ be a non-singleton cluster of $OPT'$ at level $w$.
Then any $u\in C$ has $d_w(u) > 0.6|C|$.
\end{lemma}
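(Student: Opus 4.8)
\textbf{Proof proposal for Lemma~\ref{lem:Tdense}.}
The plan is to deduce the bound directly from the structural description of $OPT'$ that has already been established, since the real work was done in building $OPT'$ and proving its clusters are important. Concretely, I would first recall that $OPT' = f(OPT'_2)$, and invoke the first sentence of Lemma~\ref{lem:subsetImportant}: every non-singleton cluster of $OPT'$ at level $w$ is an \emph{important} cluster of the graph $(V,E_w)$ in the sense of Definition~\ref{definition:important_group}. In particular the fixed non-singleton cluster $C$ at level $w$ is important in $(V,E_w)$. (Incidentally, since every distance appearing in $OPT'$ lies in $\widetilde{D}$, such a level $w$ is itself an element of $\widetilde{D}$; this will not be needed, but it clarifies what ``level $w$'' refers to.)

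Next I would simply unfold Definition~\ref{definition:important_group} for $C$: every vertex $u\in C$ is adjacent, in $E_w$, to at least a $(1-\epsilon)$ fraction of the vertices of $C$. Hence $|N_w(u)\cap C|\ge (1-\epsilon)|C|$, and since $N_w(u)$ is a superset of $N_w(u)\cap C$ we get $d_w(u) = |N_w(u)| \ge (1-\epsilon)|C|$. The only minor bookkeeping point is whether $u$ is counted in its own neighborhood: by our convention $u\in N_w(u)$, so including or excluding $u$ shifts the count by at most one, which is absorbed by the slack in the final inequality. Finally I would plug in the standing assumption $\epsilon\le 1/95$ (carried through this section, cf.\ Lemma~\ref{theorem:structural_properties}), which gives $(1-\epsilon)|C|\ge \tfrac{94}{95}|C| > 0.6|C|$, and therefore $d_w(u) > 0.6|C|$, as claimed.

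I do not anticipate a genuine obstacle: this lemma is essentially a restatement of Lemma~\ref{lem:subsetImportant} with an explicit constant, and all the difficulty sits in the construction of $OPT'$ (Lemmas~\ref{lem:transformationVincent} and~\ref{lem:subsetImportant}) rather than here. The two points that require a moment of care both work in our favor: (i) the degree $d_w(u)$ in the conclusion is the \emph{global} degree in $(V,E_w)$, not the degree inside the induced subgraph on $C$, so the trivial inclusion $N_w(u)\cap C\subseteq N_w(u)$ suffices; and (ii) if one wishes to read ``at level $w$'' as allowing any $w$ no smaller than the level at which the node $C$ was created, then monotonicity $N_{w}(u)\supseteq N_{w_0}(u)$ keeps the bound valid. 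This lemma will then be used downstream to line up the density of the clusters of $OPT'$ with the ``everywhere dense'' guarantee ($d_w(u)\ge \tfrac{2}{3}|C|$) of the clusters produced by Algorithm~\ref{alg:HierL0}.
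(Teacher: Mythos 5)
Your argument is internally consistent with the statement as printed, but it proves the wrong lemma: the ``$OPT'$'' in the statement is evidently a typo for ``$T$'', the tree produced by Algorithm~\ref{alg:HierL0}. This is visible from the label (\emph{Tdense}), from the informal list just above it (``every cluster of $T$ is dense inside''), from the paper's own proof (which argues by how the algorithm \emph{created} $C$), and from both downstream uses: in \Cref{lem:no2inside} the bound $0.6|C'|$ is applied to the cluster $C'$ of $T$ containing $C_1\cup C_2$ (applying it only to the $OPT'$-cluster $C_1$ would give $0.6|C_1|$, which yields no contradiction with \Cref{lem:importantDegrees}), and in the final cost-comparison lemma it is applied to ``a non-singleton cluster $C'$ of $T$'' to charge the pairs counted in $|c(u)|$ against $\|OPT'-D\|_0$. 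For clusters of $OPT'$ the claim is indeed an immediate corollary of \Cref{lem:subsetImportant} together with \Cref{lem:importantDegrees} --- which is exactly your proof, and the paper already has that fact --- but that version is useless for the role this lemma plays.

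For the intended statement your route genuinely fails: a cluster of $T$ is not, in general, an important cluster (or a subset of one) of $(V,E_w)$, so Definition~\ref{definition:important_group} gives you nothing about it. The paper instead does a case analysis on the algorithm: if $C$ was produced by the $S$-Structural Clustering subroutine, the everywhere-dense guarantee (\Cref{lem:horizontal}) gives $d_w(u)\ge \tfrac{2}{3}|C| > 0.6|C|$; if $C$ was produced inside the while-loop as $C_p\setminus R$, then every surviving vertex satisfies $\widetilde{d_w(u)}\ge 0.65|S|\ge 0.65|C|$, hence $d_w(u)\ge \tfrac{1-\lambda}{1+\lambda}\,0.65\,|C| > 0.6|C|$ for sufficiently small $\lambda$. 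That algorithm-side analysis is the actual content of the lemma and is entirely absent from your proposal, so as a proof of what the paper needs there is a genuine gap.
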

\begin{proof}
If $C$ is obtained by $S$-Structural Clustering, it directly follows that $u$ has $d_w(u) > 2|C|/3 > 0.6|C|$.
Otherwise, $C$ is obtained by removing all vertices with $\widetilde{d_w(u)} < 0.65|S|$ from its parent cluster $C_p$, for which we have $C_p \subseteq S$ for some vertex-set $S$.
But then all vertices in $C$ have $\widetilde{d_w(u)} \ge 0.65|S|$, which means $d_w(u) \ge \frac{1-\lambda}{1+\lambda} 0.65 |S| \ge \frac{1-\lambda}{1+\lambda} 0.65 |C|$, which implies our claim for sufficiently small $\lambda$.
\end{proof}

\begin{lemma} \label{lem:no2inside}
Let $C_1, C_2$ be non-singleton clusters of $OPT'$ at level $w$. There is no cluster $C'$ of $T$ at level $w$ such that $C'\supseteq C_1\cup C_2$.
\end{lemma}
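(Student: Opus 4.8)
The plan is to argue by contradiction: suppose there is a cluster $C'$ of $T$ at level $w$ with $C' \supseteq C_1 \cup C_2$ where $C_1, C_2$ are distinct non-singleton clusters of $OPT'$ at the same level $w$. Since $C_1, C_2$ are non-singleton clusters of $OPT'$, by Lemma~\ref{lem:subsetImportant} they are important clusters of $(V, E_w)$, and being distinct clusters at the same level of an ultrametric tree they are disjoint. I would first handle the case where $C'$ is obtained by the $S$-Structural Clustering subroutine and then the case where it is produced by the postprocessing step (removal of low-degree vertices); these are the only two ways a cluster of $T$ can arise.

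\textbf{Case 1: $C'$ comes from $S$-Structural Clustering.} Here I would invoke \Cref{lem:horizontal} (equivalently \Cref{lemma:sclustering}): the $S$-Structural Clustering guarantees that for the important group $C_1 \subseteq S$ there is a returned cluster containing $C_1$ and not intersecting any other important group contained in $S \setminus C_1$, in particular not intersecting $C_2$. So $C_1$ and $C_2$ cannot both be inside the same returned cluster $C'$, a contradiction. (I should note here that $C_1, C_2 \subseteq C'_p \subseteq S$, where $C'_p$ is the parent core cluster, so the important groups are indeed contained in $S$ as required to apply the guarantee.)

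\textbf{Case 2: $C'$ comes from the postprocessing.} In this case $C'$ is obtained from its parent cluster $C'_p$ by deleting the set $R$ of vertices with small estimated degree $\widetilde{d_{w}(u)} < 0.65|S|$, where $0.99|S| < |C'_p| \le |S|$ and $|R| \le 0.01|S|$ (as in the analysis inside \Cref{lem:inclusion}); so $|C'| \ge 0.98|S|$. By \Cref{lem:Tdense}, every vertex of $C'$ has $d_w$-degree more than $0.6|C'| \ge 0.58|S|$ inside $C'$. Now both $C_1$ and $C_2$ lie in $C'$, are disjoint, and are important; pick $u_1 \in C_1$, $u_2\in C_2$. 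By \Cref{lem:importantDegrees}, $d_w(u_1) \le |C_1|/(1-\epsilon)$ and likewise for $u_2$. Since $u_1$'s neighbors inside $C'$ number at least $0.58|S|$ and at most an $\epsilon$-fraction of $u_1$'s neighbors lie outside $C_1$, we get $|C_1| \ge (1-\epsilon)d_w(u_1) \ge (1-\epsilon)(0.58)|S|$, and similarly $|C_2| \ge (1-\epsilon)(0.58)|S|$. But $C_1, C_2$ are disjoint subsets of $C'_p \subseteq S$, so $|C_1| + |C_2| \le |S|$, giving $2(1-\epsilon)(0.58)|S| \le |S|$, i.e.\ $(1-\epsilon)(1.16) \le 1$, which fails for sufficiently small $\epsilon$. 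This contradiction completes the case.

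\textbf{Main obstacle.} The delicate point is bookkeeping the degree thresholds consistently between the estimated degrees $\widetilde{d_{w}(u)}$ used in the algorithm (with its multiplicative slack $\lambda$) and the true degrees $d_w(u)$, and making sure the level at which the postprocessing removed $R$ lines up with the level $w$ at which $C_1, C_2$ live; I would reuse exactly the constants and the $\frac{1-\lambda}{1+\lambda}$-type conversions already established in \Cref{lem:inclusion} and \Cref{lem:Tdense} so that the two non-singleton important clusters are each forced to occupy more than half of $S$, which is impossible. The rest is a short counting argument.
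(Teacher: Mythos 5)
Your proof is correct, and its core is the same counting argument as the paper's, though you package it differently. The paper does not split into cases at all: it applies \Cref{lem:Tdense} (which, despite saying ``cluster of $OPT'$'' in its statement, is really about clusters of $T$ --- exactly as you use it) to the containing cluster $C'$, so every vertex of the smaller of $C_1,C_2$ has $d_w$-degree greater than $0.6|C'|$, while \Cref{lem:importantDegrees} plus disjointness ($|C_1|\le |C'|/2$) caps that degree by $0.5|C'|/(1-\epsilon)$, a contradiction for small $\epsilon$. Your Case~2 is this same argument with slightly different arithmetic (going through $0.98|S|$ and $0.58|S|$ rather than directly through $|C'|$), and it is fine; note only that \Cref{lem:Tdense} bounds the global degree $d_w(u)$, not the degree ``inside $C'$'', which is actually all you need since \Cref{lem:importantDegrees} is also about global degrees. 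Your Case~1, which invokes the $S$-Structural Clustering guarantee (\Cref{lem:horizontal}/\Cref{lemma:sclustering}) that the cluster containing the important group $C_1$ avoids every important group inside $S\setminus C_1$, is a valid shortcut for clusters produced by the clustering subroutine, but it is unnecessary: since \Cref{lem:Tdense} already covers both ways a cluster of $T$ can arise (clustering subroutine or postprocessing), the density-versus-importance counting handles everything uniformly, which is what the paper's one-case proof buys in brevity; what your version buys is that Case~1 needs no density bookkeeping and makes explicit where the structural-clustering guarantee alone suffices.
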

\begin{proof}
Assume for the sake of contradiction that this is not true, and that w.l.o.g. $|C_1| \le |C_2|$.
By \Cref{lem:Tdense} any vertex in $C_1$ has degree at least $0.6|C|$ at level $w$.
But by \Cref{lem:importantDegrees} it has degree at most $|C_1| / (1-\epsilon) \le 0.5|C| / (1-\epsilon)$, which is a contradiction for sufficiently small $\epsilon$.
\end{proof}

We are now ready to prove that $T$ is a constant factor approximation of $OPT$.
\begin{lemma}
$\|T-D\|_0 = O(\|OPT-D\|_0)$.
\end{lemma}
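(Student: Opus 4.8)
The plan is to charge the cost of $T$ to the cost of $OPT'$ level by level, using the three structural facts just established (Lemmas~\ref{lem:inclusion}, \ref{lem:Tdense}, \ref{lem:no2inside}) together with the guarantees of $S$-Structural Clustering (\Cref{lem:horizontal}) and the degree-based postprocessing. Since $\|OPT'-D\|_0 = O(\|OPT-D\|_0)$ by \Cref{lem:costOPTp}, it suffices to show $\|T-D\|_0 = O(\|OPT'-D\|_0)$. I will partition the pairs $uv$ on which $T$ and $D$ disagree, or on which $T$ and $OPT'$ disagree, according to where in the recursion the distance $T(uv)$ gets assigned, and for each such ``event'' bound the number of newly created disagreements by a constant times the number of disagreements $OPT'$ already pays in the relevant cluster.

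First I would handle the disagreements created on Line~\ref{line:L0dist1}, i.e. pairs separated by the $S$-Structural Clustering step. Fix a recursive call $\ell_0(S,w)$ whose parent placed $S$ inside some cluster; by \Cref{lem:inclusion} there is a cluster of $OPT'$ at this level contained in $S$, and more usefully, every non-singleton $OPT'$-cluster at level $\widecheck{w}$ that lies in $S$ is an important cluster of $(V,E_{\widecheck{w}})$ (\Cref{lem:subsetImportant}), hence by the ``important group $\Rightarrow$ contained in one output cluster, and output clusters everywhere dense'' guarantee of \Cref{lem:horizontal}, distinct such $OPT'$-clusters land in distinct clusters $C_i$. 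So the pairs that $T$ separates at level $w$ but $OPT'$ keeps together are exactly pairs inside some $C_i$ that $OPT'$ splits below $\widecheck{w}$ — these are accounted for in deeper recursion — while the pairs that $T$ keeps together (inside a $C_i$) but $OPT'$ splits will be paid later; the genuinely new $T$-vs-$D$ disagreements come from: (a) pairs $uv$ with $D(uv) \ge w$ that end up in the same $C_i$ (so $T(uv)$ will be $\le \widecheck{w} < w$... wait, $T(uv)$ gets its value deeper, but it is $\le \widecheck{w}$, hence $\ne D(uv)$), and (b) pairs $uv$ with $D(uv) < \widecheck{w}$ placed in different $C_i$. For (a) and (b) I would use everywhere-density of the $C_i$ plus a counting argument analogous to \cite{cohen2022fitting}: a dense cluster cannot contain many missing internal edges nor many boundary edges without a comparable number already being paid by any important-group-respecting clustering, and $OPT'$ restricted to this level is such a clustering by \Cref{lem:subsetImportant}.

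Next I would handle the disagreements created in the postprocessing loop, Lines~\ref{line:L0dist2} and \ref{line:L0rec2}/\ref{line:L0rec3}. When we extract $R = \{u \in C : \widetilde{d_{w''}(u)} < 0.65|S|\}$ and set $T(uv) = w'$ for $u\in R, v\in C\setminus R$, the new disagreements with $D$ are the pairs $uv$ across this cut with $D(uv) \ne w'$. The key point, which is the intended meaning of the postprocessing, is that a vertex $u$ with $d_{w''}(u)$ small cannot belong to a large non-singleton $OPT'$-cluster at level $w''$ or below without $OPT'$ paying $\Omega(|S|)$ for it: by \Cref{lem:importantDegrees} an important cluster containing $u$ has size $\le d(u)/(1-\epsilon)$, so any $OPT'$-cluster of size $> 0.99|S|$ containing $u$ at such a level would force $u$ to have $\ge 0.98|S|$ outgoing or missing edges, contradicting that $u$'s degree is small — hence $OPT'$ either has $u$ as a singleton from level $w''$ down, or in a cluster of size $O(|S|)$ but then $OPT'$ pays $\Omega(|S|)$ on $u$ anyway once we reach a level where $u$'s true cluster shrinks below the degree bound. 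Since $|S|$ is within a constant of $|C\setminus R| \supseteq$ the big cluster $C'_p\setminus R$ of $T$ (from the proof of \Cref{lem:inclusion}, $0.98|S| < |C'_p\setminus R|$), the $O(|S|)$ new disagreements caused by moving $u$ are charged to the $\Omega(|S|)$ that $OPT'$ pays on $u$. The while-loop condition ($> 0.99|S|$ vertices have large degree) guarantees $|R| \le 0.01|S|$, so summing over one execution of the loop at a fixed level is a geometric-type sum bounded by $O(|S|^2)$, again charged to $OPT'$'s payment on the $\ge 0.01|S|$-ish genuinely-small-degree vertices it had to isolate. Finally, by \Cref{lem:recDepth} each vertex participates in $O(\log n)$... no — crucially the charging must be to $OPT'$'s cost directly (each disagreement of $OPT'$ charged $O(1)$ times, not $O(\log n)$ times), which is possible because the sets $R$ extracted at successive levels, and the cluster residues $C'_p\setminus R$, are disjoint in the vertices that are ``blamed,'' so no $OPT'$-disagreement is charged across more than a constant number of recursion levels.

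The main obstacle I anticipate is making the charging in the postprocessing airtight: one must argue that keeping the high-degree residue $C\setminus R$ together (rather than re-clustering it) does not lose more than a constant factor, i.e. that $OPT'$ genuinely agrees with treating almost all of $C\setminus R$ as one cluster, or else pays proportionally. This requires combining (i) \Cref{lem:no2inside} (no $T$-cluster contains two non-singleton $OPT'$-clusters of the same level), so $OPT'$ has at most one ``real'' cluster inside $C\setminus R$ plus singletons, (ii) \Cref{lem:importantDegrees} to argue those $OPT'$-singletons inside the dense-in-$T$ set $C\setminus R$ each cost $OPT'$ a number of disagreements comparable to their degree, which is $\Omega(|S|)$ since they survived the degree filter, and (iii) a careful bookkeeping that this $\Omega(|S|)$-per-misplaced-vertex charge, summed over all levels and all such vertices, double-counts each $OPT'$-disagreement only $O(1)$ times — the disjointness across levels of the ``blamed'' vertex sets being the subtle part, and the reason the algorithm deletes $R$ and recurses on the smaller $C\setminus R$ rather than restarting. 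Everything else (the $S$-Structural-Clustering level and the standard dense-cluster counting from \cite{cohen2022fitting}) should go through with only constant-factor losses.
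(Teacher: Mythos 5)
There is a genuine gap, and it sits exactly where you flag your own uncertainty. Your plan charges $T$'s errors level by level, and for the errors created at a structural-clustering step you appeal to a claim of the form ``an everywhere-dense cluster cannot contain many missing internal edges nor many boundary edges without a comparable number already being paid by any important-group-respecting clustering.'' The guarantees actually available (\Cref{lem:horizontal}, i.e.\ important groups are kept intact and output clusters are everywhere dense) are purely structural; unlike a correlation-clustering approximation, they give \emph{no} per-level cost comparison against the optimal clustering of that level, and indeed the paper deliberately avoids ever needing one (that is the whole point of replacing Agreement Correlation Clustering by $S$-Structural Clustering). So the per-level comparison you invoke is not available, and your second difficulty --- ensuring each $OPT'$-disagreement is charged only $O(1)$ times --- is then handled by an asserted ``disjointness of the blamed vertex sets across levels'' that is neither proved nor true in general: a vertex can be blamed (be an $OPT'$-singleton sitting inside a non-singleton $T$-cluster) at many consecutive levels.

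The paper's proof sidesteps both problems with two ideas your proposal is missing. First, it restricts attention to the set $E$ of pairs with $OPT'(uv)=D(uv)\ne T(uv)$: on every other pair $T$ pays at most what $OPT'$ pays, so no level-by-level comparison of total errors is ever needed. Second, for a pair in $E$, at the relevant level $u,v$ lie in one $T$-cluster but in different $OPT'$-clusters and share no edge, so by \Cref{lem:no2inside} one endpoint is an $OPT'$-singleton; each vertex is then charged only once, at the \emph{topmost} level where it is an $OPT'$-singleton inside a non-singleton $T$-cluster $c(u)$ (nestedness of $T$-clusters makes this single charge dominate all lower levels, which is what replaces your disjointness claim), giving $|E|\le\sum_u|c(u)|$. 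Finally \Cref{lem:Tdense} says such a $u$ has degree at least $0.6|c(u)|$ at that level, and since $u$ is a singleton there, $OPT'$ pays $\Omega(|c(u)|)$ on pairs incident to $u$; summing over vertices yields $|E|=O(\|OPT'-D\|_0)$ and, with \Cref{lem:costOPTp}, the lemma. Without the reduction to $E$ and the per-vertex topmost-level charging, your outline cannot be closed using only the lemmas established in the paper.
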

\begin{proof}
By \Cref{lem:costOPTp}, it suffices to show that $\|T-D\|_0 = O(\|OPT'-D\|_0)$.
Let $E$ be the pairs $uv$ for which $OPT'(uv) = D(uv) \ne T(uv)$.
For all other pairs $T$ pays at most as much as $OPT'$.
In turn, it suffices to show $|E| = O(\|OPT'-D\|_0)$.

Let $uv\in E$.
By \Cref{lem:inclusion}, there exists a top level such that $u,v$ are in the same cluster $C$ in $T$ but not in $OPT'$.
By $OPT'(uv)=D(uv)\ne T(uv)$ we have that at this level $u$ and $v$ do not share an edge.
By \Cref{lem:no2inside}, one of the two nodes is a singleton in $OPT'$.
Let $e(u)$ be the degree of $u$ at the topmost level for which $u$ is a singleton in $OPT'$ but not in $T$ ($e(u)=0$ if no such level exists), and $c(u)$ be the cluster of $T$ containing $u$ in this level ($c(u)=\emptyset$ if no such level exists).
By the above discussion, $|E| \le \sum_{u\in V} |c(u)|$.

Notice that when $u$ is a singleton in $OPT'$ but not in $T$ (as $u$ is in a non-singleton cluster $C'$ of $T$ at that level), then $u$ has degree at least $0.6|C'|$, by \Cref{lem:Tdense}.
Therefore $\|OPT'-D\|_0 = \Omega(\sum_{u\in V} |c(u)|)$, which proves our claim.
\end{proof}

\begin{restatable}{lemma}{blackBoxes}\label{lem:blackBoxes}
Assume that within a single pass in the semi-streaming model, we can:
\begin{itemize}
    \item store a compressed set $\widetilde{D}$ of size $\widetilde{O}(n)$,
    \item store information of size $\widetilde{O}(n)$ that allows us to compute a $\widetilde{d_w(u)}$, for any vertex $u$ and weight $w\in \widetilde{D}$.
    \item store information of size $\widetilde{O}(n)$ that allows us to compute $S_i$-Structural Clustering for $k$ instances $\set{(V, E_{w_1}),S_1}, \ldots \set{(V, E_{w_k}), S_k}$.
    Instance $\set{(V, E_{w_i}),S_i}$ is only revealed after we compute $S_j$-Structural Clustering for every instance $\set{(V, E_{w_1}),S_j}$ with $j<i$ and may in fact depend on all these instances and the $S_j$-Structural Clusterings we output.
    Further, it holds that $w_i\in \widetilde{D}$ for all $i$, and each vertex $u\in V$ is contained in $O(\log{n})$ of all $S_i$.
\end{itemize}
Then we can $O(1)$-approximate \lo{} in a single pass in the semi-streaming model.
\end{restatable}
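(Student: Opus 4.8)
The plan is to feed the three black-boxes into \Cref{algorithm:old_structural_clustering} and run that algorithm \emph{entirely as a post-stream computation}. Essentially everything needed about the output $T=\ell_0(V,w_{max})$ has already been established: $T$ is a valid ultrametric (\Cref{lem:isUltrametric}), its recursion has depth $O(\log n)$ (\Cref{lem:recDepth}), and the lemma showing $\|T-D\|_0 = O(\|OPT-D\|_0)$ gives the approximation guarantee. So what remains is purely a matter of checking that (a) nothing in the algorithm needs more than the three black-boxes, (b) the sequence of $S$-Structural Clustering queries it issues is exactly of the adaptive kind described in the third bullet, and (c) the output and all intermediate bookkeeping fit in $\widetilde O(n)$ space within a single pass. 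We may assume $w_{max}\in\widetilde D$, since adding it keeps $\widetilde D$ a compressed set and costs only $O(\log n)$ bits.

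For (a), each invocation $\ell_0(S,w)$ does three kinds of work: it moves between consecutive weights via $\widecheck{\cdot}$ and $\widehat{\cdot}$, which needs only $\widetilde D$ (black-box~1); it evaluates $\widetilde{d_{w''}(u)}$ in \Cref{line:whileLoopL0,line:L0R}, which is black-box~2; and it runs an $S$-Structural Clustering subroutine on $(V,E_{\widecheck w})$, which is black-box~3. The top-level call has $w=w_{max}\in\widetilde D$, and every recursive call is passed a value of the form $\widecheck{\cdot}$ of a weight already in $\widetilde D$; since $\widecheck w\in\widetilde D$ whenever $w\in\widetilde D$, all weight arguments $w$ — hence all $\widecheck w$ used for clustering and all $w''$ used for degree queries — lie in $\widetilde D$, matching the requirements that $w_i\in\widetilde D$ and that $\widetilde{d_w(u)}$ be queried only for $w\in\widetilde D$.

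For (b), list the $S$-Structural Clustering calls in the order the recursion executes them. The $i$-th instance $\{(V,E_{w_i}),S_i\}$ is determined only when the recursion first reaches the corresponding node, i.e.\ after all earlier calls and their returned clusterings are fixed, which is exactly the adaptive model of the third bullet (it even permits arbitrary dependence on everything output so far). Each invocation $\ell_0(S,w)$ with $|S|\ge 2$ issues exactly one such call, whereas the while-loop only performs degree-based splitting and issues none; hence the number of clustering calls whose set $S_i$ contains a fixed vertex $u$ is at most the number of invocations $\ell_0(S,\cdot)$ with $u\in S$, which is $O(\log n)$ by \Cref{lem:recDepth}. Thus the ``each $u$ lies in $O(\log n)$ of the $S_i$'' hypothesis of black-box~3 is met (and the total number of calls is $O(n\log n)$, so post-processing issues polynomially many queries).

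For (c), instead of explicitly writing the $\Theta(n^2)$ entries $T(uv)$ set in \Cref{line:L0dist1,line:L0dist2}, we maintain the ultrametric tree: each invocation $\ell_0(S,w)$ with $|S|\ge 2$ contributes one internal node at height $w/2$ whose children are the roots returned by its recursive calls, and $T(uv)$ is recovered as twice the height of the least common ancestor of the leaves $u,v$. Since $\sum_{\text{invocations}}|S| = O(n\log n)$ by \Cref{lem:recDepth}, the tree has $O(n\log n)$ nodes, each labeled by a weight of $\widetilde D$ in $O(\log n)$ bits, so it fits in $\widetilde O(n)$ space, and the post-stream phase runs in polynomial time (polynomially many invocations, each dominated by an $\widetilde O(|S|^2)$ clustering call). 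Finally, we run the single-pass routines of the three black-boxes in parallel on the one stream — keeping total space $\widetilde O(n)$ — and then execute \Cref{algorithm:old_structural_clustering} purely in post-processing. The one point that needs care is precisely this alignment with the black-box interface — that queries never leave $\widetilde D$, that the per-vertex clustering-call count is $O(\log n)$, and that it is the tree rather than the distance matrix that we store — but each follows directly from the structure of the algorithm together with \Cref{lem:recDepth}, so beyond this bookkeeping there is no real obstacle.
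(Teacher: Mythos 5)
Your overall route is the paper's: run \Cref{alg:HierL0} purely post-stream, observe that its only primitives are $\widetilde{D}$-lookups, $\widetilde{d_w(u)}$ queries and one $S$-Structural Clustering call per invocation, invoke \Cref{lem:recDepth} to meet the ``each $u$ in $O(\log n)$ of the $S_i$'' condition of the third black-box, and store the output as a tree rather than as $\Theta(n^2)$ distances. Parts (a) and (b) are fine. The problem is in (c): your tree representation does not actually encode the output of \Cref{alg:HierL0}. You create \emph{one} internal node per invocation, at height $w/2$, with the roots of all recursive calls as its children. But inside the while loop the algorithm sets $T(uv)\gets w'$ in \Cref{line:L0dist2} for $u\in R$, $v\in C\setminus R$, where $w'$ strictly decreases across iterations and is already $\widecheck{w}<w$ in the first one. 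Under your single flattened node, the LCA of such a pair sits at height $w/2$, so your tree reports distance $w$ for every pair separated during the while loop (including pairs in $R_i$ versus $R_j$ or versus the surviving part of $C$), whereas the algorithm assigns the smaller values $w'_i$. Hence the object you output is a coarser ultrametric than $T$, and you cannot simply cite \Cref{lem:isUltrametric} and the approximation lemmas, which analyze the algorithm's actual output; in particular the argument behind \Cref{lem:inclusion} uses the intermediate clusters $C\setminus R$ created at the levels $w'$, and these clusters do not exist in your flattened tree, so the containment induction (and with it the final cost bound) does not transfer as stated.

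The fix is the bookkeeping the paper uses: maintain an ``active node,'' and in each while-iteration with $R\neq\emptyset$ attach the root of the recursion on $R$ to the active node and additionally create a new child at level $\widecheck{w}$ (the current $w'$) associated with $C\setminus R$, which then becomes the new active node — i.e.\ a descending chain of nodes, one per splitting iteration, rather than a single node per invocation. Iterations with $R=\emptyset$ set no distances and need no node, so the number of extra chain nodes is bounded by the number of recursive calls, which is $O(n\log n)$ by \Cref{lem:recDepth}; thus the $\widetilde{O}(n)$ space claim survives, and the tree now induces exactly the distances set in \Cref{line:L0dist1,line:L0dist2}, so the earlier lemmas apply verbatim. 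With that correction your argument coincides with the paper's proof.
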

\begin{proof}
We simply run \Cref{alg:HierL0}.

Instead of explicitly storing the distances between every pair of vertices, we build a tree that induces these distances, in a top-down fashion.
At any given point, each leaf in the tree is associated with a subset of $V$, such that these subsets form a partition of $V$.
Initially we have a single node (the root) at height $w_{max}$, associated with $V$.
When we have $|S| = 1$, we simply create a leaf (corresponding to the unique node in $S$) at level $0$.

In Line~\ref{line:L0dist1}, we simply create $\mathcal{C}$ many children for the current node, each one corresponding to a different $C\in \mathcal{C}$, and recurse in all but the largest one, in case its size is larger than $0.99|S|$.
We only run $S$-Structural Clustering when $w\in \widetilde{D}$, and by \Cref{lem:recDepth} each vertex $u$ is only contained in $O(\log{n})$ $S_i$-Structural Clustering computations; therefore by assumption of the lemma, we can perform these $S_i$-Structural Clustering computations.

Similarly, in the while loop we have an active node (initially it is the unique cluster $C$ with $|C| > 0.99|S|$) at some level $w$.
Then we decide a set $R$ using the assumptions of our lemma, recurse on $R$ to create more children of our active node, and also create one more child associated with $C\setminus R$ at level $\widecheck{w}$.
Then we set the active node to be equal to the node corresponding to $C\setminus R$, and continue the execution.

It directly follows that the distances set in the algorithm are exactly the distances induced by our tree, and that the total space usage is $\widetilde{O}(n)$.
\end{proof}

We now prove our main theorem.
\lzerofit*
\begin{proof}
It suffices to guarantee that with high probability we can store the information required by \Cref{lem:blackBoxes}.
Our algorithm stores sketches for each vertex, as described in \Cref{section:sketches}, in a single pass.
In fact, it stores $c \log n$ independent instances of these sketches, for a sufficiently large $c$.
This requires $\widetilde{O}(n)$ space.

To compute $S_i$-Structural Clustering for $k$ instances $\set{(V, E_{w_1}),S_1}, \ldots \set{(V, E_{w_k}), S_k}$ such that $w_i\in \widetilde{D}$ and each vertex $u\in V$ is present in $O(\log{n})$ of all $S_i$, we employ \Cref{lem:horizontal}. 
In particular, using only the sketches of vertices in $S$, we can compute $S$-Structural Clustering with high probability.
As we store $c \log n$ independent sketches for each vertex, we can use different sketches for each computation.
Note that we cannot reuse our sketches, due to the dependencies across the instances and the clusterings we output.

We now show how to compute $\widetilde{d_w(u)}$ (which approximates $d_w(u)$), for any vertex $u\in V$ and weight $w\in D$ (this is stronger than $w\in \widetilde{D}$ required by \Cref{lem:blackBoxes}).
If $d_w(u) < 2\log^4 n$, then we can exactly compute it, as we explicitly store the $2log^4 n$ nearest neighbors of $u$.
Otherwise, by \Cref{clm:sketch1nb2}, we can report a sketch 
associated with size $s$ and weight $\tilde{w}^u_s$, such that with high probability, 
$\frac{d_{\tilde{w}^u_{s}}(u)}{1+5\zeta} \leq d_{w}(u) \leq \frac{d_{\tilde{w}^u_s}(u)}{1-\zeta}$.
Therefore, for sufficiently small $\zeta$, we have a sufficient approximation of $d_w(u)$.

Finally, we obtain $\widetilde{D}$ by using all the weights stored in memory; it follows its size is $\widetilde{O}(n)$.
To show that $\widetilde{D}$ is a compressed set with high probability, assume for the sake of contradiction that there exists a $w \not \in \widetilde{D}$ and a vertex $u$ such that $d_w(u) > (1+\delta) d_{\widecheck{w}}(u)$.
But as we proved in the previous paragraph, we have a weight $w'\in \widetilde{D}$ such that $d_{w}(u) \le \frac{d_{w'}(u)}{1-\zeta}$.
For $\zeta$ sufficiently smaller than $\delta$, this implies that $w' > w$.
For these to hold simultaneously, it must be that in the obtained sketch we only have vertices $v$ with $D(uv) \le \widecheck{w}$ or $D(uv)>w$.

It suffices to show that there exist $\nu d_{w'}(u)$ vertices $v \in N_{w'}(u)$ with $D(uv) \in (\widecheck{w},w]$, for a sufficiently small constant $\nu$.
This is because, if this is true, then with high probability we sample at least one such vertex.
On one hand, we have that at most $5\zeta d_{w'}(u)$ vertices with distance above $w$.
On the other hand he have at least $\delta d_{\widecheck{w}}(u)$ vertices with distance above $\widecheck{w}$.
If $\delta \widecheck{w} > 6\zeta d_{w'}(u)$, then we have at least $\zeta d_w{w'}(u)$ vertices with distance in $(\widecheck{w},w]$.
Otherwise we have $\widecheck{w} \le 6\zeta d_{w'}(u) / \delta$, meaning there are at least $(1-5\zeta - 6\zeta d_{w'}(u) / \delta)$ vertices with distance in $(\widecheck{w},w]$, for sufficiently small $\zeta$.
\end{proof}

\subsection{Lower bounds} \label{section:LowerBounds}

Lower bounds for the problem of correlation clustering in data streams were thoroughly examined in~\cite{ahn2021correlation}. In this section, we add additional natural results on top of this work.

The main lower bounds proved in~\cite{ahn2021correlation} were to the problem of testing if a given graph can be partition to clusters with optimal cost of $0$, under various edge weighting schemes. This observation leads directly to a similar computational limitation for algorithms that merely verify whether a matrix is ultrametric.
                          
\begin{theorem}
    Any randomized $k$-pass streaming algorithm that tests whether an input matrix is an ultrametric with probability greater than $\frac{2}{3}$ requires $\Omega(\frac{n}{k})$ bits.
\end{theorem}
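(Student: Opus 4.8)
The plan is to reduce from the problem of deciding whether an input graph is a disjoint union of cliques --- equivalently, whether a correlation clustering instance admits a solution of cost $0$ --- the very family of problems whose $k$-pass streaming hardness is established in~\cite{ahn2021correlation}. The bridge is an elementary observation. Fix the two distance values $1$ and $2$, and for a graph $G=(V,E)$ let $D_G$ be the distance matrix with $D_G(uv)=1$ if $uv\in E$ and $D_G(uv)=2$ otherwise. Since every value lies in $[1,2]$ (and $D_G(uu)=0$), $D_G$ always satisfies the ordinary triangle inequality and is a metric. It is an \emph{ultra}metric exactly when there is no triple $u,v,w$ with $D_G(uw)=D_G(vw)=1$ and $D_G(uv)=2$, i.e.\ no induced path on three vertices in $G$, i.e.\ $G$ is a disjoint union of cliques. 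Hence an algorithm testing whether $D_G$ is an ultrametric is precisely an algorithm testing whether $G$ is a cluster graph, with the same error probability and the same number of passes.

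Next I would instantiate the hardness via a communication game, as in~\cite{ahn2021correlation}. Partition $V$ into $m=\lfloor n/3\rfloor$ disjoint triples $\{a_j,b_j,c_j\}$ and reduce from \textsc{Set-Disjointness} on $[m]$: Alice holds $X\subseteq[m]$ and includes the edge $a_jb_j$ iff $j\in X$; Bob holds $Y\subseteq[m]$ and includes the edge $b_jc_j$ iff $j\in Y$; no other edges are present. The gadgets are vertex-disjoint, so the resulting graph is a disjoint union of cliques iff no gadget contains the induced path $a_j\text{--}b_j\text{--}c_j$, iff there is no $j$ with $j\in X$ and $j\in Y$, iff $X\cap Y=\emptyset$. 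A $k$-pass, $S$-space streaming algorithm for the ultrametric test yields an $O(k)$-round protocol with messages of size $S$: in each pass Alice feeds the matrix entries she can determine (value $1$ for edges $a_jb_j$ with $j\in X$, value $2$ for every other pair that is not of the form $b_jc_j$), then ships the algorithm's memory to Bob, who feeds his $b_jc_j$ entries (value $1$ if $j\in Y$, else value $2$) and ships the memory back; repeat for each of the $k$ passes. Since the randomized two-party communication complexity of \textsc{Set-Disjointness} on $[m]$ is $\Omega(m)=\Omega(n)$, we get $O(kS)=\Omega(n)$, hence $S=\Omega(n/k)$.

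The only genuinely delicate point --- and the one I would be most careful about --- is the mismatch between the two streaming models: a correlation-clustering stream typically lists only the ``similar'' edges and treats every absent pair as ``dissimilar'', whereas a distance-matrix stream must explicitly present all $\binom{V}{2}$ entries, including the $\Theta(n^2)$ pairs carrying value $2$. This does not affect space, only the length of the stream, but it does mean that in the two-party simulation one must verify that the entries can be partitioned so that each player feeds exactly those determined by its own input; the triple-gadget with a shared ``middle'' vertex $b_j$ is designed precisely so that this partition exists (Alice owns everything except the $b_jc_j$-entries, Bob owns those). Everything else is routine: the success probability $>\frac{2}{3}$ transfers verbatim, and one may alternatively cite the lower bound of~\cite{ahn2021correlation} for cluster-graph testing as a black box and only carry out the final matrix-versus-graph correspondence above.
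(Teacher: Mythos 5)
Your proposal is correct and follows essentially the same route as the paper: the paper's proof simply invokes the $k$-pass lower bound of~\cite{ahn2021correlation} for deciding whether a graph admits a cost-$0$ clustering, relying on exactly the correspondence you spell out (a $\{1,2\}$-valued matrix is an ultrametric iff the graph of value-$1$ pairs is a disjoint union of cliques). Your write-up is just more self-contained, re-deriving the cluster-graph-testing hardness from \textsc{Set-Disjointness} with vertex-disjoint path gadgets instead of citing it as a black box, and your handling of who streams which matrix entries is sound.
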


\begin{proof}
Follows directly from Theorem~15 in~\cite{ahn2021correlation}.
\end{proof}

The subsequent theorems have implications for the problem of correlation clustering in streaming settings. We show that any algorithm addressing the correlation clustering problem, whether aiming to produce the optimal clustering or merely to report the optimal score, requires the use of $\Omega (n^2)$ bits. This requirement holds true even if the algorithm is permitted unbounded running time over the input. This results then naturally translate to the ultrametric construction framework.

\begin{proposition}\label{proposition:lbzeroclustering}
    Any randomized one-pass streaming algorithm that solves the correlation clustering problem with probability greater than $\frac{2}{3}$ requires $\Omega(n^2)$ bits.
\end{proposition}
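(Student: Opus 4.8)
The plan is to reduce from the one-way communication complexity of the \textsc{Index} problem. Recall that in $\textsc{Index}_N$ Alice holds a string $x\in\{0,1\}^N$, Bob holds an index $i\in[N]$, and after a single (possibly randomized) message from Alice to Bob, Bob must output $x_i$; every such protocol that succeeds with probability at least $\tfrac23$ uses $\Omega(N)$ bits. I would turn a single-pass streaming correlation clustering algorithm $\mathcal{A}$ that uses $S$ bits of memory and succeeds with probability $>\tfrac23$ into such a protocol with communication $S$ on an instance with $N=\Theta(n^2)$, forcing $S=\Omega(n^2)$. The protocol is the obvious one: Alice encodes $x$ into a prefix of the stream, runs $\mathcal{A}$, sends the $S$-bit memory state to Bob, Bob feeds a suffix that depends on $i$, and reads $x_i$ off the optimal clustering that $\mathcal{A}$ outputs.

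The instance is built as follows. Split the $n$ vertices into $m=\Theta(n)$ \emph{data} vertices $v_1,\dots,v_m$ and a set $Z$ of $M=\Theta(n)$ \emph{padding} vertices, with $M$ a large enough constant times $m$ (so $\binom{m}{2}=\Theta(n^2)$). Identify the coordinates of $x$ with the unordered pairs $\{j,k\}$, $1\le j<k\le m$. Alice's portion of the stream sets the labels of the $\binom{m}{2}$ pairs among data vertices: $\{v_j,v_k\}$ is \emph{similar} if $x_{\{j,k\}}=1$, and \emph{dissimilar} otherwise. Given a query pair $\{j^\star,k^\star\}$, Bob specifies the remaining (complete) instance: all pairs inside $Z$ are similar; every pair $\{v_i,z\}$ with $i\notin\{j^\star,k^\star\}$ and $z\in Z$ is similar; and every pair $\{v_{j^\star},z\}$, $\{v_{k^\star},z\}$ with $z\in Z$ is dissimilar.

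The key structural claim is that, with $M$ large enough, \emph{every} optimal clustering of this instance places $v_{j^\star}$ and $v_{k^\star}$ in the same cluster if and only if $x_{\{j^\star,k^\star\}}=1$; then Bob recovers the bit. To prove it I would use an exchange argument showing the padding makes the optimum rigid: (i) $Z$ is a similar-clique whose only dissimilar edges go to $v_{j^\star},v_{k^\star}$, so splitting $Z$ only destroys similar edges and $Z$ stays a single cluster; (ii) for $i\notin\{j^\star,k^\star\}$, $v_i$ has $M$ similar edges into $Z$ and fewer than $2m$ other edges of any kind, so for $M\ge 2m$ moving $v_i$ out of $Z$'s cluster strictly increases the cost, hence $v_i$ lies in $Z$'s cluster in every optimum; (iii) symmetrically $v_{j^\star}$ and $v_{k^\star}$ each have $M$ dissimilar edges into $Z$ and at most $m-1$ similar edges elsewhere, so for $M\ge m$ neither lies in $Z$'s cluster; (iv) therefore the only remaining freedom is whether to merge $\{v_{j^\star},v_{k^\star}\}$, and since the single pair between them is the only edge whose satisfaction that choice affects, merging is strictly better precisely when that pair is similar, i.e. when $x_{\{j^\star,k^\star\}}=1$.

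I expect the main obstacle to be this structural claim: carefully verifying that the linear-size similar-clique gadget pins the optimum into the rigid shape above (so the queried bit is the unique surviving degree of freedom) while simultaneously choosing the constants so that $m$ — hence $N=\binom{m}{2}$ — is still $\Theta(n^2)$. The remaining steps are routine: (a) the complete signed instance is a genuine correlation clustering input, equivalently a depth-two $\ell_0$ ultrametric-fitting instance with two positive distance values; (b) the induced one-way protocol uses exactly $S$ bits; and (c) if $\mathcal{A}$ errs with probability $<\tfrac13$ then so does the protocol, so the $\Omega(N)$ lower bound for bounded-error $\textsc{Index}$ applies and yields $S=\Omega(n^2)$.
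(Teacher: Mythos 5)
Your proposal is correct and sits in the same reduction framework as the paper's proof: a one-way reduction from \textsc{Index} in which Alice's bits are the similar/dissimilar labels of the pairs among the original (data) vertices, Bob appends large clique gadgets depending on his query, and the bit is decoded from whether the optimal clustering output by the streaming algorithm co-clusters the queried pair. The gadget and the key lemma, however, are genuinely different. The paper attaches two cliques $C_1,C_2$ of size $2n$, joins $i$ to $C_1$ and $j$ to $C_2$ by similar edges, and sets exactly half of the remaining $(2n+1)^2-1$ pairs between $C_1\cup\{i\}$ and $C_2\cup\{j\}$ to similar, so that the single bit $x_{ij}$ tips the majority and hence decides whether the optimum merges the two super-clusters; the decoding rests on a local tie-breaking argument. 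You instead use one dominating clique $Z$ that positively absorbs every data vertex except the two queried ones (which are negatively attached), and you must prove a global rigidity lemma pinning the entire optimum down to one surviving degree of freedom, the merge of $\{v_{j^\star},v_{k^\star}\}$. Your route buys a cleaner decoding step at the price of a heavier structural argument; the paper's balancing trick avoids full rigidity. One caution on your step (i): the justification ``splitting $Z$ only destroys similar edges'' is not sufficient on its own, because a split of $Z$ that is coupled with a split of the other data vertices can simultaneously \emph{save} dissimilar data--data edges, and single-vertex moves do not cleanly rule out, e.g., a balanced split of $Z$ with the data vertices divided between the two halves. The fix is the quantitative cluster-merge exchange: merging the two clusters containing parts $Z_1,Z_2$ of $Z$ (with attached data sets $R_1,R_2$) gains at least $|Z_1||Z_2|+|Z_1||R_2|+|Z_2||R_1|$ similar edges and loses at most $|R_1||R_2|$ dissimilar data--data edges plus $O(M+m)$ terms from the two queried vertices, so taking $M$ a sufficiently large constant multiple of $m$ (e.g.\ $M\ge 2m$) makes the merge strictly improving; with that in place your steps (ii)--(iv) go through by the single-vertex arguments you describe, and the remaining bookkeeping (completeness of the streamed instance, the $\Omega(N)$ bound for bounded-error \textsc{Index} with $N=\binom{m}{2}=\Theta(n^2)$, and the error-probability transfer) is routine.
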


\begin{proof}

    To prove the theorem we show a reduction to the index problem, where Alice is given a random string $x\in \{0,1\} ^{{\binom{n}{2}}}$ and Bob is given a random index $(i,j) \in {\binom{n}{2}}$ and they need to output $x_{i,j}$ using a one-way protocol from Alice to Bob. This problem is known to require $\Omega(n^2)$ bits of communication even for randomized protocols~\cite{ablayev1996lower}.

    Consider a protocol for the index problem where Alice exploits a one-pass algorithm for the correlation clustering problem and stream the edges, where the positive edges are $\{ (u,v) \mid x_{uv}=1 \}$. After the vector $x$ has been processed by Alice, Alice sends the content of the memory of the streaming algorithm to Bob.

    Bob stream two artificial cliques $\clique{1}, \clique{2}$ of size $2n$ each, together with the $+$ edges connecting $i$ to $\clique{1}$ and $j$ to $\clique{2}$, i.e. all edges $(i,c_1), (j,c_2)$ for $c_1 \in \clique{1}$ and $c_2 \in \clique{2}$.
    Furthermore, Bob stream $-$ edges between $C_1,C_2$ and the remaining vertices in the graph; that is, $(c,k)$ for $c \in \clique{1} \cup \clique{2}$ and $k \in [n] \setminus \{i,j\}$. 
    Finally, Bob stream exactly $\frac{(2n+1)^2-1}{2}$ $+$ edges between $\clique{1}$ and $\clique{2}$ and set the remaining edges to $-$ edges.

    Now, $x_{i,j}=1$ if and only if more than half of the edges between $\clique{1} \cup \{ i \}$ and $\clique{2} \cup \{ j \}$ are positive. It means an exact solution would have the cluster $\clique{1} \cup \clique{2} \cup \{ i,j \}$. That is, only if $i$ and $j$ are end up in the same cluster in the correlation clustering solution.
\end{proof}

As we will see in the following theorem, the space constraints remains also in the setting where the algorithm simply opt to report the cost of the clustering.

\begin{proposition}\label{proposition:lbzeroscore}
    Any randomized one-pass streaming algorithm that maintains the cost of an optimal correlation clustering solution with probability greater than $\frac{2}{3}$ requires $\Omega(n^2)$ bits.
\end{proposition}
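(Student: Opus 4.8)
The plan is to reduce from the index problem on $\binom{n}{2}$ bits (Alice holds $x\in\{0,1\}^{\binom{n}{2}}$, indexed by pairs of $[n]$; Bob holds a pair $ij$ and must recover $x_{ij}$ from a one-way Alice$\to$Bob message), which needs $\Omega(n^2)$ bits even for randomized protocols with constant error~\cite{ablayev1996lower}. The first thing to notice is that the single-instance gadget of \Cref{proposition:lbzeroclustering} does \emph{not} work here: a direct check shows its optimal cost is the same whether $x_{ij}=0$ or $x_{ij}=1$ (the ``merge vs.\ keep-apart'' decision for the two planted cliques lands exactly on the $\tfrac12$ threshold, so the minimum is insensitive to the extra edge). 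So instead Bob will run the streaming algorithm on \emph{two} different continuations of the \emph{same} saved state and recover $x_{ij}$ from the \emph{difference} of the two reported optimal costs.

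Concretely, Alice runs the streaming algorithm (success-amplified, see below) on the graph $G_x$ on vertex set $[n]$ in which $uv$ is a $+$ edge iff $x_{uv}=1$, and sends the resulting memory contents to Bob. Bob forks this state into two runs. In \textbf{Run 1} he feeds the algorithm a clique $C$ on $2n$ fresh vertices (all internal edges $+$), together with $+$ edges from every vertex of $C$ to both $i$ and $j$ and $-$ edges from $C$ to all other vertices. In \textbf{Run 2} he instead feeds two disjoint cliques $C_1,C_2$ on $2n$ fresh vertices each (internal edges $+$), with $+$ edges from $C_1$ to $i$, from $C_2$ to $j$, and $-$ edges everywhere else (in particular between $C_1$ and $C_2$). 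He reads off the two reported optimal costs $\mathrm{cost}_1$ and $\mathrm{cost}_2$.

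The analysis rests on a few structural facts, all forced because $2n$ dwarfs $n$: a planted clique is never split (splitting cuts at least $2n-1$ $+$ edges for no gain); no vertex of $[n]\setminus\{i,j\}$ ever joins the cluster of a planted clique (it has $2n$ $-$ edges into that clique); and $i$ (resp.\ $j$) always joins the cluster of the clique(s) it is attached to (otherwise it cuts $2n$ $+$ edges, whereas it has at most $n-1$ $+$ edges inside $[n]$). Hence in Run 1 the unique non-trivial planted cluster is exactly $C\cup\{i,j\}$, so $i,j$ are together; in Run 2 it is the two clusters $C_1\cup\{i\}$ and $C_2\cup\{j\}$, so $i,j$ are apart (and merging $C_1$ with $C_2$ is ruled out since it would cost $(2n)^2$ for essentially no gain); and in \emph{both} runs all $+$ edges from $\{i,j\}$ to $[n]\setminus\{i,j\}$ are cut while the rest of $[n]$ is clustered optimally within itself. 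Writing $d^+(v)$ for the $+$-degree of $v$ in $G_x$ and $P:=\mathrm{OPT}\big(G_x[[n]\setminus\{i,j\}]\big)$, one obtains
\[
\mathrm{cost}_1 = (1-x_{ij}) + (d^+(i)-x_{ij}) + (d^+(j)-x_{ij}) + P,
\qquad
\mathrm{cost}_2 = (d^+(i)-x_{ij}) + (d^+(j)-x_{ij}) + x_{ij} + P,
\]
so that $\mathrm{cost}_2-\mathrm{cost}_1 = 2x_{ij}-1\in\{-1,+1\}$ and Bob outputs $x_{ij}=\tfrac12\big(\mathrm{cost}_2-\mathrm{cost}_1+1\big)$. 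The whole point is that the quantities $d^+(i),d^+(j),P$ — none of which Bob knows — cancel.

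It remains to handle the bookkeeping. Since Bob needs \emph{both} runs to report the true optimum, Alice first boosts the algorithm's success probability to $1-\varepsilon$ for a small constant $\varepsilon$ by running a constant number of independent copies and outputting the median of their reported costs; this enlarges the space, hence the message length, only by a constant factor, and then both runs are simultaneously correct with probability $>\tfrac23$. Because Alice's message is simply the saved state, a streaming algorithm using $s$ bits yields a one-way index protocol with $O(s)$ communication, giving $s=\Omega(n^2)$. I expect the main obstacle to be precisely this structural analysis pinning down the optimal clustering in each run — ruling out all ``cheaper'' global alternatives (absorbing extra $[n]$-vertices into a planted cluster, splitting a clique, or merging $C_1$ with $C_2$); these are routine consequences of the clique size $2n$, but they must be stated with care, and one also has to notice at the outset that no single instance can do the job.
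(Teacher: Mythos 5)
Your proof is correct and follows essentially the same route as the paper's: reduce from the index problem, have Bob fork Alice's saved state into two continuations --- one planting clique structure that forces $i$ and $j$ into the same cluster, one forcing them apart --- and recover $x_{ij}$ from the difference of the two reported optimal costs, with the unknown quantities ($d^+(i)$, $d^+(j)$, and the cost of clustering the rest of $[n]$) cancelling. The differences are minor: the paper uses cliques of size $n$ in the ``apart'' continuation and presents the two runs in the opposite order, while you additionally spell out the exact cost formulas and add the constant-factor median amplification needed so that both forked runs are simultaneously correct --- a detail the paper leaves implicit.
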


\begin{proof}
    We again use a reduction from the index problem in a similar fashion, however we change the way Bob treats Alice's message.

    Bob first duplicate Alice's message and stream different information to each message.

    For the first message, Bob create 2 artificial cliques $\clique{i}, \clique{j}$ of size $n$ and connect them solely to $i,j$, respectively. Consequently, Bob obtains the cost of the optimal clustering. Due to the size of $\clique{i}, \clique{j}$, in any optimal clustering $i,j$ are in their newly added cliques. 

    For the second message, Bob create a single artificial clique $\clique{i,j}$ of size $2n$ and connect it to both $i,j$. Again, due to the size of $\clique{i,j}$ any optimal clustering contains $i,j$ in a cluster including $\clique{i,j}$.
    The cost is changed from the first message depending on whether $i$ is connected to $j$, namely,
    The cost will decrease by $1$ if and only if $i$ and $j$ are connected.     
    Hence, by subtracting the cost of the first message from the cost of the second message Bob gets an indicator stating if $x_{i,j}=1$ w.h.p.
    \end{proof}

We conclude these results in the following theorem:

\lbzeroclusteringorscore*

Fitting an ultrametric to a similarity matrix that contain just two specific values for under the $\ell_0$ or $\ell_1$ norms is exactly the correlation clustering problem (cf.~\cite{charikar}). It follows that the above bounds also holds for fitting ultrametric for both $\ell_0$ and $\ell_1$.

\lbzeroultrametric*

\section{\texorpdfstring{$\ell_\infty$ Ultrametrics}{l-inf Ultrametrics}} \label{section:linf}

In this section we provide a complete characterization of $\ell_\infty$ Best-Fit Ultrametrics in the semi-streaming model.
We show that in a single round, this problem cannot be approximated with an approximation factor strictly smaller than 2, while a factor 2-approximation algorithm in a single round does exist. Finally, we show that in two rounds we can obtain an exact solution. 

The lower bound result is derived from a reduction to the index problem in communication complexity. For the algorithmic results, we employ a reduction to the $\ell_\infty$ Min-Decrement problem, where we are only allowed to decrement the entries in the input matrix.

\subsection{\texorpdfstring{$\ell_\infty$ Ultrametrics lower bound}{l-inf Ultrametrics lower bound}}

\newcommand{\infnorm}[1]{\lVert #1 \rVert_{\infty}}

\linftylowerbound*

\begin{proof}
    To prove the theorem, we show a reduction to the index problem, where Alice is given a random string $x\in \{0,1\} ^{{ \binom{n}{2}}}$ and Bob is given a random index $(i,j) \in {\binom{\{n \}}{2}}$ and they need to output $x_{i,j}$ using a one-way protocol from Alice to Bob. This problem is known to require $\Omega(n^2)$ bits of communication even for randomized protocols~\cite{ablayev1996lower}.

    Assuming such an algorithm to $\ell_\infty$ Best-Fit Ultrametrics, Alice streams the matrix $\inputmatrix$ where $\inputmatrix (a,b)=x_{a,b}+1$ for every $(a,b) \in {\binom{\{ n \}}{2}}$.

    Bob, equipped with the index $(i,j)$, streams $\inputmatrix(n+1,i)= \inputmatrix (n+1,j)=0$ and $\inputmatrix (n+1,k)=1.5$ for $k \in [n] \setminus \{ i,j \}$, and obtains the output.

    Now, in the case that $x_{i,j}=0$, consider the matrix $\bar{O}$ with $\bar{O} (n+1,i)=\bar{O} ({n+1,j})=\bar{O} ({i,j})=0.5$ and $\bar{O} ({a,b})=1.5$ for all other indices.
    It is easy to verify that this is an ultrametric and that $\infnorm{\inputmatrix-\bar{O}} = 0.5$, so $OPT \leq 0.5$.

    However, in the case that $x_{i,j}=1$ we will show that $OPT=1$.
    Consider the matrix $\bar{O}$ with $\bar{O} (n+1,i)=\bar{O} ({n+1,j})=\bar{O} ({i,j})=1$ and $\bar{O} ({a,b})=1.5$ for all other indices.
    Similarly to the previous case, this is an ultrametric and $\infnorm{\inputmatrix-\bar{O}} = 1$, so $OPT \leq 1$. Let $O$ be some optimal solution, thus $\inputmatrix ({a,b})-OPT \leq O ({a,b}) \leq \inputmatrix ({a,b})+OPT$ for every $(a,b) \in {\binom{\{n \}}{2}}$. Combining this with the ultrametric property of $O$, we get:
    \begin{align*}
    2 - \text{OPT} &= \inputmatrix (i,j) - \text{OPT} \leq O (i,j) \leq \max \{ O ({n+1,i}), O ({n+1,j}) \} \\
    &\leq \max \{ \inputmatrix ({n+1,i}) + \text{OPT}, \inputmatrix ({n+1,j}) + \text{OPT} \} \leq \text{OPT},
    \end{align*}
    Consequently $OPT=1$. It follows that any randomized one-pass algorithm for $\ell_\infty$ Best-Fit Ultrametrics that claims an approximation factor strictly less than 2 would be capable of distinguishing between the two cases and correctly retrieving $x_{i,j}$ with good probability.
\end{proof}

\subsection{\texorpdfstring{$\ell_\infty$ Ultrametrics algorithms}{l-inf Ultrametrics algorithms}}

\newcommand{\infopttree}{O}
\newcommand{\infmdopttree}{\bar{O}}
\newcommand{\maxoftwo}[2]{\max \{ #1, #2 \}}
\newcommand{\optcost}{c}

To solve $\ell_\infty$ Best-Fit Ultrametrics we will apply a reduction to the $\ell_\infty$ Min-Decrement Ultrametrics problem. In this variant, it is only allowed to decrement the entries in the input matrix. We will show that an optimal solution to this variant is 2-approximation to the best fit.
\begin{lemma}\label{lemma:md_approximation}
An optimal solution to the $\ell_\infty$ Min-Decrement Ultrametrics problem is at most 2 approximation to $\ell_\infty$ Best-Fit Ultrametrics.
\end{lemma}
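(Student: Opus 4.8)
The plan is to exhibit an explicit feasible decrement‑only ultrametric whose $\ell_\infty$ error is at most $2\,OPT$, where $OPT := \min_{U \text{ ultrametric}} \|D - U\|_\infty$ is the optimum of $\ell_\infty$ Best‑Fit Ultrametrics. Since every decrement‑only ultrametric is in particular an ultrametric, this will sandwich the optimal Min‑Decrement cost between $OPT$ and $2\,OPT$, giving exactly the claimed $2$‑approximation.

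First I would fix an optimal best‑fit ultrametric $O$, so that $|D(uv) - O(uv)| \le OPT$ for every pair $uv$, and define the candidate $\bar{O}(uv) := \max\{O(uv) - OPT,\, 0\}$. The structural point is that $\bar{O}$ is again an ultrametric: it is nonnegative and symmetric by construction, and it satisfies the strong triangle inequality because $\bar{O} = \phi \circ O$ for the monotone nondecreasing map $\phi(x) = \max\{x - OPT, 0\}$, and a monotone map commutes with $\max$, so $O(uv) \le \max\{O(uw), O(vw)\}$ is preserved entrywise. (If one insists on strictly positive leaf‑to‑leaf distances rather than a pseudo‑ultrametric, the finitely many zero entries can be replaced by an infinitesimal positive value without affecting anything below.)

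Next I would verify feasibility and the error bound. Feasibility for Min‑Decrement requires $\bar{O}(uv) \le D(uv)$: this is immediate when $\bar{O}(uv) = 0$ since $D$ is positive, and otherwise $\bar{O}(uv) = O(uv) - OPT \le (D(uv) + OPT) - OPT = D(uv)$. For the error, in all cases $\bar{O}(uv) \ge O(uv) - OPT \ge (D(uv) - OPT) - OPT = D(uv) - 2\,OPT$, so $0 \le D(uv) - \bar{O}(uv) \le 2\,OPT$ for every pair, i.e.\ $\|D - \bar{O}\|_\infty \le 2\,OPT$. Hence the optimal Min‑Decrement cost is at most $\|D - \bar{O}\|_\infty \le 2\,OPT$; and since any Min‑Decrement ultrametric is itself a valid ultrametric, the optimal Min‑Decrement solution is a $2$‑approximation to $\ell_\infty$ Best‑Fit Ultrametrics.

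I do not anticipate a serious obstacle: the only mildly delicate step is checking that the shift‑and‑clip operation $\phi$ genuinely outputs an ultrametric (and dealing with the degenerate zero distances), together with being careful to use the two‑sided bound $|D - O| \le OPT$ in both directions — once as $O \le D + OPT$ for feasibility, and once as $O \ge D - OPT$ for the error estimate.
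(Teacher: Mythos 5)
Your proposal is correct and follows essentially the same route as the paper: both construct the clipped shift $\bar{O}(uv) = \max\{0, O(uv) - OPT\}$ from an optimal best-fit ultrametric $O$, verify it remains an ultrametric and satisfies $\bar{O} \le D$, and bound $\|D-\bar{O}\|_\infty \le 2\,OPT$. Your monotone-map phrasing of the strong-triangle-inequality check is just a cleaner packaging of the paper's explicit $\max$ computation.
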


\begin{proof}
Let $\infopttree$ denote an optimal solution to $\ell_\infty$ Best-Fit Ultrametrics given the input matrix $\inputmatrix$, where the optimal fitting cost $\infnorm{\infopttree - \inputmatrix}$ is denoted by $\optcost$. 
Consequently, we have $\inputmatrix \geq \infopttree - \optcost$, now set $\infmdopttree ({i,j}) = \max \{ 0, \infopttree ({i,j}) - \optcost \}$, it follows that  $\infmdopttree \leq \inputmatrix$.

Note that $\infmdopttree$ is also an ultrametric, for every $i,j,k$:

\begin{align*}
\infmdopttree ({i,j}) &= \maxoftwo{0}{\infopttree ({i,j}) -\optcost} \leq \maxoftwo{0}{\maxoftwo{\infopttree ({i,k})}{\infopttree ({k,j})} - \optcost} \\
&= \maxoftwo{\maxoftwo{0}{\infopttree ({i,k}) - \optcost}}{\maxoftwo{0}{\infopttree ({k,j}) - \optcost}} = \maxoftwo{\infmdopttree ({i,k})}{\infmdopttree ({k,j})}
\end{align*}

Additionally, $\infmdopttree$ is a 2-approximation of $\infopttree$: 
\begin{align*}
\infnorm{\inputmatrix-\infmdopttree} &= \infnorm{\inputmatrix-\maxoftwo{0}{\infopttree-\optcost}} = \infnorm{\inputmatrix+\min \{0, -\infopttree+\optcost \} } \\
&= \infnorm{ \min \{ \inputmatrix, \inputmatrix-\infopttree + \optcost \} } 
=\max_{i,j} \{ \min \{ \inputmatrix (i,j),\inputmatrix (i,j)-O (i,j) +c \} \} \\
&\leq \max_{i,j} \inputmatrix (i,j) - O (i,j)+c
\leq 2\optcost
\end{align*}
\end{proof}

Next, we will show that the  $\ell_\infty$ Min-Decrement Ultrametrics can be derived from the Minimum Spanning Tree (MST). Similar ideas were used in Theorem~3.3 in~\cite{agarwala}.

Given an input matrix $\inputmatrix$, let $T$ be an MST of $\inputmatrix$.
$T$ naturally yields an ultrametric by defining the distance between any two vertices $i$ and $j$ as the weight of the heaviest edge on the unique path connecting them, denoted henceforth by $T ({i,j})$.
This construction inherently satisfies the ultrametric property, as the tree structure ensures that there is exactly one path between any pair of vertices, thereby maintaining the ultrametric property.

Moreover, if the edge $(i,j)$ of weight $\inputmatrix (i,j)$ is in $T$, then clearly $T ({i,j})=\inputmatrix (i,j)$. If not, the edge forms a cycle with the edges of $T$. Given that $T$ is an MST, it follows that $T ({i,j}) \leq \inputmatrix (i,j)$. Therefore, $T$ is indeed a minimum decrement ultrametric of $\inputmatrix$.

Furthermore, every minimum decrement ultrametric has values smaller or equal to the values of $T$. 
To see this let $\infmdopttree$ be an optimal minimum decrement ultrametric. For any edge $(i,j)$, if $(i,j)$ belongs to $T$ then $T ({i,j})=\inputmatrix (i,j)$ and since $\infmdopttree \leq \inputmatrix$ it follows that $\infmdopttree ({i,j}) \leq T ({i,j})$. Else, let $P$ be the path from $i$ to $j$ in $T$. Due to the ultrametric property, $\infmdopttree ({i,j}) \leq \max_{(k,l)\in P} \infmdopttree ({k,l}) \leq 
\max_{(k,l)\in P} \inputmatrix ({k,l}) = \max_{(k,l)\in P} T ({k,l}) = T ({i,j})$.

Therefore, the minimum spanning tree provides an optimal solution to the minimum decrement problem. 
As noted in~\cite{feigenbaum2005graph}, the minimum spanning tree can be constructed in a single pass with $O(\log n)$ time per edge under the semi-streaming model. We conclude this result in the following lemma:

\begin{lemma}\label{lemma:optimal_md}
An optimal solution to the $\ell_\infty$ Min-Decrement Ultrametrics fitting problem can be constructed in a single pass over the stream with $O(\log n)$ time per edge.
\end{lemma}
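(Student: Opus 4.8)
The plan is to formalize the three observations already sketched in the paragraphs preceding the statement, and then invoke a known semi-streaming result for the Minimum Spanning Tree. Concretely, let $T$ be any MST of the complete weighted graph on $V$ with edge weights $\inputmatrix(uv)$, and for $u,v\in V$ write $T(uv)$ for the weight of the heaviest edge on the unique $u$–$v$ path in $T$. First I would check that $(V,T)$ is an ultrametric: since $T$ is a tree, the $u$–$v$ path lies in the union of the $u$–$w$ and $w$–$v$ paths, so its heaviest edge is no heavier than the maximum of the two, giving $T(uv)\le\max\{T(uw),T(wv)\}$ for all $u,v,w$.

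Next I would establish that $T$ is an \emph{optimal} min-decrement ultrametric. Feasibility ($T\le\inputmatrix$ entrywise) is immediate for tree edges, where equality holds, and for a non-tree pair $uv$ the cycle/cut property of the MST gives $\inputmatrix(e)\le\inputmatrix(uv)$ for every edge $e$ on the tree path $P$ from $u$ to $v$, hence $T(uv)=\max_{e\in P}\inputmatrix(e)\le\inputmatrix(uv)$. For optimality I would show that $T$ is the pointwise-largest feasible decrement ultrametric: if $\infmdopttree\le\inputmatrix$ is any ultrametric, then $\infmdopttree(uv)\le\inputmatrix(uv)=T(uv)$ on tree edges, and on a non-tree pair $uv$ the strong triangle inequality applied along $P$ yields $\infmdopttree(uv)\le\max_{kl\in P}\infmdopttree(kl)\le\max_{e\in P}\inputmatrix(e)=T(uv)$. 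Hence $\inputmatrix-T\le\inputmatrix-\infmdopttree$ entrywise, so $\|\inputmatrix-T\|_\infty\le\|\inputmatrix-\infmdopttree\|_\infty$, i.e. $T$ attains the minimum decrement (with optimal value $\max_{uv\notin T}(\inputmatrix(uv)-T(uv))$).

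Finally, I would produce $T$ in the stream by invoking the classical semi-streaming MST algorithm of Feigenbaum et al.~\cite{feigenbaum2005graph}, which maintains an MST of the edges seen so far in a single pass using $\widetilde O(n)$ space — the tree has $n-1$ edges — and $O(\log n)$ time per edge; the stored tree is exactly our output representation, from which any $T(uv)$ is recoverable as a bottleneck-path query. I do not anticipate a real obstacle here: the only points that need care are breaking ties among equal edge weights consistently so as to fix one MST, and noting that it suffices to keep the running MST rather than all $\Theta(n^2)$ edges, which is precisely the correctness guarantee of the streaming MST routine. Assembling these pieces, together with \Cref{lemma:md_approximation}, gives the claimed single-pass, $O(\log n)$-per-edge construction.
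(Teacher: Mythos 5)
Your proposal is correct and takes essentially the same route as the paper: the MST's bottleneck (heaviest-edge-on-path) distances form a feasible decrement ultrametric that pointwise dominates every feasible one, hence is $\ell_\infty$-optimal, and it is built in one pass with $O(\log n)$ time per edge via the semi-streaming MST algorithm of Feigenbaum et al. The closing appeal to \Cref{lemma:md_approximation} is unnecessary for this particular statement (that lemma concerns the 2-approximation to Best-Fit), but it does not affect correctness.
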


As proved in Proposition~\ref{theorem:l_infty_lower_bound},
this construction achieves the best possible approximation within a single pass over the stream.
The next theorem is now an immediate consequence of Lemma~\ref{lemma:md_approximation} and Lemma~\ref{lemma:optimal_md}.

\inftysinglepass*

We proceed to demonstrate that a second pass over the stream, while necessary, is also sufficient to achieve the optimal solution to $\ell_\infty$ Best-Fit Ultrametrics.

The implementation goes by first applying Theorem~\ref{theorem:theorem_single_pass_md} to produce an ultrametric $T$. Then, in a second pass over the stream, simply compute the error of $T$ on the input $\inputmatrix$, denoted by $\bar{\optcost}$, and return $T'=T+\frac{\bar{\optcost}}{2}$.
The error of $T'$ is $\frac{\bar{\optcost}}{2}$, as $0 \leq \inputmatrix-T \leq \bar{\optcost}$ it follows that $-\frac{\bar{\optcost}}{2} \leq \inputmatrix -T -\frac{\bar{\optcost}}{2} \leq \frac{\bar{\optcost}}{2}$.

According to Lemma~\ref{lemma:md_approximation}, $\bar{\optcost}$ is at most twice the optimal cost. That is, the ultrametric $T'=T+\frac{\bar{\optcost}}{2}$ achieves the optimal cost precisely.
It also follows that the MST obtained from Theorem~\ref{theorem:theorem_single_pass_md} provides precisely a 2-approximation of the optimal ultrametric and at the same time has the topology of the optimal solution.
This now fully concludes the $\ell_\infty$ Best-Fit Ultrametrics problem in the streaming settings.

\twopass*

\section{\texorpdfstring{$\ell_0$ and $\ell_\infty$ Tree Metrics}{l-0 and l-inf Tree Metrics}} \label{section:trees}
\newcommand{\apivot}{a}
\newcommand{\centroid}{C^\apivot}
\newcommand{\zeronorm}[1]{\lVert #1 \rVert_0}

The problem of $\ell_p$ Best-Fit Tree-Metrics is typically addressed through reduction to an $\ell_p$ Best-Fit Ultrametrics instance, introducing a constant multiplicative approximation factor. This reduction, first introduced for the $\ell_\infty$ norm, generalizes to every $\ell_p$ with $p \geq 1$~\cite{agarwala}.
More recently, Kipouridis showed how to adapt this reduction to the $\ell_0$ case as well~\cite{kipouridis2023fitting}.

In this section we show how this methodology can be extended to the semi-streaming model. We will show that with just an additional pass over the input stream it is possible to construct the best fit tree metric. 
In what follows we will focus on $\ell_0$ and $\ell_\infty$, aligning with the algorithms proposed in this paper. However, this method can be generalized for any $\ell_p$ norm with $p \geq 1$.

The reduction strategy involves selecting a pivot element $\apivot$, for which let $\centroid$ denote the centroid metric defined by $\centroid (i,j) = 2\max_{k \in [n]}\inputmatrix (\apivot,k) - (\inputmatrix (\apivot,i) + \inputmatrix (\apivot,j))$. Using the best fit ultrametric algorithm we then obtain an ultrametric $U^\apivot$ of $\inputmatrix + \centroid$ and an $\apivot$-restricted tree of $\inputmatrix$ by setting $T^\apivot = U^\apivot - \centroid$.
Where $A$ is denoted an $\apivot$-restricted metric of $B$ if, $A (\apivot,k)=B (\apivot,k)$ for every $k \in [n]$ (in this context $A,B$ are symmetric matrices). We will see that $T^\apivot$ is a constant approximation to the best fit tree metric.

Note that if all the values $\inputmatrix(\apivot) \coloneqq \left(\inputmatrix (\apivot,k) \right)_{k \in [n]}$ are stored in memory, it is possible to adjust the input distance matrix $\inputmatrix$ as the stream is processed and ultimately compute $T^\apivot$ in a single pass. Consequently, the first pass is utilized for storing $\inputmatrix (\apivot)$ for some predefined $\apivot$, and the second pass computes the $\apivot$-restricted tree metric $T^\apivot$ as outlined above.

Using this idea we will show how to solve the problem of tree metric fitting for both $\ell_0$ and $\ell_\infty$.

\subsection{\texorpdfstring{$\ell_\infty$ Best-Fit Tree Metrics}{l-infty Best-Fit Tree Metrics}} 
In the case of $\ell_\infty$, any arbitrary selection of a pivot $\apivot$ will provide with a constant approximation factor. 
Let $\tilde{U}^\apivot$ denote the 2-approximation ultrametric of $\inputmatrix+\centroid$ obtained by the algorithm outlined in Theorem~\ref{theorem:theorem_single_pass_md}. Recall that this is a minimum decrement ultrametric.

We show the following lemma, similar ideas were also utilized in~\cite{agarwala}. 

\begin{lemma}\label{lemma:linf_tree_metric}
For every element $\apivot$, $T^\apivot = \tilde{U}^\apivot - \centroid$ is a 2-approximation $\apivot$-restricted tree metric.
\end{lemma}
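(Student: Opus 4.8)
The plan is to show two things: that $T^\apivot = \tilde U^\apivot - \centroid$ is a genuine tree metric that is $\apivot$-restricted to $\inputmatrix$, and that its $\ell_\infty$ error to $\inputmatrix$ is at most twice the optimal tree-metric error. First I would recall the classical fact (from Agarwala et al.~\cite{agarwala}) that for a centroid metric $\centroid$ built from pivot $\apivot$, a symmetric matrix $M$ is a tree metric that is $\apivot$-restricted to $\inputmatrix$ if and only if $M + \centroid$ is an ultrametric that is $\apivot$-restricted to $\inputmatrix + \centroid$; I would invoke this to convert statements about $\tilde U^\apivot$ into statements about $T^\apivot$. Since $\tilde U^\apivot$ is an ultrametric, $T^\apivot$ is automatically a tree metric; and since $\tilde U^\apivot$ is obtained as a minimum-decrement ultrametric of $\inputmatrix + \centroid$, I would check that it preserves the pivot row exactly — the MST construction keeps every edge incident to $\apivot$ that lies on the tree, and because the centroid construction forces the $\apivot$-row entries of $\inputmatrix+\centroid$ to behave like a star centered near $\apivot$ (each $\apivot$-to-$k$ distance is essentially maximal along its path), the decrement step leaves $(\inputmatrix+\centroid)(\apivot,k)$ unchanged, hence $T^\apivot(\apivot,k) = \inputmatrix(\apivot,k)$.

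**The approximation bound.** For the error estimate I would let $T^*$ be an optimal $\ell_\infty$ best-fit tree metric for $\inputmatrix$ with error $\optcost^* = \infnorm{T^* - \inputmatrix}$, and without loss of generality (using the $\apivot$-restriction freedom of the reduction, exactly as in~\cite{agarwala}) take $T^*$ to be $\apivot$-restricted to $\inputmatrix$. Then $U^* := T^* + \centroid$ is an ultrametric with $\infnorm{U^* - (\inputmatrix + \centroid)} = \infnorm{T^* - \inputmatrix} = \optcost^*$, so the optimal best-fit ultrametric cost for the shifted instance $\inputmatrix + \centroid$ is at most $\optcost^*$. By Lemma~\ref{lemma:md_approximation} applied to the instance $\inputmatrix + \centroid$, the minimum-decrement ultrametric $\tilde U^\apivot$ satisfies $\infnorm{\tilde U^\apivot - (\inputmatrix + \centroid)} \le 2\optcost^*$. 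Finally, subtracting $\centroid$ from both arguments is an isometry in $\ell_\infty$, so
\[
\infnorm{T^\apivot - \inputmatrix} = \infnorm{(\tilde U^\apivot - \centroid) - \inputmatrix} = \infnorm{\tilde U^\apivot - (\inputmatrix + \centroid)} \le 2\optcost^*,
\]
which is the claimed 2-approximation.

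**Main obstacle.** The routine parts are the $\ell_\infty$-isometry under shifting by $\centroid$ and the direct application of Lemma~\ref{lemma:md_approximation}. The step that requires genuine care is the equivalence between $\apivot$-restricted tree metrics of $\inputmatrix$ and $\apivot$-restricted ultrametrics of $\inputmatrix + \centroid$, together with the claim that we may assume the optimal tree metric is $\apivot$-restricted without increasing its cost; this is where I would lean on the structural results of~\cite{agarwala}, and where I would spend the most effort making sure the reduction is stated at the right level of generality (in particular that the centroid metric is built from $\inputmatrix$'s own $\apivot$-row, so that adding it does not distort the error). I would also double-check that $\tilde U^\apivot$ really does keep the $\apivot$-row of $\inputmatrix + \centroid$ intact, since this is exactly what makes $T^\apivot$ an $\apivot$-restricted metric of $\inputmatrix$ and hence a legitimate competitor; if the MST-based decrement happened to shrink some $\apivot$-row entry, the reduction would break, so establishing this invariant is the crux.
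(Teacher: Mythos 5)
There is a genuine gap in the approximation part of your argument, and it sits exactly at the step you label ``without loss of generality.'' The lemma's benchmark is the \emph{optimal $\apivot$-restricted} tree metric (this is what ``2-approximation $\apivot$-restricted tree metric'' means here), and the paper's proof compares $T^\apivot$ directly against an optimal $\apivot$-restricted tree $T^{OPT}$: since $T^{OPT}+\centroid$ is an ultrametric, $\infnorm{T^{OPT}-\inputmatrix}=\infnorm{(T^{OPT}+\centroid)-(\inputmatrix+\centroid)}$ is at least the optimal ultrametric fitting cost of $\inputmatrix+\centroid$, which by Lemma~\ref{lemma:md_approximation} is at least $\tfrac12\infnorm{\tilde U^\apivot-(\inputmatrix+\centroid)}=\tfrac12\infnorm{T^\apivot-\inputmatrix}$. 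You instead take $T^*$ to be the \emph{globally} optimal tree metric and assert that it may be assumed $\apivot$-restricted at no cost, citing the ``restriction freedom'' of \cite{agarwala}. No such freedom exists: forcing a tree metric to agree with $\inputmatrix$ on the $\apivot$-row can increase the $\ell_\infty$ error, and bounding that increase (by a factor $3$) is exactly the content of Lemma~3.4 of \cite{agarwala}, which the paper invokes \emph{after} this lemma to turn the $2$ into the final $6$. If your WLOG were valid, the same chain would yield a $2$-approximation for $\ell_\infty$ Best-Fit Tree-Metrics, which is not what this reduction gives. The fix is simply to define $T^*$ as an optimal $\apivot$-restricted tree metric; then your chain of inequalities (ultrametricity of $T^*+\centroid$, the $\ell_\infty$ isometry of subtracting $\centroid$, and Lemma~\ref{lemma:md_approximation}) is precisely the paper's proof.

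The remainder of your proposal matches the paper. For the $\apivot$-restriction of $T^\apivot$, your ``essentially maximal along its path'' intuition can be made precise exactly as the paper does: every entry of the $\apivot$-row of $\inputmatrix+\centroid$ equals $2m_a$ with $m_a=\max_k \inputmatrix(\apivot,k)$, the min-decrement value $\tilde U^\apivot(\apivot,i)$ is the maximum edge weight on the MST path from $\apivot$ to $i$, which is at most $2m_a$ (decrement) and at least $2m_a$ (the path's first edge is incident to $\apivot$ and has weight $2m_a$), hence equals $2m_a$, so $T^\apivot(\apivot,i)=\inputmatrix(\apivot,i)$. That part is sound once spelled out; the benchmark issue above is the real error to correct.
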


\begin{proof}    
Let $m_a = \max_{k \in [n]}\inputmatrix (\apivot,k)$.
To see that $T^\apivot$ is $\apivot$-restricted we will show that for every $i$, $\tilde{U}^\apivot (a,i) = 2m_a$, it is then easy to verify that for every $i \in [n]$, $T^\apivot(a,i)=(\tilde{U}^\apivot - \centroid)(a,i) = \inputmatrix(a,i)$.
First note that, $(\inputmatrix + \centroid)(a,i) = 2m_a$. Then, since $\tilde{U}^\apivot$ is an MST of $\inputmatrix + \centroid$, we have $\tilde{U}^\apivot (a,i) = 2m_a$.

To show that $T^\apivot$ is a tree metric we will use the following claim as in~\cite{agarwala}.

\begin{claim}\label{claim:in_agrawal}
For every $\apivot \in [n]$, $T$ is a tree metric if and only if $T+\centroid$ is an ultrametric.
\end{claim}

From Claim~\ref{claim:in_agrawal} it immediately follows that $T^\apivot$ is a tree metric.
It is left to show that $T^\apivot$ is a 2-approximation to any optimal $\apivot$-restricted tree metric of $\inputmatrix$, denoted $T^{OPT}$.

\begin{align*}
\infnorm{T^{OPT}-\inputmatrix} &= \infnorm{(T^{OPT}+\centroid)-(\inputmatrix+\centroid)} \\
&\geq \infnorm{U^{OPT}-(\inputmatrix+\centroid)} \quad \text{(by Claim~\ref{claim:in_agrawal}, for some optimal ultrametric $U^{OPT}$)} \\
&\geq \frac{1}{2}\infnorm{\tilde{U}^\apivot-(\inputmatrix+\centroid)} = \frac{1}{2} \infnorm{T^\apivot - M}
\end{align*}

Overall, $\infnorm{T^\apivot - M} \leq 2\infnorm{T^{OPT}-\inputmatrix}$.
\end{proof}

Using Lemma~3.4 in~\cite{agarwala}, an optimal $\apivot$-restricted tree metric is 3-approximation of the optimal tree metric. Thus, as a consequence of Lemma~\ref{lemma:linf_tree_metric}, for any selection of pivot $\apivot$, the output is a 6-approximation tree metric to the optimal tree metric. We summarize this in the following theorem:

\inftyreduction* 

\subsection{\texorpdfstring{$\ell_0$ Best-Fit Tree Metrics}{l-0 Best-Fit Tree Metrics}} 

While in the $\ell_\infty$ case every selection of a pivot would result in a 6 approximation, this does not hold for $\ell_0$; yet, Kipouridis proved the existence of $\apivot \in [n]$ which achieves a 3-approximation. Kipouridis then executed $n$ reductions, that included best ultrametric fit, to obtain the desired approximation tree metric. 

\begin{lemma}[Theorem~3 in~\cite{kipouridis2023fitting}]\label{lemma:best_tree_0}
A factor $\rho \geq 1$ approximation for $\ell_0$ Fitting Ultrametrics implies a factor $6\rho$
approximation for $\ell_0$ Fitting Tree Metrics.
\end{lemma}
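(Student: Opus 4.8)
The plan is to establish the reduction through the centroid construction of Agarwala~\cite{agarwala}, applied pivot by pivot, accounting separately for the two sources of loss: the factor $\rho$ inherited verbatim from the ultrametric subroutine, and a factor $6$ coming from (a) converting an \emph{approximate} ultrametric fit of the lifted instance back into a legitimate tree metric, and (b) the gap between the best tree metric whose distances to a fixed pivot are frozen and the globally optimal tree metric. For a pivot $\apivot\in[n]$ write $m_\apivot=\max_{k}D(\apivot,k)$ and $\centroid(i,j)=2m_\apivot-D(\apivot,i)-D(\apivot,j)$; by the triangle inequality $\centroid\ge 0$, and $\centroid$ is itself a (centroid) tree metric. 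Two facts drive everything: first, $T\mapsto T+\centroid$ is an entrywise shift, hence $\zeronorm{T-D}=\zeronorm{(T+\centroid)-(D+\centroid)}$, so the $\ell_0$ objective is invariant under lifting; second, the characterization of~\cite{agarwala} that $T$ is a tree metric if and only if $T+\centroid$ is an ultrametric.

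First I would reduce $\apivot$-restricted tree fitting to plain ultrametric fitting on $D^\apivot:=D+\centroid$. In one direction, lifting the optimal $\apivot$-restricted tree metric produces an ultrametric of $D^\apivot$ of equal $\ell_0$ cost. For the converse I would take any ultrametric $U$ fitting $D^\apivot$ and normalize it: by the triangle inequality every entry of $D^\apivot$ is at most $2m_\apivot$, and every $\apivot$-entry equals exactly $2m_\apivot$, so first cap all distances of $U$ at $2m_\apivot$ (this only contracts the tree above height $m_\apivot$ and can only delete errors), then reattach the leaf $\apivot$ directly under the root, which makes all its distances equal $2m_\apivot$ — error-free on $\apivot$-pairs — while leaving every other entry untouched. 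The delicate point, and the step I expect to be the main obstacle, is that $U-\centroid$ need not be a genuine metric (it may have negative entries); handling this costs at most a factor $2$, and this is the accounting that, together with the factor-$3$ pivot gap below, produces the $6$ in $6\rho$. The clean way to organize it is to restrict attention to ultrametrics that dominate $\centroid$ entrywise — for these the un-lift is automatically a tree metric by the characterization above, and this class still contains a cost-optimal representative, since the lift of any nonnegative $\apivot$-restricted tree metric dominates $\centroid$ — and to argue that the subroutine's output can be repaired into such a dominating ultrametric while at most doubling its error (each violated pair is already an error against $D^\apivot$, so the repair touches only already-charged pairs and a bounded neighborhood of them). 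Combining, for each $\apivot$ we obtain an $\apivot$-restricted tree metric $T^\apivot$ with $\zeronorm{T^\apivot-D}\le 2\rho\cdot\mathrm{OPT}^{\apivot\text{-restr}}$, where $\mathrm{OPT}^{\apivot\text{-restr}}$ is the optimal $\apivot$-restricted tree metric cost; the algorithm runs this for all $n$ pivots and outputs the cheapest $T^\apivot$.

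It remains to show some pivot is good globally, i.e. to reprove for the $\ell_0$ objective the analogue of Agarwala's Lemma~3.4: there exists $\apivot$ with $\mathrm{OPT}^{\apivot\text{-restr}}\le 3\cdot\mathrm{OPT}^{\text{tree}}$. The approach is to start from an optimal tree metric $T^\star$, and for a carefully chosen leaf $\apivot$ — one incident to a centroid edge of $T^\star$'s topology, so that the frozen distances $T^\star(\apivot,\cdot)$ are ``representative'' of the tree — build the $\apivot$-restricted tree that keeps $T^\star(\apivot,\cdot)$ and routes every other pair through the induced structure; a pair is charged only when its $T^\star$-realizing path is disturbed, and a counting argument over the admissible choices of $\apivot$ shows the aggregate charge is at most $3\zeronorm{T^\star-D}$. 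Since the minimum over pivots is at most the value at this good pivot, the output has cost at most $2\rho\cdot 3\cdot\mathrm{OPT}^{\text{tree}}=6\rho\cdot\mathrm{OPT}^{\text{tree}}$. The two places where genuine care is needed beyond routine bookkeeping are exactly this combinatorial ``good pivot'' argument and the validity/repair step in the un-lift; both are the technical content of~\cite{kipouridis2023fitting}, and for the present paper it suffices to invoke that theorem as stated.
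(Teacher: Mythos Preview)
The paper does not prove this lemma; it is quoted verbatim from \cite{kipouridis2023fitting} and used as a black box. You acknowledge exactly this in your last sentence, so in the only sense that matters here your proposal and the paper agree.

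As for your sketch of the cited argument, the $2\times 3$ decomposition is the right picture, but your factor-$3$ step is not the one that is actually used. You propose a structural choice of pivot (``a leaf incident to a centroid edge of $T^\star$''). The argument in \cite{kipouridis2023fitting}, reproduced in this paper inside the proof of Lemma~\ref{lemma:l0_pivot}, is a straightforward averaging: starting from an optimal tree $T$, for each $i$ slide every leaf $j$ toward or away from $i$ until $T^{/i}(i,j)=D(i,j)$, which alters at most $n\cdot\zeronorm{D(i,\cdot)-T(i,\cdot)}$ pairs; summing over $i$ gives $\sum_i \zeronorm{T^{/i}-D}\le n\cdot\mathrm{OPT}+2n\cdot\mathrm{OPT}=3n\cdot\mathrm{OPT}$, so the best pivot is at most $3\,\mathrm{OPT}$. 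No special topological choice is needed. Two minor slips: $\centroid\ge 0$ is immediate from $D(\apivot,\cdot)\le m_\apivot$ and does not use the triangle inequality; and your repair step (``touches only already-charged pairs and a bounded neighborhood of them'') is where the real work of the factor~$2$ lives, and your one-line justification does not actually bound that neighborhood --- deferring to \cite{kipouridis2023fitting} is the honest move there.
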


Since we cannot store every $\inputmatrix(\apivot)$ in memory we will have to suggest a different scheme.
We will show in the following lemma that for a randomly selected $\apivot \in [n]$, obtaining an optimal  $\apivot$-restricted tree is a constant approximation to the optimal best fit tree.

\begin{lemma}\label{lemma:l0_pivot}
If $\apivot$ is randomly selected then with probability $\geq \frac{3}{4}$ the resulting tree metric is at most $12$ approximation to the $\ell_0$ Best-Fit Tree-Metrics.
\end{lemma}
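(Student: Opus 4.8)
The plan is to reduce the statement to the averaging argument underlying Kipouridis's result (Lemma~\ref{lemma:best_tree_0}) and then invoke Markov's inequality. Write $\mathrm{OPT} = \zeronorm{T^{OPT} - \inputmatrix}$ for an optimal tree metric $T^{OPT}$ of $\inputmatrix$, and for each pivot $\apivot \in [n]$ write $\mathrm{OPT}_{\apivot} = \zeronorm{T^{OPT,\apivot} - \inputmatrix}$ for an optimal $\apivot$-restricted tree metric of $\inputmatrix$. The key point I would extract from the proof that there exists a pivot achieving a $3$-approximation is that this existence is obtained by averaging: one associates to every pivot $\apivot$ the $\apivot$-restriction $T^{OPT}_{\apivot}$ of the single optimal tree $T^{OPT}$ and bounds $\sum_{\apivot \in [n]} \zeronorm{T^{OPT}_{\apivot} - \inputmatrix} \le 3n\cdot\mathrm{OPT}$. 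Since $\mathrm{OPT}_{\apivot} \le \zeronorm{T^{OPT}_{\apivot} - \inputmatrix}$ for every $\apivot$, dividing by $n$ shows that a uniformly random $\apivot \in [n]$ satisfies $\mathbb{E}_{\apivot}[\mathrm{OPT}_{\apivot}] \le 3\,\mathrm{OPT}$.

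Next, because $\mathrm{OPT}_{\apivot} \ge 0$ for every $\apivot$, Markov's inequality with constant $4$ gives $\Pr_{\apivot}\bigl[\mathrm{OPT}_{\apivot} > 4\,\mathbb{E}_{\apivot}[\mathrm{OPT}_{\apivot}]\bigr] \le \tfrac14$; since $4\,\mathbb{E}_{\apivot}[\mathrm{OPT}_{\apivot}] \le 12\,\mathrm{OPT}$, this yields $\Pr_{\apivot}[\mathrm{OPT}_{\apivot} > 12\,\mathrm{OPT}] \le \tfrac14$. Hence with probability at least $\tfrac34$ over the random choice of $\apivot$, the optimal $\apivot$-restricted tree metric of $\inputmatrix$ has cost at most $12\,\mathrm{OPT}$, i.e.\ it is a $12$-approximation to $\ell_0$ Best-Fit Tree-Metrics. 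By Claim~\ref{claim:in_agrawal}, this optimal $\apivot$-restricted tree is exactly what the pivot reduction produces: take an exact ultrametric fit of $\inputmatrix + \centroid$ among the ultrametrics whose distances from $\apivot$ equal $2\max_{k \in [n]}\inputmatrix(\apivot,k)$, and subtract $\centroid$. Replacing the exact ultrametric fit by the $O(1)$-approximation of Theorem~\ref{theorem:lzerofit} loses only a further constant factor, and that semi-streaming implementation is carried out in Theorem~\ref{theorem:best_tree_zero}.

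The one step I expect to be the main obstacle is the first: verifying that Kipouridis's existence proof genuinely bounds the \emph{average}, over all $n$ pivots, of $\zeronorm{T^{OPT}_{\apivot} - \inputmatrix}$ by $3n\,\mathrm{OPT}$, rather than merely producing one good pivot. If the averaged form is not stated explicitly there, I would re-derive it by adapting Kipouridis's charging scheme: for the fixed optimal tree $T^{OPT}$, bound the total number (summed over the $n$ pivots) of pairs that the $\apivot$-restriction newly violates, charging each such pair either to a pair already violated by $T^{OPT}$ or to a bounded number of pivots. Everything beyond this reduces to the elementary Markov estimate above and the already-established $\apivot$-restricted-tree/ultrametric correspondence.
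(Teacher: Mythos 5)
Your proposal is correct and follows essentially the same route as the paper: average the cost of the $\apivot$-restriction of a fixed optimal tree over all $n$ pivots to get an expected cost of at most $3\,\mathrm{OPT}$, then apply Markov's inequality with factor $4$ to obtain a $12$-approximation with probability at least $\tfrac34$. The only difference is that you defer the bound $\sum_{\apivot}\zeronorm{T^{OPT}_{\apivot}-\inputmatrix}\le 3n\cdot\mathrm{OPT}$ to Kipouridis's charging scheme, whereas the paper derives it inline (move each $j$ to distance $\inputmatrix(\apivot,j)$; each moved $j$ introduces at most $n$ new errors and is moved only if the pair $(\apivot,j)$ was already in error), which is exactly the re-derivation you sketch as a fallback.
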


\begin{proof}
Let $T$ denote an optimal best fitting tree metric to $\inputmatrix$ under $\ell_0$.
We have that, $OPT=\zeronorm{\inputmatrix-T} = \frac{1}{2} \sum_{i\in [n]}\zeronorm{\inputmatrix(i)-T(i)}$.
Fix $i \in [n]$, we will transform T to an $i$-restricted tree, and denote this as $T^{/i}$, note that this is not necessarily an optimal $i$-restricted tree on $\inputmatrix$. The transformation works by moving every $j$ either toward or away from $i$ until each $j$ is at distance $\inputmatrix (i,j)$. This transformation is also described in Lemma~3.4 in~\cite{agarwala} and Theorem~3 in~\cite{kipouridis2023fitting}.

We next show that $T^{/i}$ is a good approximation of $T$. 
Observe that by moving $j$, only distances from/to $j$  may be modified, thus at most $n$ errors may be introduced. Moreover, $j$ is moved only if $\inputmatrix (i,j) \neq T (i,j)$, so the number of errors introduced by this transformation is at most $n \zeronorm{\inputmatrix(i)-T(i)}$.

Summing over all $i$ we get that:
\begin{align*}
\sum_{i\in [n]}\zeronorm{T^{/i}-\inputmatrix} &\leq 
\sum_{i\in [n]} OPT + n \zeronorm{\inputmatrix(i)-T(i)} \\
&\leq n\cdot OPT +n \sum_{i \in [n]}\zeronorm{\inputmatrix(i)-T(i)} \leq n\cdot OPT +n \cdot 2OPT = 3n \cdot OPT
\end{align*}

So by randomly selecting $\apivot \in [n]$ (with uniform distribution), the expected cost of an optimal $\apivot$-restricted tree metric is at most $3OPT$. Then, through Markov's inequality, the probability that $\zeronorm{T^{/\apivot} - \inputmatrix} > 12OPT$ is less than $\frac{1}{4}$.
\end{proof}

In order to further improve the algorithm and obtain a high probability success rate, we sample not one but $t=\ln n$ pivots, $P=\{\apivot_1,...,\apivot_t \}$, and obtain $t$ trees $\{T^{\apivot_1},...,T^{\apivot_t}\}$, each is at most 12 approximation to the optimal fit with probability at least $\frac{3}{4}$ following Lemma~\ref{lemma:l0_pivot}. 
Let $\overline{OPT}$ be the minimum value such that the graph $G_{\overline{OPT}}$ over the vertex set $P$, with edges $(\apivot_i,\apivot_j)$ where $\zeronorm{T^{\apivot_i}-T^{\apivot_j}} \leq 24\overline{OPT}$, has a clique of size at least $\frac{1}{2}n$. Finally, we arbitrarily select some pivot $\apivot_i$ in that clique and return $T^{\apivot_i}$.

Note that by definition, if $a<b$ then $G_a \subseteq G_b$. Hence, we can find $\overline{OPT}$ by carrying a binary search in the range of possible values of $OPT$, i.e. $\overline{OPT} \in [0,n^2]$. Since $t$ is logarithmic in $n$, this entire process can be carried in semi-streaming settings.

\begin{claim}
With high probability, any pivot selected from the outlined clique in $G_{\overline{OPT}}$ corresponds to at most 36 approximation of $T$.
\end{claim}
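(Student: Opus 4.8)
The plan is to combine a concentration argument over the randomly chosen pivots with the fact that $\zeronorm{\cdot}$ obeys the triangle inequality: it is the Hamming metric on matrix entries, since whenever $A(uv)\neq C(uv)$ at least one of $A(uv)\neq B(uv)$ or $B(uv)\neq C(uv)$ holds, so $\zeronorm{A-C}\le \zeronorm{A-B}+\zeronorm{B-C}$. Write $OPT=\zeronorm{T-\inputmatrix}$ and call a sampled pivot $\apivot_i\in P$ \emph{good} if $\zeronorm{T^{\apivot_i}-\inputmatrix}\le 12\,OPT$. By \Cref{lemma:l0_pivot} each of the $t=\ln n$ pivots is good independently with probability at least $\tfrac34$, so a Chernoff bound shows that, with high probability, strictly more than $t/2$ of them are good; I would condition on this event for the remainder of the argument.

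First I would show $\overline{OPT}\le OPT$. For any two good pivots $\apivot_i,\apivot_j$, the triangle inequality gives $\zeronorm{T^{\apivot_i}-T^{\apivot_j}}\le \zeronorm{T^{\apivot_i}-\inputmatrix}+\zeronorm{\inputmatrix-T^{\apivot_j}}\le 24\,OPT$. Hence the set of good pivots induces a clique of size more than $t/2$ in $G_{OPT}$, and by the minimality in the definition of $\overline{OPT}$ we get $\overline{OPT}\le OPT$.

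Next comes the pigeonhole step. The clique the algorithm extracts from $G_{\overline{OPT}}$ has at least $t/2$ vertices, while the good pivots number strictly more than $t/2$; since both are subsets of $P$ and $|P|=t$, these two sets intersect, so the extracted clique contains some good pivot $\apivot_g$. Finally, if $\apivot_i$ is the (arbitrary) pivot returned from that clique, then the edge $(\apivot_i,\apivot_g)$ is present in $G_{\overline{OPT}}$, i.e.\ $\zeronorm{T^{\apivot_i}-T^{\apivot_g}}\le 24\,\overline{OPT}\le 24\,OPT$; combining this with $\zeronorm{T^{\apivot_g}-\inputmatrix}\le 12\,OPT$ through the triangle inequality yields $\zeronorm{T^{\apivot_i}-\inputmatrix}\le 24\,OPT+12\,OPT=36\,OPT$, which is the claimed bound.

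The step I expect to require the most care is the probabilistic one: the number of sampled pivots must be a sufficiently large constant multiple of $\ln n$ for the Chernoff bound to push the failure probability below the target inverse polynomial, and one should also note that $OPT$ and every $\zeronorm{T^{\apivot_i}-T^{\apivot_j}}$ is an integer in $[0,n^2]$, so the definition of $\overline{OPT}$ (and the binary search used to locate it) is well posed and executable within semi-streaming space. Everything else is an immediate consequence of the metric property of $\zeronorm{\cdot}$ and a counting argument on subsets of $P$.
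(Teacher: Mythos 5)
Your proof is correct and follows essentially the same route as the paper: a Chernoff bound giving at least $t/2$ good pivots, the observation that good pivots form a clique in $G_{OPT}$ so that $\overline{OPT}\le OPT$, and a final triangle inequality yielding the factor $36$. The only (minor, and arguably cleaner) deviations are that you apply the triangle inequality through $\inputmatrix$ rather than through $T$ (so you bound $\zeronorm{T^{\apivot_i}-\inputmatrix}$, which is interchangeable with the paper's bound on $\zeronorm{T^{\apivot_i}-T}$ up to an additive $OPT$ and is what the downstream theorem actually needs), and you replace the paper's independence-style statement that the selected clique contains a good pivot with an explicit pigeonhole count over $P$, which avoids any conditioning subtleties.
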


\begin{proof}
Consider $G_{OPT}$ and let $P'$ denote the set of all vertices in $P$ that correspond to a 12 approximation of $T$. We first show that $|P'|\geq \frac{t}{2}$ w.h.p.

For every $\apivot_i , \apivot_j \in P'$, 
$\zeronorm{T^{\apivot_i}-T^{\apivot_j}} \leq \zeronorm{T^{\apivot_i}-T} + \zeronorm{T^{\apivot_j}-T} \leq 24OPT$. It follows that there is an edge between every two vertices in $P'$. 

Write $X=\sum_{i=1}^t X_i$ where $X_i$ is the indicator that $\apivot_i \in P'$, and let $\mu = \mathbb{E} [X]$. Following Lemma~\ref{lemma:l0_pivot}, $\mu \geq \frac{3}{4}t$. 

Using Chernoff bound it holds that:
\begin{align*}
\mathbb{P} [X \leq (1-\delta) \mu] \leq \exp (\frac{-\delta^2 \mu}{2}) = (\frac{1}{t})^{3\delta^2/8}
\end{align*}
Let $\delta = \frac{1}{3}$ and obtain that w.h.p there is a clique in $G_{OPT}$ of size at least $\frac{1}{2}t$.

Recall that, if $a<b$ then $G_a \subseteq G_b$, that is, w.h.p, $\overline{OPT} \leq OPT$.
The probability that one of the vertices of the clique is at most 12 approximation of $T$ is at least $1-(1-\frac{3}{4})^{t/2}=1-(\frac{1}{2})^t$. By selecting any pivot $\apivot_i$ in that clique with a corresponding tree $T^{\apivot_i}$ we have that w.h.p we report an $\apivot_i$-restricted tree that is at most 36 approximation to $T$. Since w.h.p there exist $T^{\apivot_j}$ that is a 12 approximation to $T$, also,
$\zeronorm{T^{\apivot_i}-T^{\apivot_j}} \leq 24\overline{OPT} \leq 24OPT$. 
It Follows that:
\begin{align*}
\zeronorm{T^{\apivot_i}-T} \leq \zeronorm{T^{\apivot_i}-T^{\apivot_j}} + \zeronorm{T^{\apivot_j}-T} \leq 24OPT + 12OPT \leq 36OPT
\end{align*}
\end{proof}

Together with Lemma~\ref{lemma:best_tree_0} we obtain the theorem:

\zeroreduction*

\bibliographystyle{alphaurl}
\bibliography{correlation}

\end{document}